  \providecommand\BibTeX{{%
    \normalfont B\kern-0.5em{\scshape i\kern-0.25em b}\kern-0.8em\TeX}}}
\newcommand{\algmargin}{\the\ALG@thistlm}
\newlength{\whilewidth}
\algnewcommand{\parState}[1]{\State%
	\parbox[t]{\dimexpr\linewidth-\algmargin}{\strut #1\strut}}
\newtheorem{theorem}{Theorem}
\newtheorem{corollary}[theorem]{Corollary}
\newtheorem{lemma}[theorem]{Lemma}
\newtheorem{definition}[theorem]{Definition}
\newtheorem{observation}[theorem]{Observation}
\newcommand{\OPR}{\texttt{OPR}\xspace}
\newcommand{\DTPR}{\texttt{DTPR}\xspace}
\newcommand{\OPRmin}{\texttt{OPR-min}\xspace}
\newcommand{\OPRmax}{\texttt{OPR-max}\xspace}
\newcommand{\DTPRmin}{\texttt{DTPR-min}\xspace}
\newcommand{\DTPRmax}{\texttt{DTPR-max}\xspace}
\begin{document}

\title{The Online Pause and Resume Problem: Optimal Algorithms and An Application to Carbon-Aware Load Shifting}

\author{
  Adam Lechowicz\thanks{University of Massachusetts Amherst.  Email: \texttt{alechowicz@cs.umass.edu}}
  \and
  Nicolas Christianson\thanks{California Institute of Technology.  Email: \texttt{nchristianson@caltech.edu}}
  \and
  Jinhang Zuo\thanks{UMass Amherst \& Caltech.  Email: \texttt{jhzuo@cs.umass.edu}}
  \and
  Noman Bashir\thanks{University of Massachusetts Amherst.  Email: \texttt{nbashir@cs.umass.edu}}
  \and
  Mohammad Hajiesmaili\thanks{University of Massachusetts Amherst.  Email: \texttt{hajiesmaili@cs.umass.edu}}
  \and
  Adam Wierman\thanks{California Institute of Technology.  Email: \texttt{adamw@caltech.edu}}
  \and
  Prashant Shenoy\thanks{University of Massachusetts Amherst.  Email: \texttt{shenoy@cs.umass.edu}}
}
\maketitle

\begin{abstract}
We introduce and study the online pause and resume problem.  In this problem, a player attempts to find the $k$ lowest (alternatively, highest) prices in a sequence of fixed length $T$, which is revealed sequentially.  At each time step, the player is presented with a price and decides whether to accept or reject it.  The player incurs a \textit{switching cost} whenever their decision changes in consecutive time steps, i.e., whenever they pause or resume purchasing.  This online problem is motivated by the goal of carbon-aware load shifting, where a workload may be paused during periods of high carbon intensity and resumed during periods of low carbon intensity and incurs a cost when saving or restoring its state.  It has strong connections to existing problems studied in the literature on online optimization, though it introduces unique technical challenges that prevent the direct application of existing algorithms.  Extending prior work on threshold-based algorithms, we introduce \textit{double-threshold} algorithms for both the minimization and maximization variants of this problem.  We further show that the competitive ratios achieved by these algorithms are the best achievable by any deterministic online algorithm.  Finally, we empirically validate our proposed algorithm through case studies on the application of carbon-aware load shifting using real carbon trace data and existing baseline algorithms.

\end{abstract}

\newpage

\section{Introduction}
\label{sec:intro}
This paper introduces and studies the \textit{online pause and resume problem} (\OPR), considering both minimization (\OPRmin) and maximization (\OPRmax) variants. In \OPRmin, a player is presented with time-varying prices in a sequential manner and decides whether or not to purchase one unit of an item at the current price. The player must purchase $k$ units of the item over a time horizon of $T$ and they incur a \textit{switching cost} whenever their decision changes in consecutive time steps, i.e., whenever they pause or resume purchasing. The goal of the player is to minimize their total cost, which consists of the aggregate price of purchasing $k$ units and the aggregate switching cost incurred over $T$ slots. In \OPRmax, the setting is exactly the same, but the goal of the player is to maximize their total profit, and any switching cost they incur is subtracted. In both cases, the price values are revealed to the player one by one in an online manner, and the player has to make a decision without knowing the future values. 

Our primary motivation for introducing \OPR is the emerging importance of carbon-aware computing and, more specifically, carbon-aware temporal workload shifting, which has seen significant attention in recent years \cite{radovanovic2022carbon,acun2022holistic,bashir2021enabling,Wiesner:21}. 
In carbon-aware temporal workload shifting, an interruptible and deferrable workload may be paused during periods of high carbon intensity and resumed during periods of low carbon intensity. The workload needs to be running for $k$ units of time to complete and must be finished before its deadline $T$. However, pausing and resuming the workload typically comes with overheads such as storing the state in memory and checkpointing; hence frequent pausing and resuming is undesirable.  The objective of temporal workload shifting is to minimize the total carbon footprint of running the workload, which includes both the original compute demand and the overhead due to pausing and resuming (a.k.a., the switching cost). The carbon intensity of the electric grid is time-varying due to the intermittency of renewable energy, and thus finding the best pause and resume strategy is challenging due to the unknown future fluctuations of carbon intensity.
Note that \OPR can also capture other potentially interesting applications where pricing changes over time and switching frequently is undesirable. One example is renting spot virtual machines from a cloud service provider in the setting where pricing is set according to supply-demand dynamics~\cite{zhang2017optimal,ambati2020hedge,shastri2016transient}.

On the theory front, the \OPR problem has strong connections to various existing problems in the literature on online optimization. We extensively review the prior literature in Section~\ref{sec:rel} and focus on the most relevant theoretical problems below. 
The \OPR problem is strongly connected to the $k$-search problem~\cite{Lorenz:08,Lee:22}, which belongs to the broader class of online conversion problems~\cite{Sun:21}, a.k.a, time series search and one-way trading~\cite{ElYaniv:01}. In the minimization variant of the $k$-search problem, an online decision-maker aims to buy $k$ units of an item for the least cost over a sequence of time-varying cost values. At each step, a cost value is observed, and the decision is whether or not to buy one unit at the current observed cost without knowing the future values.
In contrast to $k$-search, the \OPR problem introduces the additional component of managing the switching cost, which poses a significant additional challenge in algorithm design.

The existence of the switching cost in \OPR connects it to the well-studied problem of smoothed online convex optimization (\texttt{SOCO})~\cite{Lin:12}, also known as convex function chasing (\texttt{CFC})~\cite{FriedmanLinial:93}, and its generalizations including metrical task systems (\texttt{MTS})~\cite{Borodin:92}. In \texttt{SOCO}, a learner is faced with a sequence of cost functions $f_t$ that are revealed online, and must choose an action $x_t$ after observing $f_t$. Based on that decision, the learner incurs a hitting cost, $f_t(x_t)$ as well as a switching cost, $\|x_t - x_{t-1}\|$, which captures the cost associated with changing the decision between rounds. In contrast to \texttt{SOCO}, \OPR includes the long-term constraint of satisfying the demand of $k$ units over the horizon $T$, which poses a significant challenge not present in \texttt{SOCO}-like problems.

The coexistence of these differentiating factors, namely the \textit{switching cost} and the \textit{long-term deadline constraint}, make \OPR uniquely challenging, and means that prior algorithms and analyses for related problems such as $k$-search and \texttt{SOCO} cannot be directly adapted.

\smallskip
\textbf{Contributions.} 
We introduce online algorithms for the minimization and maximization variants of \OPR and show that our algorithms achieve the best possible competitive ratios. We also evaluate the empirical performance of the proposed algorithms on a case study of carbon-aware load shifting. The details of our contributions are outlined below.  

\paragraph{Algorithmic idea: Double-threshold}
To tackle \OPR, we focus our efforts on online threshold-based algorithms (\texttt{OTA}), the prominent design paradigm for classic problems such as $k$-search~\cite{Lorenz:08,Lee:22}, one-way trading~\cite{ElYaniv:01,Sun:21}, and online knapsack problems~\cite{Zhou:08,sun2022online,Yang2021Competitive}. In the $k$-min search problem, for example, a threshold-based algorithm specifies $k$ threshold values and chooses to trade the $i$-th item only if the current price is less than or equal to the value suggested by the $i$-th threshold value.

Direct application of prior \texttt{OTA} algorithms to \OPR results in undesirable behavior (such as frequently changing decisions) since their threshold function design is oblivious to the switching cost present in \OPR. To address this challenge, we seek an algorithm that can simultaneously achieve the following behaviors: (1) when the player is in ``trading mode,'' they should not impulsively switch away from trading in response to a price that is only slightly worse, since this will result in a switching penalty; and (2) the player should not switch to ``trading mode'' unless prices are sufficiently good to warrant the switching cost.
These two ideas motivate an algorithm design that uses two distinct threshold functions, each of which captures one of the above two cases. We present our algorithms \DTPRmin and \DTPRmax for \OPRmin and \OPRmax, respectively, in Section~\ref{sec:dtpr}, which build upon this high-level idea of a double-threshold.

\paragraph{Main results}
While \texttt{OTA} algorithms are intuitive and simple to describe, it is highly challenging to design threshold functions that lead the corresponding algorithms to be competitive against the offline optimum. The addition of switching cost in \OPR further exacerbates the technical challenge of designing optimal threshold functions. 
The key result which enables our double-threshold approach is a technical observation (see Observation~\ref{obs:thresholdDifference}), which shows that the difference between the functions guiding the algorithm's decisions should be a factor of~$\beta$, where $\beta$ represents the fixed switching cost incurred by changing the decision in \OPR.

Identifying this relationship between the two threshold functions significantly facilitates the competitive analysis of both \DTPRmin and \DTPRmax, enabling our derivation of a closed form of each threshold. Using this idea, we characterize the competitive ratios of \DTPRmin and \DTPRmax as a function of problem parameters, including an explicit dependence on the magnitude of the switching cost $\beta$ (see Theorems~\ref{thm:compPRmin} and~\ref{thm:compPRmax}). 
Furthermore, we derive lower bounds for the competitive ratio of any deterministic online algorithm, showing that our proposed algorithms are optimal for this problem (formal statements in Theorems~\ref{thm:lowerboundmin} and~\ref{thm:lowerboundmax}). The competitive ratios we derive for both \DTPRmin and \DTPRmax exactly recover the best prior competitive results for the $k$-search problem~\cite{Lorenz:08}, which corresponds to the case of $\beta=0$ in \OPR, i.e., no switching cost. Formal statements and a more detailed discussion of our main results are presented in Section~\ref{sec:summary}.

\paragraph{Case study.}
Finally, in Section~\ref{sec:eval}, we illustrate the performance of our proposed algorithm by conducting an experimental case study simulating the carbon-aware load shifting problem. We utilize real-world \textit{carbon traces} from Electricity Maps~\cite{electricity-map}, which contain carbon intensity values for grid-sourced electricity across the world.  Our experiments simulate different strategies for scheduling a deferrable and interruptible workload in the face of uncertain future carbon intensity values.  We show that our algorithm's performance significantly improves upon existing baseline methods and adapted forms of algorithms for related problems such as $k$-min search.  

\smallskip

\section{Problem Formulation and Preliminaries} %
\label{sec:prob}

We begin by formally introducing the \OPR problem and providing background on the online threshold-based algorithm design paradigm, which is used in the design of our proposed algorithms. Table~\ref{tab:notations} summarizes the core notations for \OPR.  Recall that this formulation is motivated by the setting of carbon-aware temporal workload shifting, as described in the introduction.

\subsection{Problem Formulation}
There are two variants of the online pause and resume problem (\OPR).\footnote{We use \OPR whenever the context is applicable to both minimization (\OPRmin) and maximization (\OPRmax) variants of the problem, otherwise, we refer to the specific variant. The same policy applies to \DTPR, our proposed algorithm for \OPR.} In \OPRmin (\OPRmax) a player must buy (sell) $k \geq 1$ units of some asset (one unit at each time step) with the goal of minimizing (maximizing) their total cost (profit) within a time horizon of length $T$.  At each time step $1 \leq t \leq T$, the player is presented with a price $c_t$, and must immediately decide whether to accept this price ($x_t = 1$) or reject it ($x_t = 0$).  The player is required to complete this transaction for all $k$ units by some point in time $T$. Both $k$ and $T$ are known in advance. 
Thus, the requirement of $k$ transactions is a hard constraint, i.e., $\sum_{t=1}^T x_t = k$, and if at time $T - i$ the player still has $i$ units remaining to buy/sell, they must accept the prices in the subsequent $i$ slots to accomplish $k$ transactions. 

Additionally, in both variants of \OPR, the player incurs a \emph{fixed switching cost} $\beta > 0$ whenever they decide to change decisions between two adjacent time steps (i.e., when $\lVert x_{t-1} - x_t \rVert = 1$).  We assume that $x_0 = 0$ and $x_{T+1} = 0$, implying that any player must incur a minimum switching cost of $2\beta$, once for switching ``on'' and once for switching ``off''.  While the player incurs at least a switching cost of $2\beta$, note that the total switching cost incurred by the player is bounded by the size of the asset $k$ since the switching cost cannot be larger than $ k2\beta$. %

\noindent In summary, the offline version of \OPRmin can be summarized as follows:
\begin{align}
\min & \underbrace{ \left( \sum_{t=1}^T c_tx_t \right)}_{\text{Accepted prices}}  + \underbrace{ \left( \sum_{t=0}^{T+1}\beta||x_t - x_{t-1}|| \right)}_{\text{Switching cost}}, \quad\text{s.t., }  \underbrace{\sum_{t=1}^T x_t = k,}_{\text{Deadline constraint}} \ x_t \in \{0, 1\}, \ \forall t \in [1, T], \label{align:objMin}
\end{align}
while the offline version of \OPRmax is
\begin{align}
\max & \left( \sum_{t=1}^T c_tx_t \right) - \left( \sum_{t=0}^{T+1}\beta||x_t - x_{t-1}|| \right) , \quad\text{s.t., } \quad \sum_{t=1}^T x_t = k, \quad x_t \in \{0, 1\}, \ \forall t \in [1, T]. \label{align:objMax}
\end{align}

\begin{table}[t]
	\caption{A summary of key notations }
 \vspace{-3mm}
	\label{tab:notations}
	\begin{center}
		\begin{tabular}[P]{|c|l|}
            \hline
            \textbf{Notation} & \textbf{Description} \\
            \hline
            $k \in \mathbb{N}$ & Number of units which must be bought (or sold) \\
			\hline
            $T$ & Deadline constraint; the player must buy (or sell) $k$ units before time $T$ \\
            \hline
            $t \in [1, T]$ & Current time step \\
            \hline
            $x_t \in \{0, 1\}$ & Decision at time $t$. $x_t = 1$ if price $c_t$ is accepted, $x_t = 0$ if $c_t$ is not accepted \\
			\hline
            $\beta$ & Switching cost incurred when algorithm's decision $x_t \not = x_{t-1}$ \\
			\hline
            $U$ & Upper bound on any price that will be encountered  \\
            \hline
            $L$ & Lower bound on any price that will be encountered \\
            \hline
            $\theta = U/L$ & Price fluctuation ratio \\
			\hline
            \hline
			$c_t$ & (\textit{Online input}) Price revealed to the player at time $t$ \\
			\hline
			$c_{\min} \; \& \; c_{\max}$ & (\textit{Online input}) The actual minimum and maximum prices in a sequence  \\
			\hline
		\end{tabular}
	\end{center}
\end{table}

Of course, our focus is the online version of \OPR, where the player must make irrevocable decisions at each time step without the knowledge of future inputs. More specifically, in both variants of \OPR the sequence of prices $\{c_t\}_{t \in [1, T]}$ is revealed sequentially -- future prices are \textit{unknown} to an online algorithm, and each decision $x_t$ is irrevocable.

\paragraph{Competitive analysis}
Our goal is to design an online algorithm that maintains a small \textit{competitive ratio}~\cite{Borodin:92}, i.e., performs nearly as well as the offline optimal solution.   For an online algorithm $\texttt{ALG}$ and an offline optimal solution $\texttt{OPT}$, the competitive ratio for a minimization problem is defined as:
$
\textnormal{CR}(\texttt{ALG}) = \max_{\sigma \in \Omega} \texttt{ALG}(\sigma) / \texttt{OPT}(\sigma),
$
where $\sigma$ denotes a valid input sequence for the problem and $\Omega$ is the set of all feasible input instances. Further, $\texttt{OPT}(\sigma)$ is the optimal cost given this input, and $\texttt{ALG}(\sigma)$ is the cost of the solution obtained by running the online algorithm over this input.  
Conversely, for a problem with a maximization objective, the competitive ratio is defined as $\max_{\sigma \in \Omega} \texttt{OPT}(\sigma)/\texttt{ALG}(\sigma)$.  With these definitions,  the competitive ratio for both minimization and maximization problems is always greater than or equal to one, and the lower the better. %

\smallskip

\paragraph{Assumptions and additional notations.} We make no assumptions on the underlying distribution of the prices other than the assumption that the set of prices arriving online $\{ c_t \}_{t \in [1, T]}$ has bounded support, i.e., $c_t \in [L, U] \forall t \in [1,~T]$, where $L$ and $U$ are known to the player.  We also define $\theta = U/L$ as the \textit{price fluctuation}.  These are standard assumptions in the literature for many online problems, including one-way trading, online search, and online knapsack; and without them the competitive ratio of any algorithm is unbounded.  We use  $c_{\min}(\sigma) = \min_{t \in [1, T]} c_t$ and $c_{\max}(\sigma) = \max_{t \in [1, T]} c_t$ to denote the minimum and maximum encountered prices for any valid \texttt{OPR} sequence~$\sigma$.

\subsection{Background: Online Threshold-Based Algorithms (\texttt{OTA})} \label{sec:OTA}

Online threshold-based algorithms (\texttt{OTA}) are a family of algorithms for online optimization in which a carefully designed \textit{threshold function} is used to specify the decisions made at each time step. At a high level, the threshold function defines the ``minimum acceptable quality'' that an arriving input/price must satisfy in order to be accepted by the algorithm. The threshold is chosen specifically so that an agent greedily accepting prices meeting the threshold at each step will be ensured a competitive guarantee.
This algorithmic framework has seen success in the online search and one-way trading problems \cite{Lee:22,Sun:21,Lorenz:08,ElYaniv:01} as well as the related online knapsack problem \cite{Zhou:08,sun2022online,Yang2021Competitive}.  In these works, the derived threshold functions are optimal in the sense that the competitive ratios of the resulting threshold-based algorithms match information-theoretic lower bounds of the corresponding online problems. As discussed in the introduction, the framework does not apply directly to the \OPR setting, but we make use of ideas and techniques from this literature. We briefly detail the most relevant highlights from the prior results before discussing how these related problems generalize to \OPR in the next section. 

\smallskip
\paragraph{1-min/1-max search.}  In the online 1-min/1-max search problem, a player attempts to find the single lowest (respectively, highest) price in a sequence, which is revealed sequentially.  The player's objective is to either minimize their cost or maximize their profit.  When each price arrives, the player must decide immediately whether to accept the price, and the player is forced to accept exactly one price before the end of the sequence.
For this problem, El-Yaniv et al.~\cite{ElYaniv:01} presents a deterministic threshold-based algorithm. The algorithm assumes a finite price interval, i.e., the price is bounded by the interval $[L,U]$, where $L$ and $U$ are known. Then, it sets a constant threshold $\Phi = \sqrt{LU}$,  and the algorithm simply selects the first price that is less than or equal to $\Phi$ (for the maximization version, it accepts the first price greater than or equal to $\Phi$). This algorithm achieves a competitive ratio of $\sqrt{U/L} = \sqrt{\theta}$, which matches the lower bound; hence, it is optimal~\cite{ElYaniv:01}.

\smallskip
\paragraph{$k$-min/$k$-max search.} The online $k$-min/$k$-max search problem extends the 1-min/1-max search problem -- a player attempts to find the $k$ lowest (conversely, highest) prices in a sequence of prices revealed sequentially.  The player's objective is identical to the 1-min/1-max problem, and the player must accept at least $k$ prices by the end of the sequence.  Several works have developed a known optimal deterministic threshold-based algorithm for this problem, including \cite{Lorenz:08, ElYaniv:01}.  Leveraging the same assumption of a finite price interval $[L, U]$, the threshold function is a sequence of $k$ thresholds $\{\Phi_i\}_{i \in [1, k]}$, which is also called the \textit{reservation price policy}.  At each step, the algorithm accepts the first price, which is less than or equal to $\Phi_i$, where $i - 1$ is the number of prices that have been accepted thus far (for the maximization version, it accepts the first price which is $\geq \Phi_i$).
In the $k$-min setting, this algorithm is $\alpha$-competitive, where $\alpha$ is the unique solution of 

\begin{align}
     \smash{\frac{1 - 1/\theta}{1-1/\alpha} = \left( 1 + \frac{1}{\alpha k} \right)^k.} {\normalsize \label{eq:kminalpha}}
\end{align}

For the $k$-max variant, this algorithm is $\omega$-competitive, where $\omega$ is the unique solution of 

\begin{align}
     \smash{\frac{\theta - 1}{\omega - 1} = \left( 1 + \frac{\omega}{k} \right)^k.} {\normalsize \label{eq:kmaxomega}}
\end{align}

The sequence of thresholds $\{\Phi_i\}_{i \in [1, k]}$ for both variants of the problem are constructed by analyzing possible input cases, ``hedging'' against the risk that future (unknown) prices will jump to the worst possible value, i.e., $U$ for $k$-min search, $L$ for $k$-max search.  These potential cases can be enumerated for different values of $i$, where $0 \leq i \leq k$ denotes the number of prices accepted so far.  By simultaneously \textit{balancing} the competitive ratios for each of these cases (setting each ratio equal to the others), the optimal threshold values and the optimal competitive ratios are derived.  We refer to this technique as the \textit{balancing rule} and a rigorous proof of this approach, with corresponding lower bounds, can be found in \cite{Lorenz:08}. The lower bounds highlight that the $\alpha$ and $\omega$ which solve the expressions for the competitive ratios above are optimal for any deterministic $k$-min and $k$-max search algorithms, respectively. Further,  $\alpha$ and $\omega$ provide insight into a fundamental difference between the minimization and maximization settings of $k$-search.  As discussed in \cite{Lorenz:08}, for large $\theta$, the best algorithm for $k$-max search is roughly $O ( k \sqrt[k]{\theta} )$-competitive, while the best algorithm for $k$-min search is at best $O ( \sqrt{\theta} )$-competitive.  Similarly, for fixed $\theta$ and large $k$, the optimal competitive ratio for $k$-max search is roughly $O \left( \ln \theta \right)$, while the optimal competitive ratio for $k$-min search converges to $O ( \sqrt{\theta} )$.  

\section{Double Threshold Pause and Resume (DTPR) Algorithm}
\label{sec:dtpr}

A fundamental challenge in algorithm design for \OPR is how to characterize threshold functions that incorporate the presence of switching costs in their design. 
Our key algorithmic insight is to incorporate the switching cost into the threshold function by defining \textit{two distinct threshold functions}, where the function to be used for price admittance changes based on the current state (i.e., whether or not the previous price was accepted by the algorithm).

To provide intuition for the state-dependence of the threshold function, consider the setting of \OPRmin. At a high level, if the player has not accepted the previous price, they should wait to accept anything until prices are sufficiently low to justify incurring a cost to switch decisions.  On the other hand, if the player has accepted the previous price, they might be willing to accept a slightly higher price -- if they do not accept this price, they will incur a cost to switch decisions. While this high-level idea is intuitive, characterizing the form of threshold functions such that the resulting algorithms are competitive is challenging.

\begin{algorithm}[t]
\caption{Double Threshold Pause and Resume for \OPRmin (\DTPRmin)}\label{alg:min}
\begin{algorithmic}[1]
\Require threshold values $\{\ell_i\}_{i \in [1, k]}$ and $\{u_i\}_{i \in [1, k]}$ defined in Eq.~\eqref{eq:minthres}, deadline $T$
\Ensure online decisions $\{x_t\}_{t \in [1, T]}$
\State \textbf{initialize: } i = 1;
\While{price $c_t$ arrives \textbf{and} $i \leq k$}
\If{$(k - i) \geq (T - t)$} \Comment{\footnotesize\texttt{close to the deadline $T$, we must accept remaining prices}\normalsize}
    \State price $c_t$ is accepted, set $x_t = 1$ \label{line:min-force}
\ElsIf{$x_{t-1} = 0$} \Comment{\footnotesize\texttt{If previous price was not accepted}\normalsize}
    \If{$c_t \leq \ell_i$} \; price $c_t$ is accepted, set $x_t = 1$
    \Else \; price $c_t$ is rejected, set $x_t = 0$
    \EndIf
\ElsIf{$x_{t-1} = 1$} \Comment{\footnotesize\texttt{If previous price was accepted}\normalsize}
    \If{$c_t \leq u_i$}\; price $c_t$ is accepted, set $x_t = 1$
    \Else\; price $c_t$ is rejected, set $x_t = 0$
    \EndIf
\EndIf
\State \textnormal{update} $i = i + x_t$
\EndWhile
\end{algorithmic}
\end{algorithm}

\paragraph{The \DTPRmin algorithm}
Our proposed algorithm, Double Threshold Pause and Resume (\texttt{DTPR}) for \OPRmin is summarized in Algorithm~\ref{alg:min}. 
Prior to any prices arriving online, \DTPRmin computes two families of threshold values, $\{\ell_i\}_{i \in [1, k]}$ and $\{u_i\}_{i \in [1, k]}$, where $\ell_i \leq u_i \ \forall i \in [1, k]$, whose values are defined as follows.

\begin{definition}[\DTPRmin~Threshold Values] \label{def:minthres}
For each integer $i$ on the interval $[1, k]$, the following expressions give the corresponding threshold values of $u_i$ and $\ell_i$ for \DTPRmin.
\begin{align}
u_{i} = U \left[ 1 - \left(1 - \frac{1}{\alpha} \right) \left( 1 + \frac{1}{k\alpha} \right)^{i-1} \right] + 2\beta \left[ \left(\frac{1}{k \alpha} - \frac{1}{k} + 1 \right) \left( 1 + \frac{1}{k\alpha} \right)^{i-1} \right], \;\;\;\;\; \ell_i = u_i - 2\beta, \label{eq:minthres}
\end{align}
where $\alpha$ is the competitive ratio of \DTPRmin defined in Equation~\eqref{eq:alpha}.
\end{definition}

The role of these thresholds is to incorporate the switching cost into the algorithm's decisions, and to alter the acceptance criteria of \DTPRmin based on the current state.  For \OPRmin, the current state is \textit{whether the previous item was accepted}, i.e., whether $x_{t-1}$ is $0$ or $1$. 
As prices are sequentially revealed to the algorithm at each time $t$, the $i$th price accepted by \DTPRmin will be the first price which is at most $\ell_i$ if $x_{t-1} = 0$, or at most $u_i$ if $x_{t-1} = 1$.
Note that, as indicated in Line~\ref{line:min-force}, \DTPRmin may be forced to accept the last prices of the sequence, which can be ``worse'' than the current threshold values, to satisfy the deadline constraint of \texttt{OPR}.

\paragraph{The \DTPRmax algorithm}
Pseudocode is summarized in the appendix, in Algorithm \ref{alg:max}. The logical flow of \DTPRmax shares a similar structure to that of \DTPRmin, with a few important differences highlighted here. For \OPRmax, the $i$th price accepted by \DTPRmax will be the first price which is at least $u_i$ if $x_{t-1} = 0$, or at least $\ell_i$ if $x_{t-1} = 1$. Further, the threshold functions are defined as follows.

\begin{definition}[\DTPRmax~Threshold Values] \label{def:maxthres}
For each integer $i$ on the interval $[1, k]$, the following expressions give the corresponding threshold values of $\ell_i$ and $u_i$ for \DTPRmax.
\begin{align}
\ell_{i} = L \left[ 1 + \left(\omega - 1 \right) \left( 1 + \frac{\omega}{k} \right)^{i-1} \right] - 2\beta \left[ \left(\frac{\omega}{k} - \frac{1}{k} + 1 \right) \left( 1 + \frac{\omega}{k} \right)^{i-1} \right], \;\;\;\;\; u_i = \ell_i + 2\beta, \label{eq:maxthres}
\end{align}
where $\omega$ is the competitive ratio of \emph{\DTPRmax} defined in Equation~\eqref{eq:omega}.
\end{definition}

\begin{figure*}[t]
	\minipage{0.48\textwidth}
    \centering
	\includegraphics[width=\linewidth]{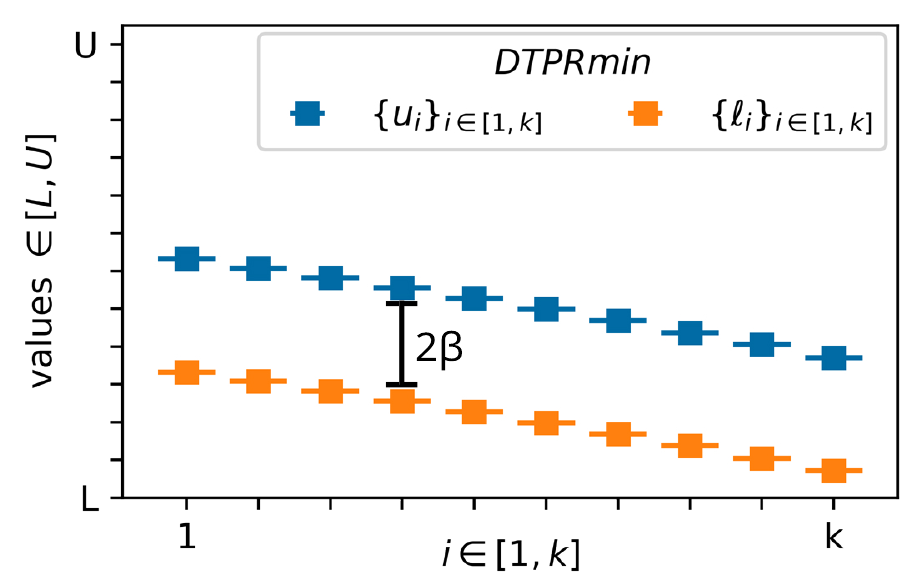}\vspace{-1em} 
    \caption{\DTPRmin thresholds $\ell_i$ and $u_i$ for $i \in~[1,k]$ plotted using example parameters ($k = 10$). } \label{fig:dtprmin}
	\endminipage\hfill
    \minipage{0.48\textwidth}
    \centering
	\includegraphics[width=\linewidth]{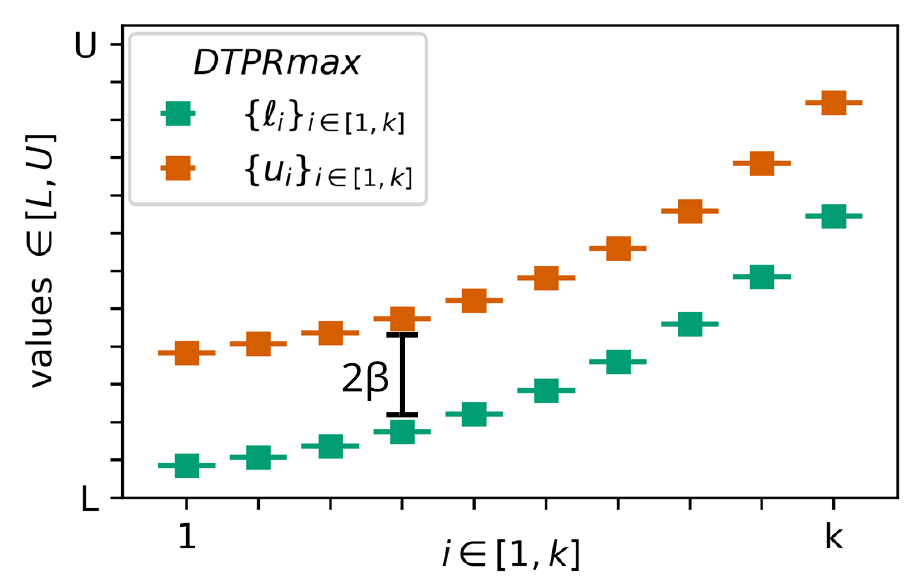}\vspace{-1em}
    \caption{\DTPRmax thresholds $u_i$ and $\ell_i$ for $i \in~[1,k]$ plotted using example parameters ($k = 10$). } \label{fig:dtprmax}
	\endminipage\hfill
\end{figure*}

In Figures \ref{fig:dtprmin} and \ref{fig:dtprmax}, we plot threshold values for \DTPRmin~and \DTPRmax, respectively, using example parameters of $U = 30, L = 5, k = 10,$ and $\beta = 3$.  We annotate the difference of $2\beta$ between $\ell_i$ and $u_i$; recall that each of these thresholds corresponds to a \textit{current state} for \DTPR, i.e. whether the previous item was accepted.  Note that the \DTPRmin~threshold values \textit{decrease} as $k$ gets larger, while the \DTPRmax~threshold values \textit{increase} as $k$ gets larger.  At a high-level, each $i$th threshold ``hedges'' against a scenario where none of the future prices meet the current threshold.  In this case, even if the algorithm is forced to accept the \textit{worst possible prices} at the end of the sequence, we want competitive guarantees against an offline \texttt{OPT}.  Such guarantees rely on the fact that in the worst-case, \texttt{OPT} cannot accept prices that are all significantly better than \DTPR's $i$th ``unseen'' threshold value because such prices did not exist in the sequence.  

\subsection*{Designing the Double Threshold Values}  A key component of the \DTPR algorithms for both variants are the thresholds in Equations~\eqref{eq:minthres} and~\eqref{eq:maxthres}.  The key idea is to design the thresholds by incorporating the switching cost into the balancing rules as a hedge against possible worst-case scenarios. 
To accomplish this, we enumerate three difficult cases that \texttt{DTPR} may encounter. (\texttt{CASE-1}): Consider an input sequence where \texttt{DTPR} does not accept any prices before it is forced to accept the last $k$ prices. Here, the enforced prices in the worst-case sequence will be $U$ for \OPRmin and $L$ for \OPRmax.  This sequence occurs only if no price in the sequence meets the first threshold for acceptance. On the other hand, in the case that \texttt{DTPR} does accept prices before the end of the sequence, we can further divide the possible sequences into two extreme cases for the \textit{switching cost} it incurs.  (\texttt{CASE-2}): In one extreme, the algorithm incurs only the minimum switching cost of $2\beta$, meaning that $k$ contiguous prices are accepted by \texttt{DTPR}.  (\texttt{CASE-3}): In the other extreme, \texttt{DTPR} incurs the maximum switching cost of $k2\beta$, meaning that $k$ non-contiguous prices are accepted. Intuitively, in order for $\texttt{DTPR}$ to be competitive in either of these extreme cases, the prices accepted in the latter case should be sufficiently ``good'' to absorb the extra switching cost of $(k-1)2 \beta$.  

Given the insight from these cases, we use can use the balancing rule (see Section \ref{sec:OTA}) to derive the two threshold families.  Let $\sigma$ be any arbitrary sequence for \OPR. Given these extreme input sequences, we now concretely show how to write the balancing rule equations. We consider the cases of \DTPRmin and \DTPRmax separately below.

\paragraph{Balancing equations for \DTPRmin}

To balance between possible inputs for \OPRmin, consider the following examples for three different values of $c_{\min}(\sigma) > \ell, \ell = \{\ell_1,\ell_2,\ell_3\}$.  If $c_{\min}(\sigma) > \ell_i$, we know that $\texttt{OPT}$ cannot do better than $k\ell_i + 2\beta$. Suppose that $\alpha$ is the target competitive ratio, and we balance between these and other potential cases:
\begin{align}
&\frac{\DTPRmin(\sigma)}{\texttt{OPT}(\sigma)} \leq \underbrace{\frac{kU+2\beta}{k\ell_1 + 2\beta}}_{c_{\min}(\sigma) > \ell_1} = \underbrace{\frac{\ell_1 + (k-1)U + 4\beta}{k \ell_2 + 2\beta} = \frac{u_1 + (k-1)U + 2\beta}{k \ell_2 + 2\beta}}_{c_{\min}(\sigma) > \ell_2} \dots \label{eq:balancemin} \\
&\dots = \underbrace{\frac{\ell_1 +\ell_2 + (k-2)U + 6\beta}{k \ell_3 + 2\beta} = \frac{\ell_1 + u_2 + (k-2)U + 4\beta}{k \ell_3 + 2\beta} = \frac{u_1 + u_2 + (k-2)U + 2\beta}{k \ell_3 + 2\beta}}_{c_{\min}(\sigma) > \ell_3} = \dots = \alpha. \nonumber
\end{align}

As an example, consider $c_{\min}(\sigma) > \ell_2$ and the corresponding cases enumerated above.  Suppose $\DTPRmin$ accepts one price before the end of the sequence $\sigma$, and the other prices accepted are all $U$.  In the first case, where the competitive ratio is $\frac{\ell_1 + (k-1)U + 4\beta}{k \ell_2 + 2\beta}$, we consider the scenario where $\DTPRmin$ switches twice: once to accept the price $\ell_1$, and once to accept $(k-1)$ prices at the end of the sequence, incurring switching cost of $4\beta$.  

In the second case, where the competitive ratio is $\frac{u_1 + (k-1)U + 2\beta}{k \ell_2 + 2\beta}$, we consider the hypothetical scenario where $\DTPRmin$ only switches once to accept some value $u_1$ followed by $(k-1)$ prices at the end of the sequence, incurring switching cost of $2\beta$.  By enumerating cases in this fashion for the other possible values of $c_{\min}(\sigma)$,  we derive a relationship between the lower thresholds $\ell_i$ and the upper thresholds $u_i$ in terms of the switching cost.

\smallskip

\paragraph{Balancing equations for \DTPRmax}
The same idea extends to balance between possible inputs for \OPRmax.  Consider the following examples for a few values of $c_{\max}(\sigma)$.  If $c_{\max}(\sigma) < u_i$, we know that $\texttt{OPT}$ cannot do better than $ku_i - 2\beta$.  Suppose that $\omega$ is the target competitive ratio, and we balance between these and other potential cases:
\begin{align}
&\frac{\texttt{OPT}(\sigma)}{\DTPRmax(\sigma)} \leq \underbrace{\frac{ku_1 - 2\beta}{kL-2\beta}}_{c_{\max}(\sigma) < u_1} = \underbrace{\frac{ku_2 - 2\beta}{u_1 + (k-1)L - 4\beta} = \frac{ku_2 - 2\beta}{\ell_1 + (k-1)L - 2\beta}}_{c_{\max}(\sigma) < u_2} \dots \label{eq:balancemax} \\
&\dots = \underbrace{\frac{ku_3 - 2\beta}{u_1 + u_2 + (k-2)L - 6\beta} = \frac{ku_3 - 2\beta}{u_1 + \ell_2 + (k-2)L - 4\beta} = \frac{ku_3 - 2\beta}{\ell_1 + \ell_2 + (k-2)L - 2\beta}}_{c_{\max}(\sigma) < u_3} = \dots = \omega. \nonumber
\end{align}

\smallskip

\paragraph{Solving for the threshold values}
Given the above balancing equations for both the minimization and maximization variants, the next step is to solve for the unknown values of $\ell_i$ and $u_i$. The following observation summarizes the key insight that enables this.  We show that one can express each $\ell_i$ in terms of $u_i$ and $\beta$, which facilitates the analysis required to solve for thresholds in each balancing equation (given by Equations~\eqref{eq:balancemin} and~\eqref{eq:balancemax}).

\begin{observation}\label{obs:thresholdDifference}
By letting $u_i = \ell_i + 2\beta \;\; \forall i \in [1, k]$, we obtain each possible worst-case permutation of $\ell_i$ thresholds, $u_i$ thresholds, and switching cost. Let $y \in [1, k-1]$ denote the number of switches incurred by \texttt{DTPR}.\\ 
For \DTPRmin, suppose that $c_{\min}(\sigma) > \ell_{j+1}$.  By the definition of \DTPRmin, we know that accepting any $u_i$ helps avoid a switching cost of $+ 2\beta$ in the worst case. Thus,
\begin{align*}
    \sum_{i=0}^j u_i + (k-j) U + 2\beta = \underbrace{\ell_i + \dots}_{y} + \underbrace{u_i + \dots}_{j-y} + (k-j)U + (y+1)2\beta = \sum_{i=0}^j \ell_i + (k-j)U + (j + 1) 2\beta.
\end{align*}
For \DTPRmax, suppose that $c_{\max}(\sigma) < u_{j+1}$.  By the definition of \DTPRmax, we know that accepting any $\ell_i$ helps avoid a switching cost of $-2\beta$ in the worst case. Thus,
\begin{align*}
    \sum_{i=0}^j \ell_i + (k-j) L - 2\beta = \underbrace{u_i + \dots}_{y} + \underbrace{\ell_i + \dots}_{j-y} + (k-j)L - (y+1)2\beta = \sum_{i=0}^j u_i + (k-j)L - (j + 1) 2\beta.
\end{align*}
\end{observation}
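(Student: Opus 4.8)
The plan is to establish the minimization identity and then obtain the maximization one by a mirror-image argument. I would fix the level $j$ and work inside the regime probed by the $c_{\min}(\sigma)>\ell_{j+1}$ row of Eq.~\eqref{eq:balancemin}: input sequences on which \DTPRmin~accepts $j$ prices before Line~\ref{line:min-force} forces it to accept the last $k-j$ prices, and on which those forced prices all equal $U$ (the worst case for \DTPRmin). Monotonicity of $\{\ell_i\}$ together with $c_{\min}(\sigma)>\ell_{j+1}$ is what makes this the right count of voluntary acceptances and gives the common denominator $k\ell_{j+1}+2\beta$ as the \OPT lower bound; the entire content of the observation is that, under $u_i=\ell_i+2\beta$, the \emph{numerator} is the same for every admissible switching pattern on such a sequence.

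The structural step is to write that numerator explicitly. I would partition the $j$ voluntarily accepted prices into maximal runs of consecutive accepted steps. A price that opens a run is accepted with $x_{t-1}=0$, so it passes the $\ell$-threshold and, in the worst case, equals its threshold value $\ell_i$; a price accepted as a continuation of a run has $x_{t-1}=1$, passes the looser $u$-threshold, and in the worst case equals $u_i$. If $S\subseteq\{1,\dots,j\}$ indexes the run-opening acceptances and $y:=|S|$, then \DTPRmin~incurs $y$ early blocks together with the single block formed by the $k-j$ forced prices, i.e.\ switching cost $(y+1)2\beta$, so its worst-case cost on this pattern is
\[
\sum_{i\in S}\ell_i+\sum_{i\in\{1,\dots,j\}\setminus S}u_i+(k-j)U+(y+1)2\beta,
\]
with the convention $\ell_0=u_0=0$ matching the $i=0$ summand in the statement.

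The collapse is then elementary: substituting $u_i=\ell_i+2\beta$ replaces $\sum_{i\notin S}u_i$ by $\sum_{i\notin S}\ell_i+(j-y)2\beta$, and the cost above becomes $\sum_{i=0}^{j}\ell_i+(k-j)U+(j+1)2\beta$, independent of $S$. Taking $S=\{1,\dots,j\}$ (every voluntary acceptance its own block, $y=j$) reproduces $\sum_{i=0}^{j}\ell_i+(k-j)U+(j+1)2\beta$; taking $S=\emptyset$ (all voluntary acceptances absorbed into the forced block, one block overall) reproduces $\sum_{i=0}^{j}u_i+(k-j)U+2\beta$; and a general $S$ with $|S|=y$ reproduces the middle term $\underbrace{\ell_i+\cdots}_{y}+\underbrace{u_i+\cdots}_{j-y}+(k-j)U+(y+1)2\beta$ — exactly the chain asserted for \DTPRmin. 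For \DTPRmax~the argument is identical after swapping $U$ for $L$, giving every switching term a minus sign, and noting that now run-openers pass the $u$-threshold (worst case $u_i$) while continuations pass the $\ell$-threshold (worst case $\ell_i$); since $\ell_i=u_i-2\beta$, a continuation's smaller accepted value exactly offsets the $2\beta$ switching penalty it avoids, so every pattern yields $\sum_{i=0}^{j}u_i+(k-j)L-(j+1)2\beta$, with the extremes producing the outer terms $\sum_{i=0}^{j}\ell_i+(k-j)L-2\beta$ and $\sum_{i=0}^{j}u_i+(k-j)L-(j+1)2\beta$.

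I expect the genuine obstacle to be the structural step rather than the algebra. One must argue carefully that these block patterns are \emph{exactly} the worst cases relevant to the $c_{\min}(\sigma)>\ell_{j+1}$ row — that no other adversarial sequence forces a larger \DTPRmin~cost against the same \OPT lower bound, that within each pattern the worst case is attained with openers at $\ell_i$ and continuations at $u_i$, and that the forced tail really coalesces into a single block. This requires tracking the coupling between the state $x_{t-1}$, the acceptance index $i$, and the threshold actually applied, and in particular checking that the regime $c_{\min}(\sigma)>\ell_{j+1}$ admits continuation-acceptances only in ways already accounted for in the enumeration. Once that bookkeeping is pinned down, the substitution $u_i=\ell_i+2\beta$ and the extraction of the endpoints are routine, as previewed in Eqs.~\eqref{eq:balancemin} and~\eqref{eq:balancemax}.
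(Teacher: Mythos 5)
Your proposal is correct and follows essentially the same route as the paper: you enumerate the worst-case acceptance patterns by which of the $j$ voluntarily accepted prices open a run (accepted at an $\ell_i$ threshold, costing an extra switch) versus continue one (accepted at a $u_i$ threshold), and observe that under $u_i = \ell_i + 2\beta$ the extra $2\beta$ in price of each continuation exactly offsets the avoided switching cost, so every pattern collapses to the same total. Your run-opener set $S$ is just a more explicit packaging of the paper's count $y$ of switches, and your handling of the hypothetical $y=0$ endpoint matches the paper's own treatment in Equation~\eqref{eq:balancemin}.
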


With the above observation, for \DTPRmin, one can substitute $u_i - 2\beta$ for each $\ell_i$.  By comparing adjacent terms in Equation~\eqref{eq:balancemin}, standard algebraic manipulations give a closed form for each $u_i$ in terms of $u_1$.  Setting $\frac{kU+2\beta}{k(u_1 - 2\beta) + 2\beta} = \alpha$, we obtain the explicit expression for $u_1$, yielding a closed formula for $\{u_i\}_{i \in [1, k]}$ and $\{\ell_i\}_{i \in [1, k]}$ in Equation~\eqref{eq:minthres}. Considering the balancing rule in Equation~\eqref{eq:balancemin} for the case where $c_{\min}(\sigma) \geq \ell_{k+1}$, it follows that $\ell_{k+1} = L$, and thus $u_{k+1} = L+2\beta$.  By substituting this value into Definition~\ref{def:minthres}, we obtain an explicit expression for $\alpha$ as shown in Equation~\eqref{eq:alpha}.

Conversely, for \DTPRmax, we substitute $\ell_i + 2\beta$ for each $u_i$.  By comparing adjacent terms in Equation~\eqref{eq:balancemax}, standard methods give a closed form for each $\ell_i$ in terms of $\ell_1$.  Setting $\frac{k(\ell_1 + 2\beta) - 2\beta}{kL-2\beta} = \omega$, we obtain the explicit expression for $\ell_1$, yielding the closed formula for $\{\ell_i\}_{i \in [1, k]}$ and $\{u_i\}_{i \in [1, k]}$ in Equation~\eqref{eq:maxthres}.
Considering the balancing rule in Equation \eqref{eq:balancemax} for the case where $c_{\max}(\sigma) \leq u_{k+1}$, it follows that $u_{k+1} = U$, and thus $\ell_{k+1} = U-2\beta$.  By substituting this value into Definition \ref{def:maxthres}, we obtain an explicit expression for $\omega$ as shown in Equation~\eqref{eq:omega}.

\smallskip

\section{Main Results}
\label{sec:summary}

We now present competitive results of \texttt{DTPR} for both variants of \OPR and discuss the significance of the results in relation to other algorithms for related problems. Our results for the competitive ratios of \DTPRmin and \DTPRmax are summarized in Theorems~\ref{thm:compPRmin} and~\ref{thm:compPRmax}. We also state the lower bound results for any deterministic online algorithms for \OPRmin and \OPRmax in Theorems~\ref{thm:lowerboundmin} and~\ref{thm:lowerboundmax}. Proofs of the results for \DTPRmin and \DTPRmax are deferred to Section~\ref{sec:analysis} and Appendix~\ref{app:compPRmax}, respectively. Formal proofs of lower bound theorems are given in Appendix~\ref{app:lowerbound}, and a sketch is shown in Section~\ref{sec:prooflowerbound}.
Note that in the competitive results, $W ( x )$ denotes the Lambert $W$ function, i.e., the inverse of $f(x) = xe^x$.  It is well-known that $W(x)$ behaves like $\ln (x)$ \cite{HoorfarHassani:08, Stewart:09}.
We start by presenting our competitive bounds on \DTPRmin and \DTPRmax.

\begin{theorem}\label{thm:compPRmin}
\DTPRmin is an $\alpha$-competitive deterministic algorithm for \texttt{OPR-min}, where $\alpha$ is the unique positive solution of \vspace{-1em}
\begin{align}
\frac{U - L - 2\beta}{U(1 - 1/\alpha) - \left(2\beta - \frac{2\beta}{k} + \frac{2\beta}{k\alpha} \right)} = \left( 1 + \frac{1}{k\alpha} \right)^k. \label{eq:alpha}
\end{align}
\end{theorem}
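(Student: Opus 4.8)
The plan is to fix an arbitrary instance $\sigma$, classify it by how many prices \DTPRmin\ accepts via its threshold tests (as opposed to the forced acceptances in Line~\ref{line:min-force}), and show that in every class $\DTPRmin(\sigma)/\OPT(\sigma)$ is bounded by one of the balanced expressions appearing in the chain~\eqref{eq:balancemin}, each of which equals $\alpha$ by the construction in Definition~\ref{def:minthres}. Before that I would record the structural facts about~\eqref{eq:minthres} that the argument leans on: $\ell_i = u_i - 2\beta \le u_i$, $L \le \ell_i$ and $u_i \le U$, and — crucially — that $u_i$ (hence $\ell_i$) is non-increasing in $i$. The latter follows by writing $u_i = U - \bigl(1 + \tfrac{1}{k\alpha}\bigr)^{i-1}\bigl[U(1-\tfrac1\alpha) - (2\beta - \tfrac{2\beta}{k} + \tfrac{2\beta}{k\alpha})\bigr]$ and checking that the bracket is positive, which is exactly the condition $U - L - 2\beta > 0$ making~\eqref{eq:alpha} meaningful; I would treat this as an implicit standing assumption.

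Given $\sigma$, let $j \in \{0,1,\dots,k\}$ be the number of prices \DTPRmin\ accepts before the deadline branch first fires, with $j = k$ if it never fires. The first main step is a lower bound on \OPT. Since \DTPRmin\ did not voluntarily accept a $(j+1)$-st price, every price appearing before the forced suffix is strictly larger than $\ell_{j+1}$: a price at most $\ell_{j+1} \le u_{j+1}$ would have been accepted regardless of the state $x_{t-1}\in\{0,1\}$. Combined with the fact that \OPT\ buys $k$ units, each at price at least $c_{\min}(\sigma)$, and pays switching cost at least $2\beta$, this gives $\OPT(\sigma) \ge k\ell_{j+1} + 2\beta$ whenever the forced suffix does not itself contain a price below $\ell_{j+1}$; the remaining sub-case, in which the global minimum sits inside the forced suffix, is handled separately and is never extremal, since such a cheap forced price also lowers \DTPRmin's cost. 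The second main step is an upper bound on \DTPRmin: the $i$-th voluntarily accepted price costs at most $\ell_i$ if it starts a new contiguous run and at most $u_i$ otherwise, the forced suffix costs at most $(k-j)U$, and each maximal run of acceptances contributes $2\beta$ of switching. Here Observation~\ref{obs:thresholdDifference} — precisely the identity $u_i = \ell_i + 2\beta$ — collapses the dependence on the run pattern to the single value $\sum_{i=1}^{j}\ell_i + (k-j)U + (j+1)2\beta$. Dividing the two bounds gives $\DTPRmin(\sigma)/\OPT(\sigma) \le \bigl(\sum_{i=1}^{j}\ell_i + (k-j)U + (j+1)2\beta\bigr)/(k\ell_{j+1}+2\beta)$, which is exactly the term attached to the case $c_{\min}(\sigma) > \ell_{j+1}$ in~\eqref{eq:balancemin}.

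To finish the competitive bound I would argue that these balanced terms are extremal over all instances: (i) the edge case $j=0$, where all prices exceed $\ell_1$ and the ratio is at most $(kU+2\beta)/(k\ell_1+2\beta)$; (ii) a monotonicity/mediant argument that pushing the forced suffix up toward $U$, pushing \OPT's $k$ chosen prices down toward $\ell_{j+1}$, and consolidating them into one run only increases the ratio, which also covers instances where \OPT's cheapest $k$ purchases straddle the forced suffix so that the clean bound $\OPT(\sigma)\ge k\ell_{j+1}+2\beta$ does not apply verbatim; and (iii) folding in $j=k$, where $c_{\min}(\sigma) \ge \ell_{k+1} = L$ and the same chain applies with $j=k$. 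Since each bracketed term in~\eqref{eq:balancemin} equals $\alpha$ under the thresholds of Definition~\ref{def:minthres} (substitute $\ell_i = u_i - 2\beta$, compare consecutive terms to obtain the closed form~\eqref{eq:minthres} for $u_i$ in terms of $u_1$, and fix $u_1$ via $(kU+2\beta)/(k(u_1-2\beta)+2\beta)=\alpha$), this yields $\DTPRmin(\sigma)/\OPT(\sigma)\le\alpha$ for every $\sigma$. It remains to pin down $\alpha$: substituting the endpoint condition $\ell_{k+1}=L$, i.e.\ $u_{k+1}=L+2\beta$, into the closed form~\eqref{eq:minthres} for $u_{k+1}$ and rearranging gives exactly~\eqref{eq:alpha}; and~\eqref{eq:alpha} has a unique positive root because, on the interval where its left-hand denominator is positive, the left side is decreasing while the right side $(1+\tfrac1{k\alpha})^k$ is decreasing but bounded, so a sign-and-limit check forces a single crossing in $(1,\theta)$ (this specializes at $\beta=0$ to~\eqref{eq:kminalpha}, whose uniqueness is known~\cite{Lorenz:08}).

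The main obstacle is the worst-case reduction, steps (ii)–(iii) above. Observation~\ref{obs:thresholdDifference} cleanly removes the algorithm's switching-pattern freedom, but one must still verify rigorously that among all bounded price sequences and all realizable positions and counts of \DTPRmin's voluntary acceptances, the configurations encoded in~\eqref{eq:balancemin} dominate — in particular controlling instances where \OPT\ buys inside the forced suffix, and confirming that \OPT\ consolidating its purchases into a single block while \DTPRmin\ is driven onto worst-case price-$U$ suffixes is genuinely extremal. Everything after that is the bookkeeping already previewed in the ``Designing the Double Threshold Values'' discussion.
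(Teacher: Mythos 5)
Your proposal follows essentially the same route as the paper's proof: the same partition by the number $j$ of voluntarily accepted prices, the same lower bound $\OPT(\sigma)\ge k\ell_{j+1}+2\beta$, the same use of $u_i=\ell_i+2\beta$ (Observation~\ref{obs:thresholdDifference}) to collapse every switching pattern to the single worst-case value $\sum_{i=1}^{j}\ell_i+(k-j)U+(j+1)2\beta$, and the same closing identity (the paper's Lemma~\ref{lem:intermedStepMin}) showing each balanced term equals $\alpha$. The differences are cosmetic --- you argue via direct upper and lower bounds where the paper exhibits explicit extremal sequences $\sigma_j,\rho_j$, and you additionally sketch threshold monotonicity and uniqueness of $\alpha$, which the paper leaves implicit --- so this is a correct rendering of the paper's argument.
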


\begin{theorem}\label{thm:compPRmax}
\DTPRmax is an $\omega$-competitive deterministic algorithm for \texttt{OPR-max}, where $\omega$ is the unique positive solution of \vspace{-1em}
\begin{align}
\frac{U - L - 2\beta}{L(\omega - 1) - 2\beta \left(1 - \frac{1}{k} + \frac{\omega}{k} \right)} = \left( 1 + \frac{\omega}{k} \right)^k. \label{eq:omega}
\end{align}
\end{theorem}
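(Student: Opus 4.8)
The plan is to prove Theorem~\ref{thm:compPRmax} along the same lines as Theorem~\ref{thm:compPRmin} (Section~\ref{sec:analysis}), in three steps: establishing that $\omega$ is well defined, proving a lower bound on $\DTPRmax(\sigma)$ in terms of the thresholds, and proving a matching upper bound on $\OPT(\sigma)$, then combining the two through the balancing identities~\eqref{eq:balancemax}. For Step~1, I would rewrite Eq.~\eqref{eq:omega} as $g(\omega) := \left(1 + \tfrac{\omega}{k}\right)^k\!\left[L(\omega-1) - 2\beta\left(1 - \tfrac{1}{k} + \tfrac{\omega}{k}\right)\right] - (U - L - 2\beta) = 0$ and observe that in the nontrivial regime (which forces $U - L > 2\beta$ and $kL > 2\beta$) the bracketed term is affine in $\omega$ with positive slope $L - 2\beta/k$ and eventually positive, while $\left(1+\omega/k\right)^k$ is positive and increasing; hence on the range where the bracket is positive, $g$ is continuous and strictly increasing from a negative value to $+\infty$, so it has a unique positive root. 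Setting $\beta=0$ recovers Eq.~\eqref{eq:kmaxomega}, confirming consistency with $k$-max search.

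For Step~2, fix an arbitrary instance $\sigma$ and let $j \in \{0,\dots,k\}$ be the number of units accepted \emph{voluntarily} by \DTPRmax (via a threshold, before the deadline forces acceptance); the remaining $k-j$ forced acceptances occupy a contiguous suffix and are each at a price $\ge L$. I would then establish the uniform bound
\begin{align*}
\DTPRmax(\sigma) \;\ge\; \sum_{i=1}^{j}\ell_i + (k-j)L - 2\beta .
\end{align*}
This is exactly where Observation~\ref{obs:thresholdDifference} does the work: the $i$-th voluntary acceptance is at a price $\ge u_i$ when it starts a new trading block and at a price $\ge \ell_i$ otherwise, and because $u_i = \ell_i + 2\beta$, the extra $2\beta$ of revenue collected on a block-initiating acceptance exactly offsets the extra $2\beta$ of switching cost that block contributes, so the net switching penalty beyond the unavoidable $2\beta$ telescopes to zero no matter how the $k$ accepted units are split into blocks.

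For Step~3, I would show that if \DTPRmax accepts $j$ units voluntarily then $\OPT(\sigma) \le k\,u_{j+1} - 2\beta$ (with the convention $u_{k+1} := U$): every price occurring before the forced suffix that \DTPRmax does not accept is strictly below the active threshold $u_{j+1}$, whereas every price inside the forced suffix is accepted by \DTPRmax itself, so \OPT's only way to beat \DTPRmax is to swap \DTPRmax's cheaper accepted prices for leftover prices bounded by $u_{j+1}$, and a short computation together with the Step~2 bound caps $\OPT(\sigma)$ at $k\,u_{j+1}-2\beta$. Dividing, the ratio on $\sigma$ is at most $\tfrac{k\,u_{j+1}-2\beta}{\sum_{i=1}^{j}\ell_i + (k-j)L - 2\beta}$, and by the balancing identities~\eqref{eq:balancemax} — precisely the equations the thresholds of Eq.~\eqref{eq:maxthres} are built to satisfy — this equals $\omega$ for every $j$; the endpoint $j=k$, after substituting $u_{k+1}=U$ and Definition~\ref{def:maxthres} and summing the (essentially geometric) series $\sum_{i=1}^k \ell_i$, collapses to Eq.~\eqref{eq:omega}, which both identifies $\omega$ and closes the argument. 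I expect the main obstacle to be Step~3 (and making Step~2 fully rigorous): one must track \DTPRmax's two-threshold state machine carefully enough to rule out the apparent danger that a large price appearing inside the forced suffix lets \OPT outperform \DTPRmax by more than the $\omega$ factor, and one must verify that the worst case in each regime is exactly the instance the balancing equation~\eqref{eq:balancemax} equalizes, rather than one with an unfavorable mix of block structure and threshold crossings.
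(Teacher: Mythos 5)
Your proposal follows essentially the same route as the paper's proof in Appendix~\ref{app:compPRmax}: partition instances by the number $j$ of voluntarily accepted prices, lower-bound $\DTPRmax(\sigma)$ by $\sum_{i=1}^{j}\ell_i + (k-j)L - 2\beta$ using the $u_i = \ell_i + 2\beta$ telescoping from Observation~\ref{obs:thresholdDifference}, upper-bound $\OPT(\sigma)$ by $k u_{j+1} - 2\beta$, and close via the balancing identities (the paper's Lemma~\ref{lem:intermedStepMax}) with $u_{k+1}=U$ yielding Eq.~\eqref{eq:omega}. The only substantive additions are your explicit uniqueness argument for $\omega$ and your direct treatment of arbitrary block structures where the paper exhibits only the two extreme sequences $\sigma_j$ and $\rho_j$; both are consistent with, and minor refinements of, the paper's argument.
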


\begin{figure*}[t]
	\minipage{0.48\textwidth}
	\includegraphics[width=\linewidth]{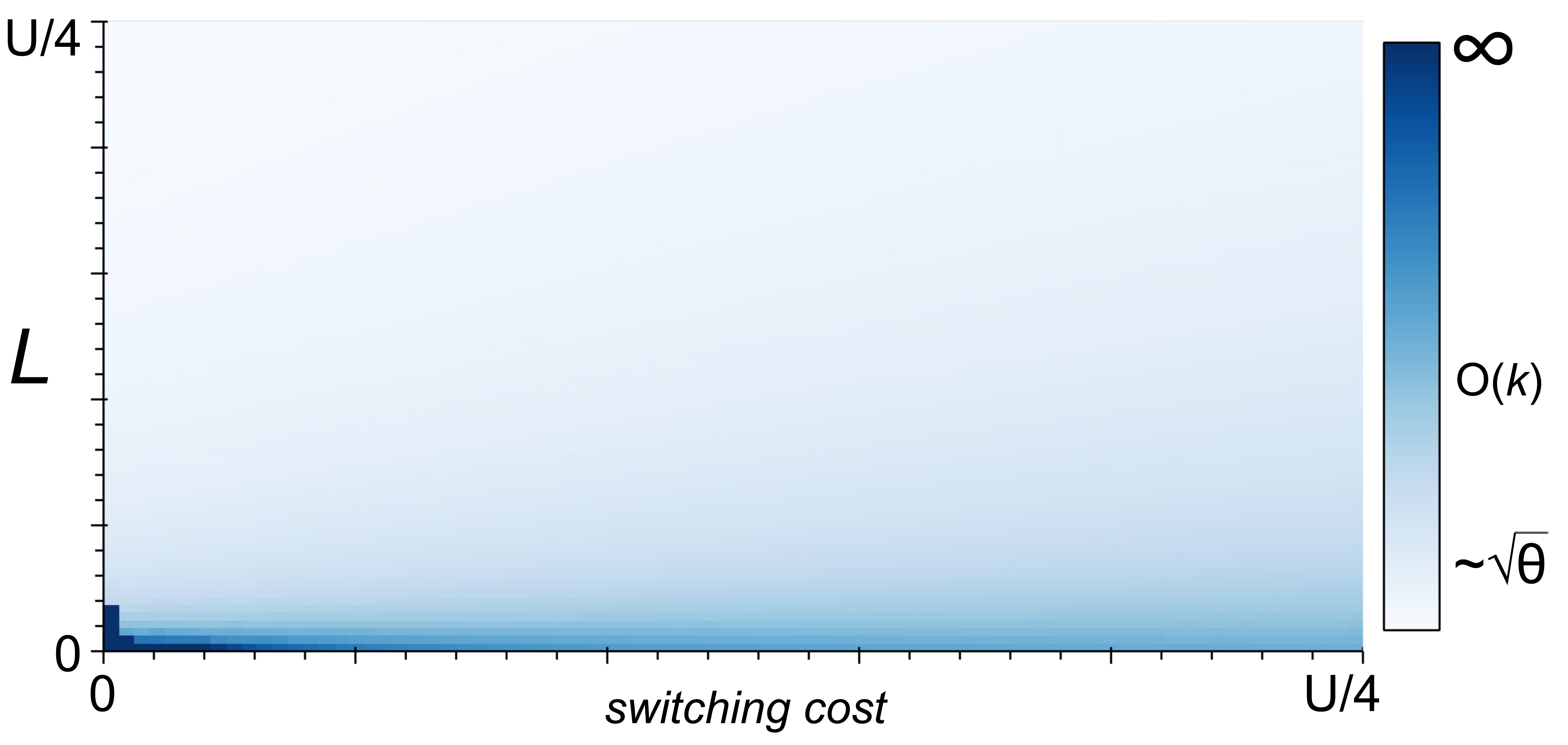}\vspace{-1em}
    \caption{\DTPRmin: Plotting actual values of competitive ratio $\alpha$ for fixed $k \geq 1$, fixed $U > L$, and varying values for $L$ and $\beta$ (switching cost).  Color represents the order of $\alpha$ for a given setting of $\theta$ and $\beta$.}\label{fig:ubplotmin}
	\endminipage\hfill
    \minipage{0.48\textwidth}
	\includegraphics[width=\linewidth]{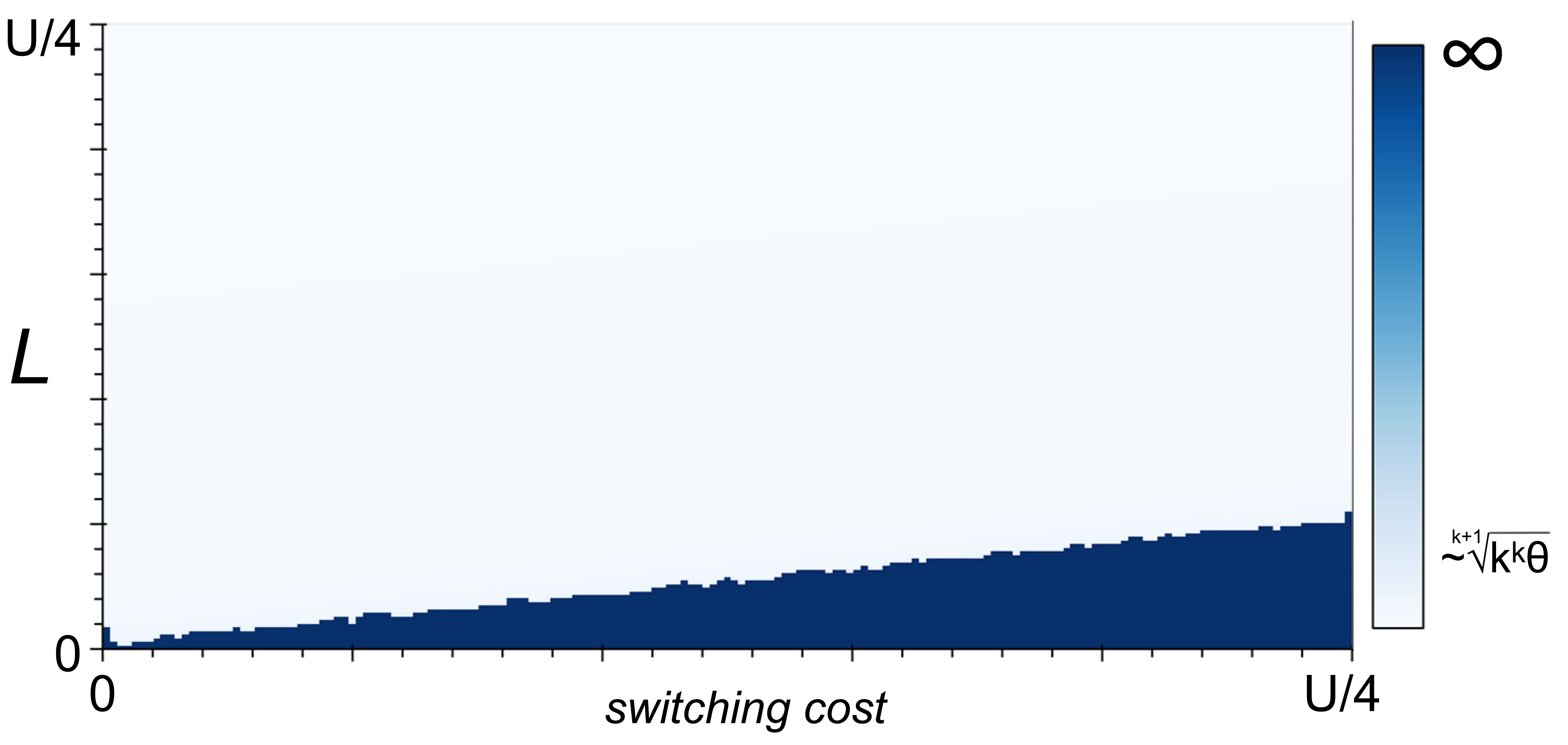}\vspace{-1em}
    \caption{\DTPRmax: Plotting actual values of competitive ratio $\omega$ for fixed $k \geq 1$, fixed $U > L$, and varying values for $L$ and $\beta$ (switching cost). Color represents the order of $\omega$ for a given setting of $\theta$ and $\beta$.}\label{fig:ubplotmax}
	\endminipage\hfill
\end{figure*}

These theorems present upper bounds on the competitive ratios, showing their dependence on the problem parameters. To investigate the behavior of these competitive ratios, in Figures~\ref{fig:ubplotmin} and~\ref{fig:ubplotmax}, we show the competitive ratios of both algorithms as problem parameters are varied. More specifically, in Figure \ref{fig:ubplotmin}, we visualize $\alpha$ as a function of $\beta$ and $L$, where $k$ and $U$ are fixed.  The color (shown as an annotated color bar on the right-hand side of the plot) represents the order of $\alpha$.  If $\beta > 0$ and $L \rightarrow 0$, Figure \ref{fig:ubplotmin} shows that $\alpha$ is roughly $O\left( k \right)$, which we discuss further in Corollary \ref{cor:min}(a).  In Figure \ref{fig:ubplotmax}, we visualize $\omega$ as a function of $\beta$ and $L$, where $k$ and $U$ are fixed.  The color represents the order of $\omega$.  In the dark blue region of the plot, Figure \ref{fig:ubplotmax} shows that $\omega \rightarrow \infty$ when $b \rightarrow k$, which provides insight into the extreme case for switching cost when $\beta \gtrsim \frac{kL}{2}$.

To obtain additional insight into the form of the competitive ratios in Theorems \ref{thm:compPRmin} and \ref{thm:compPRmax}, we present the following corollaries for two asymptotic regimes of interest: \texttt{REGIME-1} captures the order of the competitive ratio when $k$ is fixed and $\alpha$ or $\omega$ are sufficiently large, and \texttt{REGIME-2} captures the order of the competitive ratio when $k \rightarrow \infty$.

\begin{corollary}
\label{cor:min}
\textbf{(a)} For \texttt{REGIME-1}, with fixed $k \geq 1$ and $\beta \in (0, \frac{U-L}{2})$, the competitive ratio of \DTPRmin is
$$\alpha \thicksim \frac{k\beta}{kL+2\beta} + \sqrt{ \frac{k^2 LU + 2kL\beta + 2kU\beta + 4\beta^2 + k^2\beta^2}{k^2L^2 + 4kL\beta + 4\beta^2} }, \quad \textit{and $\alpha \thicksim O \left(k \right) $ for $L \rightarrow 0$.}$$

\textbf{(b)} Furthermore, for \texttt{REGIME-2}, with $k \rightarrow \infty$ and $c = \frac{2\beta}{U}, c \in (0, \frac{U-L}{U})$, the competitive ratio of \DTPRmin is 
$$\alpha \thicksim \left[ W \left( \frac{\left( c + \frac{1}{\theta} -1 \right) e^c}{e} \right) - c + 1 \right]^{-1}.$$
\end{corollary}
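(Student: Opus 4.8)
The starting point for both parts is the equation~\eqref{eq:alpha} defining $\alpha$ in Theorem~\ref{thm:compPRmin}; the plan is to simplify it asymptotically in each of the two regimes and solve the reduced equation in closed form. Throughout we use that the hypothesis $\beta\in(0,\tfrac{U-L}{2})$ makes the numerator $U-L-2\beta$ positive, so $\alpha$ is a well-defined real exceeding~$1$; this positivity is also what will keep the Lambert-$W$ argument in part~(b) inside the interval $(-1/e,0)$.

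For part~(a), fix $k$ and expand $\big(1+\tfrac{1}{k\alpha}\big)^k = 1 + \tfrac{1}{\alpha} + \tfrac{k-1}{2k}\tfrac{1}{\alpha^2}+O(\alpha^{-3})$, so that in \texttt{REGIME-1} ($\alpha$ large) the right-hand side of~\eqref{eq:alpha} is $1+1/\alpha+o(1/\alpha)$. Substituting this, clearing denominators, and cancelling the common first-order terms (the $\pm U/\alpha$ and $\pm 2\beta/(k\alpha)$ cross-terms cancel, and the constant $U-2\beta$ drops from both sides) will collapse~\eqref{eq:alpha} to a quadratic in $\alpha$, namely $(kL+2\beta)\alpha^2 - 2k\beta\alpha - (kU+2\beta) = 0$. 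Taking its unique positive root, $\alpha = \tfrac{k\beta + \sqrt{k^2\beta^2+(kL+2\beta)(kU+2\beta)}}{kL+2\beta}$, expanding the product under the radical, and writing the denominator as $\sqrt{(kL+2\beta)^2}$ recovers the stated closed form exactly. For the $L\to0$ claim, I would simply substitute $L=0$ in that formula to get $\tfrac{k}{2}+\tfrac12\sqrt{k^2+2kU/\beta+4}$ and factor $k$ out of the radical, which shows it equals $k+\tfrac{U}{2\beta}+O(1/k)$ and hence is of order~$k$.

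For part~(b) (\texttt{REGIME-2}), the plan is to pass to the limit $k\to\infty$: then $\big(1+\tfrac{1}{k\alpha}\big)^k\to e^{1/\alpha}$ and the terms $\tfrac{2\beta}{k},\tfrac{2\beta}{k\alpha}$ vanish (as $\alpha\ge1$), so~\eqref{eq:alpha} becomes $\tfrac{U-L-2\beta}{U(1-1/\alpha)-2\beta}=e^{1/\alpha}$; dividing numerator and denominator by $U$ and substituting $c=2\beta/U$ and $1/\theta=L/U$ gives the clean form $\tfrac{1-1/\theta-c}{1-c-1/\alpha}=e^{1/\alpha}$. I would then set $u=1/\alpha$, $s=1-c-u$, and $w=-s$; a short rearrangement turns this into $w e^{w}=(c+1/\theta-1)e^{c-1}=\tfrac{(c+1/\theta-1)e^{c}}{e}$, so $w=W\!\big(\tfrac{(c+1/\theta-1)e^{c}}{e}\big)$ and therefore $\alpha=1/u=\big[W\!\big(\tfrac{(c+1/\theta-1)e^{c}}{e}\big)-c+1\big]^{-1}$, as claimed. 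One must check the branch: because $0<c<1-1/\theta$ the argument of $W$ lies in $(-1/e,0)$ where $W$ is two-valued, but only the principal branch makes $\alpha$ positive (the other branch takes values below $-1$, forcing $\alpha<0$); as a sanity check, at $\beta=0$ this collapses to the known $k\to\infty$ limit of the $k$-min-search ratio~\eqref{eq:kminalpha}.

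The main obstacle is not the algebra but making the two reductions rigorous. In part~(a) one has to bound the discarded $O(\alpha^{-2})$ term and argue self-consistency --- that under the \texttt{REGIME-1} conditions (and as $L\to0$) the positive root of the quadratic really is large, so the binomial truncation was valid. In part~(b) one has to justify interchanging the $k\to\infty$ limit with root-finding: since both sides of~\eqref{eq:alpha}, viewed as functions of $\alpha$, converge uniformly on compact subsets of $(1,\infty)$ to their $k=\infty$ forms and each equation has a unique isolated positive root by Theorem~\ref{thm:compPRmin}, the roots converge; the one genuinely delicate point that remains is the Lambert-$W$ branch selection, which is exactly where the hypothesis $\beta<\tfrac{U-L}{2}$ is needed.
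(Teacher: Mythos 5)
Your proposal is correct and follows essentially the same route as the paper's proof: for part (a) you truncate the binomial expansion of $\bigl(1+\tfrac{1}{k\alpha}\bigr)^k$ at first order and solve the resulting quadratic $(kL+2\beta)\alpha^2-2k\beta\alpha-(kU+2\beta)=0$, and for part (b) you pass to the $k\to\infty$ limit $e^{1/\alpha}$ and invert via the Lambert $W$ function, exactly as in Appendix/Section~\ref{sec:proof4}. The only differences are cosmetic: you are somewhat more explicit than the paper about the Lambert-$W$ branch selection and about what would be needed to make the two asymptotic reductions fully rigorous.
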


\begin{corollary}
\label{cor:max}
\textbf{(a)} For \texttt{REGIME-1}, with fixed $k \geq 1$ and $b = \frac{2\beta}{L}, b \in (0, k)$, the competitive ratio of \DTPRmax is 
$$\omega \thicksim O \left( \sqrt[k+1]{k^k \frac{k\theta}{k - b}} \right),$$

and \textbf{(b)} for \texttt{REGIME-2}, with $k \rightarrow \infty$ and $b = \frac{2\beta}{L}, b \in (0, k)$, the competitive ratio of \DTPRmax is
$$\omega \thicksim W \left( \frac{\theta - 1 - a}{e^{1+b}} \right) + 1 + b.$$
\end{corollary}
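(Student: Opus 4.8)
Both parts are asymptotic analyses of the implicit equation for $\omega$ from Theorem~\ref{thm:compPRmax}. First I would normalize Equation~\eqref{eq:omega}: dividing the numerator and denominator of its left-hand side by $L$ and writing $\theta = U/L$, $b = 2\beta/L \in (0,k)$, the defining relation becomes
\begin{align*}
\frac{\theta - 1 - b}{\omega\!\left(1 - \tfrac{b}{k}\right) - \left(1 + b - \tfrac{b}{k}\right)} = \left(1 + \frac{\omega}{k}\right)^{k}.
\end{align*}
Because $b < k$, on the range where the denominator is positive the left-hand side is strictly decreasing in $\omega$ and the right-hand side strictly increasing; this re-derives the uniqueness already asserted in Theorem~\ref{thm:compPRmax} and, crucially, lets us pinch the true root between the roots of any upper/lower approximations of the two sides. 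The common recipe for each regime is then: replace both sides by their leading-order surrogates, solve the resulting elementary or Lambert-$W$ equation in closed form, and verify the solution lies where the dropped terms are genuinely lower order so that the pinching argument closes.

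\textbf{Part (a): \texttt{REGIME-1}.} With $k$ fixed, the root $\omega$ grows as $\theta \to \infty$; then $\left(1 + \omega/k\right)^k = (\omega/k)^k\big(1 + O(1/\omega)\big)$ and the linear denominator equals $\omega\,\tfrac{k-b}{k}\big(1 + O(1/\omega)\big)$, while $\theta - 1 - b \thicksim \theta$. Substituting and clearing denominators yields $\omega^{\,k+1} = \tfrac{k^{k+1}(\theta-1-b)}{k-b}\big(1 + O(1/\omega)\big)$, whence $\omega \thicksim \big(\tfrac{k^{k+1}\theta}{k-b}\big)^{1/(k+1)} = \sqrt[k+1]{k^{k}\tfrac{k\theta}{k-b}}$; raising the $1+o(1)$ correction to the fixed power $1/(k+1)$ still tends to $1$, so this is the claimed $O(\cdot)$ order.

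\textbf{Part (b): \texttt{REGIME-2}.} As $k\to\infty$ we have $\left(1 + \omega/k\right)^k \to e^{\omega}$ and the $b/k$ terms in the denominator vanish, so the equation converges to $\frac{\theta-1-b}{\omega-1-b} = e^{\omega}$. Setting $z = \omega - 1 - b > 0$ turns this into $z e^{z} = (\theta-1-b)\,e^{-(1+b)}$, whose unique positive solution is $z = W\!\left(\tfrac{\theta-1-b}{e^{1+b}}\right)$, giving $\omega \thicksim W\!\left(\tfrac{\theta-1-b}{e^{1+b}}\right) + 1 + b$, which is the stated form (with the quantity written $a$ in the statement being $b = 2\beta/L$). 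The step needing justification is that $\omega$ remains bounded as $k\to\infty$ for fixed $\theta$ --- this follows because $e^{\omega}$ on the right forces $\omega = O(\ln\theta)$ --- so that $\left(1+\omega/k\right)^k \to e^{\omega}$ holds uniformly over the relevant $\omega$-interval.

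\textbf{Main obstacle.} The arithmetic is routine; the real work is upgrading these formal limits to rigorous asymptotics. This means (i) establishing a priori bounds on the true root in each regime ($\omega\to\infty$ when $k$ is fixed and $\theta\to\infty$; $\omega = \Theta(\ln\theta)$, in particular bounded, as $k\to\infty$ with $\theta$ fixed), (ii) quantifying the gap between $\left(1+\omega/k\right)^k$ and its surrogate ($(\omega/k)^k$, respectively $e^\omega$) uniformly over that $\omega$-range via standard inequalities such as $e^{\omega}\big(1 - \omega^2/k\big) \le \left(1+\omega/k\right)^k \le e^{\omega}$, and (iii) combining these with the monotonicity of the two sides to sandwich the true $\omega$ between the roots of the perturbed equations. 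Tracking the direction of each inequality as one divides by $L$, clears denominators, and substitutes $z = \omega - 1 - b$ is the most error-prone part, and one must also keep the analysis confined to the regime $2\beta < U - L$, where the numerator $\theta - 1 - b$ is positive.
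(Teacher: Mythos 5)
Your proposal is correct and follows essentially the same route as the paper: for part (a) you approximate both sides of Equation~\eqref{eq:omega} to leading order in $\omega$ for fixed $k$ and solve for $\omega^{k+1}$ (the paper instead verifies the candidate $\omega_{+}$ and a perturbed $\omega_{-}$, but the computation is the same), and for part (b) you take the $k\to\infty$ limit to get $(\omega-1-b)e^{\omega}=\theta-1-b$ and invert via the Lambert $W$ function exactly as the paper does, correctly identifying the statement's ``$a$'' as a typo for $b$. Your added remarks on monotonicity, a priori bounds, and uniformity only make explicit the rigor that the paper's own proof leaves informal.
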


Corollary \ref{cor:min}(a) contextualizes the behavior of $\alpha$ (the competitive ratio of \DTPRmin) in the most relevant \OPRmin setting (when $\beta \in (0, \frac{U-L}{2})$).  Let us also briefly discuss the other cases for the switching cost $\beta$, and why this interval makes sense.
When $\beta > \frac{U-L}{2}$, the switching cost is large enough such that \texttt{OPT} only incurs a switching cost of $2\beta$.  In this regime, $\alpha$ does not fully capture the competitive ratio of \DTPRmin, since every value in the threshold family $\{u_i\}_{i \in [1, k]}$ is at least $U$; in other words, whenever the algorithm begins accepting prices, it will accept $k$ prices in a single continuous segment, incurring minimal switching cost of $2\beta$.  As $\beta \rightarrow \infty$, the competitive ratio of \DTPRmin~approaches $1$.

Conversely, Corollary \ref{cor:max}(a) contextualizes the behavior of $\omega$ in the most relevant \OPRmax setting (when $\beta \in (0, \frac{kL}{2})$), but we also discuss the other cases for the switching cost $\beta$, and why this interval makes sense.
When $\beta \geq \frac{kL}{2}$, the switching cost is too large, and the competitive ratio may become unbounded.  Note that this is shown explicitly in Figure \ref{fig:ubplotmax}.  Consider an adversarial sequence which forces any \OPRmax algorithm to accept $k$ prices with value $L$ at the end of the sequence.  On such a sequence, even a player which incurs the minimum switching cost of $2\beta$ achieves zero or negative profit of $kL - 2\beta \leq 0$, and this is not well-defined.

Next, to begin to investigate the tightness of Theorems \ref{thm:compPRmin} and \ref{thm:compPRmax}, it is interesting to consider special cases that correspond to models studied in previous work.  In particular, when $\beta =0$, i.e., there is no switching cost, $\OPR$ degenerates to the $k$-search problem~\cite{Lorenz:08}.  For fixed $k \geq 1$ and $\theta \rightarrow \infty$, the optimal competitive ratios shown by \cite{Lorenz:08} are $\sqrt{\theta/2}$ for $k$-min, and $\sqrt[k+1]{k^k \theta}$ for $k$-max (see Section~\ref{sec:OTA}). Both versions of \DTPR exactly recover the optimal $k$-search algorithms~\cite{Lorenz:08}.\footnote{To see this, note that by eliminating all $\beta$ terms from Equations~\eqref{eq:alpha} and \eqref{eq:omega}, we exactly recover Equations~\eqref{eq:kminalpha} and \eqref{eq:kmaxomega}, which are the definitions of the $k$-search algorithms. When $\theta \rightarrow \infty$ as $L \rightarrow 0$, \DTPRmin and \DTPRmax match each $k$-search result exactly when $\beta = 0$.  In Corollaries~\ref{cor:min}(b) and \ref{cor:max}(b), \DTPRmin and \DTPRmax also match each $k$ search result exactly when $k \rightarrow \infty$ and $\beta = 0$. (See Sec. \ref{sec:OTA})}
Figure~\ref{fig:ubplotmin} shows that if $\beta = 0$ and $L \rightarrow 0$, then $\alpha \rightarrow \infty$, which matches the $k$-min result of $\sqrt{\theta / 2} \thicksim \infty$.  Similarly, Figure~\ref{fig:ubplotmax} shows that if $\beta = 0$ and $L \rightarrow 0$, then $\omega \rightarrow \infty$, which matches the $k$-max result of  $\sqrt[k+1]{k^k \theta} \thicksim \infty$. 

More generally, one can ask if the competitive ratios of \DTPR can be improved upon by other online algorithms outside of the special case of $k$-search.  Our next set of results highlights that no improvement is possible, i.e., that \DTPRmin and \DTPRmax maintain the optimal competitive ratios possible for any deterministic online algorithm for \OPR.

\begin{theorem}\label{thm:lowerboundmin}
Let $k \geq 1$, $\theta \geq 1$, and $\beta \in (0, \frac{U-L}{2})$.  Then $\alpha$ given by Equation~\eqref{eq:alpha} is the best competitive ratio that a deterministic online algorithm for \OPRmin can achieve.
\end{theorem}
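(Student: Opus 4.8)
Since the upper bound (that \DTPRmin attains $\alpha$) is already given by Theorem~\ref{thm:compPRmin}, the content of this theorem is the matching lower bound, and the plan is to prove it by an adaptive-adversary argument in the style of the $k$-search lower bounds of \cite{Lorenz:08}, augmented so that the adversary also forces switching cost. Fix an arbitrary deterministic online algorithm \ALG for \OPRmin. The adversary reveals a slowly decreasing price sequence: starting at $U$, it presents a long run of slots at each level $U, U-\epsilon, U-2\epsilon, \dots$ (for a small step $\epsilon>0$), with enough slots at every level that \OPT could complete all $k$ purchases contiguously within a single level. At a level $v^\star$ chosen adaptively --- after observing which prices \ALG has accepted --- the adversary stops decreasing and fills every remaining slot up to a large horizon $T$ with the price $U$; the deadline constraint $\sum_t x_t = k$ then forces \ALG to complete any unfilled purchases at price $U$ (this is exactly the forced-acceptance clause of Line~\ref{line:min-force}). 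Each admissible choice of $v^\star$ yields a concrete, feasible instance $\sigma$.

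The core step is bounding \ALG's cost on such an instance. If \ALG has accepted $m=m(v^\star)$ units when prices reach $v^\star$, then $\ALG(\sigma)$ is at least (the sum of the $m$ prices paid, each $\geq v^\star$) $+\,(k-m)U$ (forced purchases) $+$ switching cost, while $\OPT(\sigma) \leq k v^\star + 2\beta$ (buy $k$ contiguous units at the bottom level). Crucially, because the adversary controls the multiplicity of slots at each level, it can interleave rejected slots so that \ALG's accepted units are non-contiguous --- costing an extra $2\beta$ per gap --- unless \ALG chooses to ``stay on'' by accepting prices that are worse by up to $2\beta$ to remain contiguous with its previous acceptance or with the terminal forced block. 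This is precisely the $\ell_i$-versus-$u_i$ dichotomy, and it reproduces the family of ratios balanced in Equation~\eqref{eq:balancemin}: taking $v^\star$ equal to the level just below where \ALG accepts its $j$-th unit, for $j=1,\dots,k$, together with $v^\star = L$, gives a collection of lower bounds on $\textnormal{CR}(\ALG)$ that coincide, up to $O(\epsilon)$, with the successive expressions appearing in~\eqref{eq:balancemin}.

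It then remains to show \ALG cannot be simultaneously competitive against all of these adversary choices, which follows from the same balancing/monotonicity structure used to derive the thresholds. If \ALG commits its units ``too eagerly'' it loses on the small-$j$ choices of $v^\star$ (high accepted prices in the numerator against a small denominator $k\ell_j+2\beta$), whereas if it commits ``too conservatively'' it is forced down to $v^\star = L$, where the ratio degenerates toward $\tfrac{kU+2\beta}{kL+2\beta}$-type bounds; interpolating between these regimes, some choice of $v^\star$ forces a ratio of at least $\alpha - O(\epsilon)$. Since $\alpha$ is by construction the unique positive root of~\eqref{eq:alpha} --- obtained by closing the balancing chain with the terminal condition $\ell_{k+1}=L$, with monotonicity of the two sides of~\eqref{eq:alpha} in $\alpha$ giving uniqueness --- letting $\epsilon \to 0$ yields $\textnormal{CR}(\ALG) \geq \alpha$. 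Combined with Theorem~\ref{thm:compPRmin}, $\alpha$ is the best achievable competitive ratio.

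I expect the main obstacle to be the switching-cost bookkeeping: one must show the adversary can actually realize, for each $j$, the worst-case combination of accepted-price levels and number of switches that appears in~\eqref{eq:balancemin} (the $u_i$ terms), and --- conversely --- that no ``sticky'' strategy of \ALG evades both the price risk and the switching risk in a way not already captured by the balancing. Secondary technical points are verifying feasibility of the constructed instances (horizon length, enough bottom-level slots for \OPT, interaction with the near-deadline forced acceptances), and confirming that the stated range $\beta \in (0, \tfrac{U-L}{2})$ is exactly what keeps \OPT's switching cost equal to $2\beta$ and the denominators $k\ell_j + 2\beta$ meaningful, so that the resulting bound is $\alpha$ rather than a degenerate quantity.
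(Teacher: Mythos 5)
Your high-level strategy (adaptive adversary plus the balancing chain of Equation~\eqref{eq:balancemin}) is the right one and matches the spirit of the paper's proof, but the construction you describe does not actually force the switching costs that make the bound come out to $\alpha$, and the paper's proof is built precisely around that forcing mechanism. The paper's adversary does not present a slowly decreasing continuum with a single terminal jump to $U$; it presents the exact threshold values $\ell_1,\dots,\ell_k$ one level at a time, and \emph{immediately after each acceptance it inserts a block of $U$-prices that persists until \ALG switches off}, only then moving on to $\ell_{i+1}$. It is this temporary $U$-block after every acceptance --- not the terminal one --- that forces the $(j+1)2\beta$ term in the numerator $\sum_{i\le j}\ell_i+(j+1)2\beta+(k-j)U$, and a separate short argument shows that any ``sticky'' algorithm that accepts even one of these interposed $U$'s is strictly worse than one that switches off at once. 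In your construction, where each level has many slots and the only jump is the permanent one at $v^\star$, an algorithm can accept its early units contiguously and pay only $4\beta$ total in switching, so the instance family as described yields a weaker bound than $\alpha$; the phrase ``the adversary can interleave rejected slots'' does not repair this, since the adversary cannot dictate which slots \ALG rejects --- it can only make the interposed prices so bad ($=U$) that rejecting them is forced on pain of a strictly worse ratio, which is exactly the lemma you would still need to state and prove.

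The second gap is the closing step. You assert that ``interpolating between these regimes, some choice of $v^\star$ forces a ratio of at least $\alpha-O(\epsilon)$,'' but this variational claim over all possible acceptance-level profiles of \ALG is the mathematical heart of the lower bound and is not supplied. The paper avoids it entirely by hard-wiring the presented prices to be the thresholds $\ell_i$ themselves: then every branch of the game tree (never accept $\ell_1$; accept $\ell_1,\dots,\ell_j$ and refuse $\ell_{j+1}$ $k$ times; accept all $k$) evaluates, via the already-established balancing identities, to a ratio of exactly $\alpha$ against $\OPT\le k\ell_{j+1}+2\beta$, with no limiting or interpolation argument needed. If you want to keep your $\epsilon$-grid formulation you must either prove the interpolation claim or, more simply, collapse the grid onto the threshold values as the paper does. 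Your observation about the role of $\beta<\frac{U-L}{2}$ (keeping \OPT's switching cost at $2\beta$ and the denominators meaningful) is correct and consistent with the paper.
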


\begin{theorem}\label{thm:lowerboundmax}
Let $k \geq 1$, $\theta \geq 1$, and $\beta \in (0, \frac{kL}{2})$.  Then $\omega$ given by Equation (\ref{eq:omega}) is the best competitive ratio that a deterministic online algorithm for \OPRmax can achieve.
\end{theorem}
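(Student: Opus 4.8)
The plan is to prove Theorem~\ref{thm:lowerboundmax} by an adversarial argument that extends the classical $k$-max search lower bound of~\cite{Lorenz:08} (which it recovers at $\beta=0$) so as to also exploit the switching cost. Fix an arbitrary deterministic online algorithm \ALG for \OPRmax. I would construct a family of instances on a fine price grid $L = p_0 < p_1 < \dots < p_m = U$: the adversary reveals prices that rise through the grid, inserting short low-price ``valleys'' between consecutive rising steps --- long enough that resuming acceptance costs \ALG a full $2\beta$, but short enough not to exhaust the horizon $T$. At any point the adversary may \emph{terminate} by presenting the price $L$ for every remaining slot; by the deadline constraint this forces \ALG to accept all of its not-yet-accepted units at price $L$. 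Since \ALG is deterministic, the adversary can see its behavior and choose the termination step to maximize $\texttt{OPT}(\sigma)/\texttt{ALG}(\sigma)$. The rising steps and the two valley regimes mirror \texttt{CASE-1}, \texttt{CASE-2}, \texttt{CASE-3} used to design the thresholds in Section~\ref{sec:dtpr}, so this is precisely the adversary that makes the balancing equations~\eqref{eq:balancemax} tight.

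Next I would set up the accounting. Let $g_j \in \{0,\dots,k\}$ be the number of units \ALG has accepted the first time the price reaches level $p_j$, and let $s_j$ be the switching cost \ALG has paid by then; both are functions of \ALG alone. If the adversary terminates right after level $p_j$ is reached, then \texttt{OPT} can sell $k$ units in one contiguous block at price essentially $p_j$, so $\texttt{OPT}(\sigma) \ge k p_j - 2\beta$, whereas $\texttt{ALG}(\sigma) \le g_j p_j + (k-g_j)L - s_j$, since \ALG's voluntarily accepted units earn at most $p_j$ each, its forced units earn $L$ each, and its switching cost is $s_j$ (the final switch-off only improves the bound). Hence
\[
\omega \;\ge\; \max_j \; \frac{k p_j - 2\beta}{\,g_j\, p_j + (k-g_j)L - s_j\,}.
\]
The hypothesis $\beta < kL/2$ is exactly what guarantees $kL - 2\beta > 0$, so every denominator is positive and the ratio is well-defined; for $\beta \ge kL/2$ the forced-at-$L$ instance already drives \texttt{OPT}'s value nonpositive, which is why the theorem excludes that range.

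The heart of the argument is to show that no acceptance profile $\{g_j\}$ with its induced switching profile $\{s_j\}$ can keep the right-hand side below the $\omega$ of~\eqref{eq:omega} for all $j$ at once. I would use the standard continuous-relaxation-plus-balancing technique: view $g$ as a nondecreasing function of the price $p$; since accepting cheap units only hurts \ALG it is without loss to take $g(L)=0$, and \ALG must reach $g(U)=k$. The new ingredient is that the valleys force \ALG's switching cost to grow by $2\beta$ each time it acquires units at a strictly higher step, so $s(p)$ is pinned to the number of ``separated'' acquisitions of $g$; the adversary's freedom lets it push $s$ up to as much as $2\beta$ per accepted unit (the \texttt{CASE-3} extreme) whenever \ALG spreads acquisitions across distinct steps, while an \ALG that instead defers to a single contiguous block is caught far below \texttt{OPT} once the adversary terminates high. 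Requiring the displayed ratio bound for all $p$ becomes a differential inequality for $g$ carrying these switching terms; imposing equality (the balancing rule) and integrating from $p=L$ to $p=U$ with boundary conditions $g(L)=0$, $g(U)=k$ yields exactly the transcendental equation~\eqref{eq:omega}, including the coefficient $\left(1 - \tfrac1k + \tfrac{\omega}{k}\right)$ on $2\beta$, as the borderline value; any smaller target is infeasible. The analogous construction (falling prices, forced-at-$U$ termination) gives Theorem~\ref{thm:lowerboundmin}.

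The main obstacle I expect is the joint control of the two resources the adversary wields --- the price trajectory and the forced pausing --- so that \ALG cannot game the bound: one must rule out a strategy that trades a smaller switching cost for worse prices (or the reverse) and beats $\omega$, by showing the break-even between the ``contiguous, low-switching'' and ``spread-out, high-switching'' behaviors is governed \emph{exactly} by~\eqref{eq:omega}, matching the balancing rule that produced the thresholds. A secondary, more routine obstacle is discretization bookkeeping: choosing the grid fine enough and the valleys short enough (relative to $k$ and $T$) that the continuous relaxation is tight up to vanishing error and the adversary provably never runs out of time steps before the claimed ratio is forced.
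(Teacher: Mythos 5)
Your adversary has the right skeleton --- rising price levels separated by forced low-price ``valleys,'' with the option to terminate by flooding the remainder with $L$ --- and this matches the paper's construction in spirit (the paper's adversary presents exactly the levels $u_1,\dots,u_k$ of Definition~\ref{def:maxthres}, each followed by $L$'s until \ALG switches off). But the heart of your argument, the ``continuous-relaxation-plus-balancing'' step, has a genuine gap: relaxing the acceptance profile $g$ to a real-valued nondecreasing function of the price enlarges the algorithm's strategy space, and the resulting differential equation integrates to the $k\to\infty$ (one-way-trading-like) bound of the form $e^{\omega}$, not the finite-$k$ equation~\eqref{eq:omega} with the factor $\left(1+\frac{\omega}{k}\right)^{k}$. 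For finite $k$ the integrality of $g$ is essential --- a fractional trader genuinely beats $\omega$ --- so you cannot obtain the exact constant this way; you must keep $g$ integer-valued, which collapses the adversary to $k$ discrete levels and the optimization to the discrete balancing system~\eqref{eq:balancemax}, i.e.\ essentially the paper's case analysis. Relatedly, your displayed upper bound $\texttt{ALG}(\sigma)\le g_j p_j+(k-g_j)L-s_j$ credits every accepted unit with the \emph{current} price $p_j$ rather than the (lower) price at which it was actually accepted; this inflates the denominator, so the claim that no profile keeps all these ratios below $\omega$ is not established --- the tight accounting $\sum_{i\le j}(\text{price of $i$-th acceptance})$ is needed, and with it the adversary's levels must be chosen (as the paper does) so that each termination case evaluates exactly to $\omega$.

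A second gap is the step you yourself flag as the ``main obstacle'' but do not resolve: ruling out an algorithm that accepts some of the valley prices, or delays switching off, to trade price quality against switching cost. The paper closes this with a short exchange argument: if \ALG accepts $y\le k-1$ copies of $L$ after accepting some $u_i$, then either $y=k-1$ and it is stuck with ratio $\frac{kU-2\beta}{u_1+(k-1)L-2\beta}>\omega$, or each such unit could instead be accepted later at a price $\ge L$ without increasing the switching cost, so immediate rejection weakly dominates. Some such argument is mandatory, because without it the adversary cannot actually force the full $(j+1)2\beta$ switching charge that appears in the balancing equations. Finally, note the paper needs no discretization bookkeeping at all: the adversary uses exactly $k$ levels, each presented at most $k$ times, so the horizon never runs out and no limiting argument is required.
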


By combining Theorems~\ref{thm:compPRmin} and~\ref{thm:compPRmax} with Theorems~\ref{thm:lowerboundmin}  and~\ref{thm:lowerboundmax}, these results imply that the competitive ratios of \DTPRmin and \DTPRmax are optimal for \OPRmin and \OPRmax.

Finally, it is interesting to contrast the upper and lower bounds for \OPR with those for $k$-search, since the contrast highlights the impact of switching costs. In \OPRmin with $\beta > 0$, \DTPRmin \textit{improves} on existing optimal results for $k$-min search, particularly in the case where $L$ approaches~$0$ (i.e., $\theta \rightarrow \infty$).  Since Theorem \ref{thm:lowerboundmin} implies that \DTPRmin is optimal, this shows that the addition of switching cost in \OPRmin enables an online algorithm to achieve a better competitive ratio compared to $k$-min search, which is a surprising result.
In contrast, for \OPRmax with $\beta > 0$, \DTPRmax's competitive bounds are \textit{worse} than existing results for $k$-max search, particularly for large $\beta$.  Since Theorem \ref{thm:lowerboundmax} implies that \DTPRmax is optimal, this suggests that \OPRmax is fundamentally a \textit{more difficult} problem compared to $k$-max search.

\section{Proofs}
\label{sec:analysis}

We now prove the results described in the previous section. In Section~\ref{sec:proof4}, we prove the \DTPRmin results presented in Theorem~\ref{thm:compPRmin} and Corollary~\ref{cor:min}. In Section~\ref{sec:prooflowerbound}, we provide a proof sketch for the lower bound results in Theorems~\ref{thm:lowerboundmin} and~\ref{thm:lowerboundmax}, and defer the formal proofs to Appendix~\ref{app:lowerbound}.  The competitive results for \DTPRmax in Theorem~\ref{thm:compPRmax} and Corollary~\ref{cor:max} are deferred to Appendix~\ref{app:compPRmax}.

\subsection{Competitive Results for \DTPRmin} \label{sec:proof4}

We begin by proving Theorem~\ref{thm:compPRmin} and Corollary~\ref{cor:min}. The key novelty in the proof of the main competitive results (Theorems \ref{thm:compPRmin} and \ref{thm:compPRmax}) lies in our effort to derive two threshold functions and balance the competitive ratio in several worst-case instances with respect to these thresholds, as outlined in Section~\ref{sec:dtpr}.

\begin{proof}[Proof of Theorem \ref{thm:compPRmin}] 
For $0 \leq j \leq k$, let $\mathcal{S}_j \subseteq \mathcal{S}$ be the sets of \OPRmin price sequences for which \DTPRmin accepts exactly $j$ prices (excluding the $k-j$ prices it is forced to accept at the end of the sequence). Then, all of the possible price sequences for \OPRmin are represented by $\mathcal{S} = \bigcup_{j=0}^k \mathcal{S}_j$.  Also, recall that by definition, $\ell_{k+1} = L$.  Let $\epsilon > 0$ be a fixed constant, and define the following two price sequences $\sigma_j$ and $\rho_j$:
$$\forall j \in [2, k] : \sigma_j = \ell_1, u_2, \dots, u_j, U, \underbrace{\ell_{j+1} + \epsilon, \dots, \ell_{j+1} + \epsilon}_{k}, \underbrace{U, U, \dots, U}_{k}.$$
$$\forall j \in [2, k] : \rho_j = \ell_1, U, \ell_2, U, \dots, U, \ell_j, U, \underbrace{\ell_{j+1} + \epsilon, \dots, \ell_{j+1} + \epsilon}_{k}, \underbrace{U, U, \dots, U}_{k}.$$
There are two special cases for $j=0$ and $j=1$.  For $j=0$, we have that $\sigma_0 = \rho_0$, and this sequence simply consists of $\ell_{1} + \epsilon$ repeated $k$ times, followed by $U$ repeated $k$ times.  For $j = 1$, we also have that $\sigma_1 = \rho_1$, and this sequence consists of one price with value $\ell_1$ and one price with value $U$, followed by $\ell_{2} + \epsilon$ repeated $k$ times and $U$ repeated $k$ times.
\smallskip

Observe that as $\epsilon \rightarrow 0$, $\sigma_j$ and $\rho_j$ are sequences yielding the worst-case ratios in $\mathcal{S}_j$, as \DTPRmin\; is forced to accept $(k-j)$ worst-case $U$ values at the end of the sequence, and each accepted value is exactly equal to the corresponding threshold.

Note that $\sigma_j$ and $\rho_j$ also represent two extreme possibilities for the additive switching cost.  In $\sigma_j$, \DTPRmin only switches twice, but it mostly accepts values $u_i$.  In $\rho_j$, \DTPRmin~must switch $j + 1$ times because there are many intermediate $U$ values, but it only accepts values $\ell_i$.  

In the worst case, we have 
$$\frac{\DTPRmin(\sigma_j)}{\texttt{OPT}(\sigma_j)} = \frac{\DTPRmin(\rho_j)}{\texttt{OPT}(\rho_j)}.$$ 
Also, the optimal solutions for both sequences are lower bounded by the same quantity: $k c_{\min}(\sigma_j) + 2\beta = k c_{\min}(\rho_j) + 2\beta$.
For any sequence $s$ in $\mathcal{S}_j$, we have that $c_{\min}(s) > \ell_{j+1}$, so $\texttt{OPT}(\rho_j) = \texttt{OPT}(\sigma_j) \leq k \ell_{j+1} + 2\beta$.\\
By definition of the threshold families $\{\ell_i\}_{i \in [1, k]}$ and $\{u_i\}_{i \in [1, k]}$, we know that $\sum_{i=1}^j~\ell_i~+~j2\beta~=~\sum_{i=1}^j u_i$ for any value $j \geq 2$:
\begin{align*}
\DTPRmin(\rho_j) = \left( \sum_{i=1}^j \ell_i + (k-j)U + (j+1)2\beta \right) = \left( \ell_1 + \sum_{i=2}^j u_i + (k-j)U + 4\beta \right) = \DTPRmin(\sigma_j).
\end{align*}
Note that whenever $j < 2$, we have that $\sigma_0 = \rho_0$, and $\sigma_1 = \rho_1$.  Thus, $\DTPRmin(\rho_j) = \DTPRmin(\sigma_j)$ holds for any value of $j$.
By definition of $\ell_1$, we simplify $\ell_1 + \sum_{i=2}^j u_i + (k-j)U + 4\beta$ to\\ $\sum_{i=1}^j u_i + (k-j)U + 2\beta$.  Then, for any sequence $s \in \mathcal{S}_j$, we have the following:
\begin{align}
\frac{\DTPRmin(s)}{\texttt{OPT}(s)} \leq \frac{\DTPRmin(\sigma_j)}{\texttt{OPT}(\sigma_j)} = \frac{\DTPRmin(\rho_j)}{\texttt{OPT}(\rho_j)} \leq \frac{\sum_{i=1}^j u_i + (k-j)U + 2\beta}{k\ell_{j+1} + 2\beta}.
\end{align}
Before proceeding to the next step, we use an intermediate result stated in the following lemma with a proof given in Appendix \ref{appendix:proofs}.
\begin{lemma} \label{lem:intermedStepMin}
For any $0 \leq j \leq k$, by definition of $\{\ell_i\}_{i \in [1, k]}$ and $\{u_i\}_{i \in [1, k]}$,
\begin{align*}
\sum_{i=1}^j u_i + (k-j)U + 2\beta \leq \alpha \cdot (k\ell_{j+1} + 2\beta).
\end{align*}
\end{lemma}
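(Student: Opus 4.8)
The statement to prove is Lemma~\ref{lem:intermedStepMin}: for every $0 \le j \le k$,
\[
\sum_{i=1}^j u_i + (k-j)U + 2\beta \;\le\; \alpha\,(k\ell_{j+1} + 2\beta),
\]
with the $u_i$, $\ell_i$ from Equation~\eqref{eq:minthres} and $\alpha$ the solution of Equation~\eqref{eq:alpha}. My plan is to prove this by a clean induction on $j$, converting the inequality into a telescoping recursion that the threshold values were explicitly constructed to satisfy with equality.

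**Key steps.**

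First I would observe that it suffices to show the claim holds \emph{with equality} for each $j$, since the threshold families were derived precisely by balancing the competitive ratios across all the sequences $\sigma_j,\rho_j$. That is, I will prove $\sum_{i=1}^j u_i + (k-j)U + 2\beta = \alpha(k\ell_{j+1}+2\beta)$ for all $0\le j\le k$, which immediately gives the lemma (and confirms tightness). The base case $j=0$ reads $kU + 2\beta = \alpha(k\ell_1 + 2\beta)$; since $\ell_1 = u_1 - 2\beta$ and, from the derivation in Section~\ref{sec:dtpr}, $u_1$ is defined exactly by $\frac{kU+2\beta}{k(u_1-2\beta)+2\beta} = \alpha$, this holds by construction.

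For the inductive step, assume the equality at index $j-1$ and subtract it from the target equality at index $j$. The left-hand side difference is $u_j - U$, so I need
\[
u_j - U = \alpha\,k\,(\ell_{j+1} - \ell_j) = \alpha\,k\,(u_{j+1} - u_j),
\]
using $\ell_i = u_i - 2\beta$ so that $\ell_{j+1}-\ell_j = u_{j+1}-u_j$. This is a first-order linear recursion for the sequence $\{u_i\}$. I would then simply verify directly that the closed form in Equation~\eqref{eq:minthres}, namely
\[
u_i = U\Bigl[1 - \bigl(1-\tfrac1\alpha\bigr)\bigl(1+\tfrac1{k\alpha}\bigr)^{i-1}\Bigr] + 2\beta\Bigl[\bigl(\tfrac1{k\alpha} - \tfrac1k + 1\bigr)\bigl(1+\tfrac1{k\alpha}\bigr)^{i-1}\Bigr],
\]
satisfies this recursion: compute $u_{j+1} - u_j$ (each term picks up a factor $\bigl(1+\tfrac1{k\alpha}\bigr)^{j-1}\cdot\tfrac1{k\alpha}$ after simplification), compute $u_j - U$, and check that $u_j - U = \alpha k (u_{j+1}-u_j)$ reduces to an identity in the constants $U, L, \beta, k, \alpha$. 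This last identity is where Equation~\eqref{eq:alpha} enters — or rather, it turns out to be automatically true for the given closed form regardless of which root $\alpha$ is; the specific value of $\alpha$ is only pinned down by the boundary condition $\ell_{k+1} = L$ (i.e., the $j=k$ case), which I would note gives exactly Equation~\eqref{eq:alpha} when one substitutes $i = k+1$ into the formula and sets $u_{k+1} = L + 2\beta$.

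**Main obstacle.** The only real work is the algebraic verification that the explicit formula for $u_i$ satisfies the linear recursion $u_j - U = \alpha k(u_{j+1}-u_j)$; this is routine but requires care in bookkeeping the $U$-coefficient and the $2\beta$-coefficient separately, since both must match. I expect the $U$-part to reproduce the familiar $k$-search identity from \cite{Lorenz:08} and the $2\beta$-part to be a new but analogous computation; the factor $\bigl(\tfrac1{k\alpha}-\tfrac1k+1\bigr)$ in the $\beta$-coefficient is chosen exactly so that these two verifications are consistent. A secondary subtlety is handling the edge cases $j=0$ and $j=1$ separately (as in the main proof, where $\sigma_0=\rho_0$ and $\sigma_1=\rho_1$), but these collapse to the same base-case check once one uses $\ell_1 = u_1 - 2\beta$ and the definition of $u_1$.
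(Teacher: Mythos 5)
Your proof is correct and follows essentially the same route as the paper's: both reduce the lemma to the algebraic fact that the closed-form thresholds satisfy the balancing equalities exactly, using $\ell_{j+1} = u_{j+1} - 2\beta$. The only difference is presentational --- you verify the equalities by induction via the one-step recursion $u_j - U = \alpha k\,(u_{j+1}-u_j)$, whereas the paper substitutes the closed form into $\sum_{i=1}^j u_i$ and simplifies directly to recover $u_{j+1}$; the underlying computation is the same.
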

\noindent For $\epsilon \rightarrow 0$, the competitive ratio $\DTPRmin/\texttt{OPT}$ is exactly $\alpha$:
\begin{align*}
\forall 0 \leq j \leq k: \;\;\; \frac{\DTPRmin(\sigma_j)}{\texttt{OPT}(\sigma_j)} = \frac{\sum_{i=1}^j u_i + (k-j)U + 2\beta}{k\ell_{j+1} + 2\beta} = \alpha,
\end{align*}
and thus for any sequence $s \in \mathcal{S}$,
\begin{align*}
\forall s \in \mathcal{S}: \;\;\; \frac{\DTPRmin(s)}{k c_{\min}(s) + 2\beta} \leq \alpha.
\end{align*}
Since $\texttt{OPT}(s) \geq k c_{\min}(s) + 2\beta$ for any sequence $s$, this implies that $\DTPRmin$ is $\alpha$-competitive.
\end{proof}

\begin{proof}[Proof of Corollary~\ref{cor:min}]
To show part \textbf{(a)} for \texttt{REGIME-1}, with fixed $k \geq 1$, observe that we can expand the right-hand side of Equation~\eqref{eq:alpha} using the binomial theorem to obtain the following:
\begin{align*}
\frac{U - L - 2\beta}{U \left( 1 - \frac{1}{\alpha} \right) - 2\beta \left(1 - \frac{1}{k} + \frac{1}{k\alpha} \right)} = 1 + \frac{1}{\alpha} + \Theta \left( \alpha^{-2} \right).
\end{align*}
Next, observe that $\alpha^\star$ solving the following expression satisfies $\alpha^\star \geq \alpha \;\; \forall k : k \geq 1$, (i.e. $\alpha^\star$ is an upper bound of $\alpha$): \vspace{-1em}
\begin{align*}
\frac{U - L - 2\beta}{U \left( 1 - \frac{1}{\alpha^\star} \right) - 2\beta \left(1 - \frac{1}{k} + \frac{1}{k\alpha^\star} \right)} = 1 + \frac{1}{\alpha^\star}.
\end{align*}
\\
By solving the above for $\alpha^\star$, we obtain 
$$\alpha \thicksim \alpha^\star = \frac{k\beta}{kL + 2\beta} + \sqrt{\frac{k^2 LU + 2kL\beta + 2kU\beta + 4\beta^2 +k^2\beta^2}{k^2 L^2 + 4kL\beta + 4\beta^2}}.$$
Last, note that as $L \rightarrow 0$, we obtain the following result: $\alpha \thicksim \frac{k}{2} + \sqrt{\frac{kU}{2\beta} + 1 + \frac{k^2}{4}} \approx O\left( k \right)$.\\

To show part \textbf{(b)} for \texttt{REGIME-2}, we first observe that the right-hand side of Equation~\ref{eq:alpha} can be approximated as $\left( 1 + \frac{1}{k\alpha} \right)^k~\approx~e^{1/\alpha}$ when $k \rightarrow \infty$.  Then by taking limits on both sides, we obtain the following:
\begin{align*}
\frac{U - L - 2\beta}{U \left( 1 - \frac{1}{\alpha} \right) - 2\beta\left(1 \right)} = e^{1/\alpha}.
\end{align*}
For simplification purposes, let $\beta = cU/2$, where $c$ is a small constant on the interval $\left(0, \frac{U-L}{U} \right)$.\\  We then obtain the following:
\begin{align*}
\frac{U - L - cU}{U \left( 1 - \frac{1}{\alpha} \right) - cU} = e^{1/\alpha} \Longrightarrow L/U + c - 1 = \left( \frac{1}{\alpha} + c - 1 \right) e^{1/\alpha}.
\end{align*}
By definition of Lambert $W$ function, solving this equation for $\alpha$ obtains the result in Corollary~\ref{cor:min}(b).
\end{proof}

\subsection{Lower Bound Analysis: Proof Sketch for Theorems \ref{thm:lowerboundmin} and \ref{thm:lowerboundmax}} \label{sec:prooflowerbound}

Here we present a proof sketch for the lower bound construction that is used to prove both Theorems \ref{thm:lowerboundmin} and \ref{thm:lowerboundmax}.  We show how to formalize it in the case of Theorem \ref{thm:lowerboundmin} in Appendix~\ref{app:lowerboundmin}, and in the case of Theorem \ref{thm:lowerboundmax} in Appendix \ref{app:lowerboundmax}.

Suppose that $\texttt{ALG}$ is a deterministic online algorithm for \OPR.  The lower bound proofs for both \OPRmin and \OPRmax leverage the same instance, where $\texttt{ALG}$ plays against an \textit{adaptive adversary}. 

To describe the instance, we first need some preliminaries.  Define a sequence of prices $\mathcal{T}_1, \dots, \mathcal{T}_k$, which are the prices the adversary will present to $\texttt{ALG}$.  The ``worst-case value'' that $\texttt{ALG}$ can encounter is defined based on the problem variant.  Since we assume that prices are bounded on the interval $[L, U]$, these values are $U$ for \OPRmin, and $L$ for \OPRmax.

The adversary begins by presenting $\mathcal{T}_1$ to $\texttt{ALG}$, at most $k$ times or until $\texttt{ALG}$ accepts it.  If $\texttt{ALG}$ never accepts $\mathcal{T}_1$, the adversary presents the worst-case value at least $k$ times for the remainder of the sequence.  In the formal proof, we show that this case causes $\texttt{ALG}$ to achieve a competitive ratio of at least $\alpha$ for \OPRmin, or at least $\omega$ for \OPRmax.

If $\texttt{ALG}$ does accept $\mathcal{T}_1$, the adversary continues the sequence by presenting the worst-case value to $\texttt{ALG}$, at most $k$ times \textit{or until $\texttt{ALG}$ switches to reject it}.  This essentially forces $\texttt{ALG}$ to switch immediately after accepting $\mathcal{T}_1$.  In the formal proof, we show that any algorithm which does not switch away immediately achieves a competitive ratio worse than $\alpha$ and $\omega$ for \OPRmin and \OPRmax.

After $\texttt{ALG}$ has switched away, the adversary continues the sequence by presenting $\mathcal{T}_2$ to $\texttt{ALG}$ at most $k$ times or until $\texttt{ALG}$ accepts it.  Again, if $\texttt{ALG}$ never accepts $\mathcal{T}_2$, the adversary presents the worst-case value at least $k$ times for the remainder, and $\texttt{ALG}$ cannot do better than $\alpha$ or $\omega$.

The adversary continues in this fashion, presenting each $\mathcal{T}_i$ at most $k$ times (or until $\texttt{ALG}$ accepts it and the adversary forces $\texttt{ALG}$ to switch away immediately afterward).  Whenever $\texttt{ALG}$ does not accept some $\mathcal{T}_i$ after it is presented $k$ times, the adversary sends the price to the worst-case value for the remainder of the sequence.  If $\texttt{ALG}$ accepts $k$ prices before the end of the sequence, the adversary concludes by presenting the best-case value ($L$ for \OPRmin, $U$ for \OPRmax) at least $k$ times.

In the formal proofs presented in Appendix~\ref{app:lowerbound}, we show that \textit{any} deterministic strategy  that $\texttt{ALG}$ uses to accept prices on this sequence 
achieves a competitive ratio of at least $\alpha$ for \OPRmin, and at least $\omega$ for \OPRmax.

\section{Case Study: Carbon-Aware Temporal Workload Shifting} %
\label{sec:eval}
We now present experimental results for the \DTPR algorithms in the context of the carbon-aware temporal workload shifting problem. We evaluate \DTPRmin~(and \DTPRmax in Appendix~\ref{appendix:maxExp}) as compared to existing algorithms from the literature that have been adapted for \OPR.

\subsection{Experimental Setup}  \label{sec:expsetup}

We consider a carbon-aware load shifting system that operates on a hypothetical data center. An algorithm is given a deferrable and interruptible job that takes $k$ time slots to complete, along with a deadline $T \geq k$, such that the job must be completed at most $T$ slots after its arrival.  The objective is to selectively run units of the job such that the total carbon emissions are minimized while still completing the job before its deadline.

For the minimization variant (\OPRmin) of the experiments, we consider \textit{carbon emissions intensities}, as the price values.  At each time step $t$, the electricity supply has a carbon intensity $c_t$, i.e., if the job is being processed during the time step $t$ ($x_t = 1$), the data center's carbon emissions during that time step are proportional to $c_t$.  If the job is \textit{not} being processed during the time step $t$ ($x_t = 0$), we assume for simplicity that carbon emissions in the idle state are negligible and essentially $0$.
To model the combined computational overhead of interrupting, checkpointing, and restarting the job, the algorithm incurs a fixed switching cost of $\beta$ whenever $x_{t-1} \not = x_t$, whose values are selected relative to the price values.

\paragraph{Carbon data traces}
We use real-world carbon traces from Electricity Maps~\cite{electricity-map}, which provide time-series information about the \textit{average carbon emissions intensity} of the electric grid.
We use traces from three different regions: the Pacific Northwest of the U.S., New Zealand, and Ontario, Canada.  
The data is provided at an hourly granularity and includes the current average carbon emissions intensity in grams of CO$_2$ equivalent per kilowatt-hour (gCO$_2$eq/kWh), and the percentage of electricity being supplied from carbon-free sources. In Figure~\ref{fig:traceVis} (in Appendix~\ref{appendix:maxExp}), we plot three representative actual traces for carbon intensity over time for a 96-hour period in each region.

\paragraph{Parameter settings}
We test for time horizons ($T$) of 48 hours, 72 hours, and 96 hours.  The chosen time horizon represents the time at which the job with length $k$ must be completed.  As is given in the carbon trace data, we consider time slots of one hour.  

The online algorithms we use in experiments take $L$ and $U$ as parameters for their threshold functions. To set these parameters, we examine the entire carbon trace for the current location.  For the Pacific NW trace and the Ontario trace, these values represent lower and upper bounds of the carbon intensity values for a full year.  For the New Zealand trace, these values are a lower and upper bound for the values during a month of data, which is reflected by a smaller fluctuation ratio.
We set $L$ and $U$ to be the minimum and maximum observed carbon intensity over the entire trace.  
\begin{table}[t]
\caption{Summary of carbon trace data sets} \label{tab:characteristics}
\begin{tabular}{|l|l|l|l|}
\hline
Location & \textbf{PNW, U.S.} & \textbf{New Zealand} & \textbf{Ontario, Canada} \\ \hline\hline
Number of Data Points    & 10,144         & 1,324     & 17,898 \\ \hline
Max. \footnotesize Carbon Intensity ($U$) \normalsize & 648 \footnotesize gCO$_2$eq/kWh \normalsize & 165 \footnotesize gCO$_2$eq/kWh \normalsize & 181 \footnotesize gCO$_2$eq/kWh \normalsize \\ \hline
Min. \footnotesize Carbon Intensity ($L$) \normalsize & 18 \footnotesize gCO$_2$eq/kWh \normalsize & 54 \footnotesize gCO$_2$eq/kWh \normalsize & 15 \footnotesize gCO$_2$eq/kWh \normalsize \\ \hline\hline
Duration (mm/dd/yy) & 04/20/22 - 12/06/22 & 10/19/21 - 11/16/21 & 10/19/21 - 12/06/22\\ \hline
\end{tabular}
\vspace{-1.5em}
\end{table}

To generate each input sequence, a contiguous segment of size $T$ is randomly sampled from the given carbon trace.  
In a few experiments, we simulate greater \textit{volatility} over time by ``scaling up'' each price's deviation from the mean.  First, we compute the average value over the entire sequence.  Next, we compute the difference between each price and this average.  Each of these differences is scaled by a \textit{noise factor} of $m \geq 1$.  Finally, new carbon values are computed by summing each scaled difference with the average.  If $m = 1$, we recover the same sequence, and if $m > 1$, any deviation from the mean is proportionately amplified.  Any values which become negative after applying this transformation are truncated to $0$.  This technique allows us to evaluate algorithms under different levels of volatility.  Performance in the presence of greater carbon volatility is important, as on-site renewable generation is seeing greater adoption as a supplementary power source for data centers~\cite{radovanovic2022carbon, acun2022holistic}.

\paragraph{Benchmark algorithms}

To evaluate the performance of \DTPR, we use a dynamic programming approach to calculate the offline optimal solution for each given sequence and objective, which allows us to report the empirical competitive ratio for each tested algorithm. 
We compare \texttt{DTPR} against two categories of benchmark algorithms, which are summarized in Table \ref{tab:baselineAlgos}.  

The first category of benchmark algorithms is \textit{carbon-agnostic} algorithms, which run the jobs during the first $k$ time slots in order, i.e., accepting prices $c_1, \dots, c_k$.  This approach incurs the minimal switching cost of $2\beta$, because it does not interrupt the job while it is being processed.  The carbon-agnostic approach simulates the behavior of a scheduler that runs the job to completion as soon as it is submitted, without any focus on reducing carbon emissions.  Note that the performance of this approach significantly varies based on the randomly selected sequence, since it will perform well if low-carbon electricity is available in the first few slots, and will perform poorly if the first few slots are high-carbon.

We also compare \texttt{DTPR} against \textit{switching-cost-agnostic} algorithms, which only consider carbon cost.  We have two algorithms of this type, each drawing from existing online search methods in the literature.  Although they do not consider the switching cost in their design, they still incur a switching cost whenever their decision in adjacent time slots differs.

The first such algorithm is a \textit{constant threshold algorithm}, which uses the $\sqrt{UL}$ threshold value first presented for online search in \cite{ElYaniv:01}.  In our minimization experiments, this algorithm runs the workload during the first $k$ time slots where the carbon intensity is at most $\sqrt{UL}$.  

The other switching-cost-agnostic algorithm tested is the $k$-search algorithm shown by \cite{Lorenz:08} and described in Section \ref{sec:OTA}.  The $k$-min search algorithm chooses to run the $i$th hour of the job during the first time slot where the carbon intensity is at most $\Phi_i$.  

\begin{table}[]
\caption{Summary of algorithms tested in our experiments} \label{tab:baselineAlgos}
\begin{tabular}{|l|l|l|l|}
\hline
\footnotesize \textbf{Algorithm} & \footnotesize \textbf{Carbon-aware} & \footnotesize \textbf{Switching-aware} & \footnotesize \textbf{Description} \\ \hline\hline
\texttt{OPT} (offline)      & \texttt{YES}          & \texttt{YES}                  & \footnotesize Optimal offline solution \normalsize               \\ \hline
Carbon-Agnostic    & \texttt{NO}           & \texttt{YES}                  & \footnotesize Runs job in the first $k$ time slots \normalsize              \\ \hline
Const. Threshold & \texttt{YES}          & \texttt{NO}                   & \footnotesize Runs job if  carbon meets threshold $\sqrt{UL}$ \normalsize               \\ \hline
$k$-search           & \texttt{YES}          & \texttt{NO}                   & \footnotesize Runs $i$th slot of job if carbon meets threshold $\Phi_i$         \\ \hline
\DTPR           & \texttt{YES}          & \texttt{YES}                   & \footnotesize This work (algorithms proposed in Section~\ref{sec:dtpr})\normalsize              \\ \hline 
\end{tabular}
\end{table}

\subsection{Experimental Results} \label{sec:expresults}

We now present our experimental results.  
Our focus is on the empirical competitive ratio (a lower competitive ratio is better).  We report the performance of all algorithms for each experimental setting, in each tested region. Throughout the minimization experiments, we observe that \DTPRmin outperforms the benchmark algorithms.  The 95th percentile worst-case empirical competitive ratio achieved by \DTPRmin is a $48.2$\% improvement on the carbon-agnostic method, a $15.6$\% improvement on the $k$-min search algorithm, and a $14.4$\% improvement on the constant threshold algorithm.

\begin{figure*}[t]
    \begin{center}
        \minipage{0.7\textwidth}
        \includegraphics[width=\linewidth]{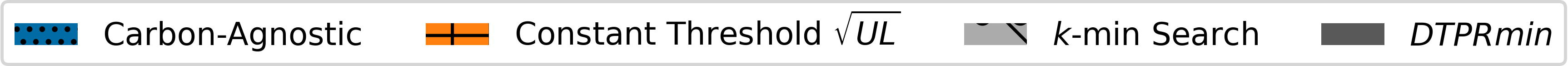}
        \endminipage\hfill\\
    	\minipage{0.33\textwidth}
    	\includegraphics[width=\linewidth]{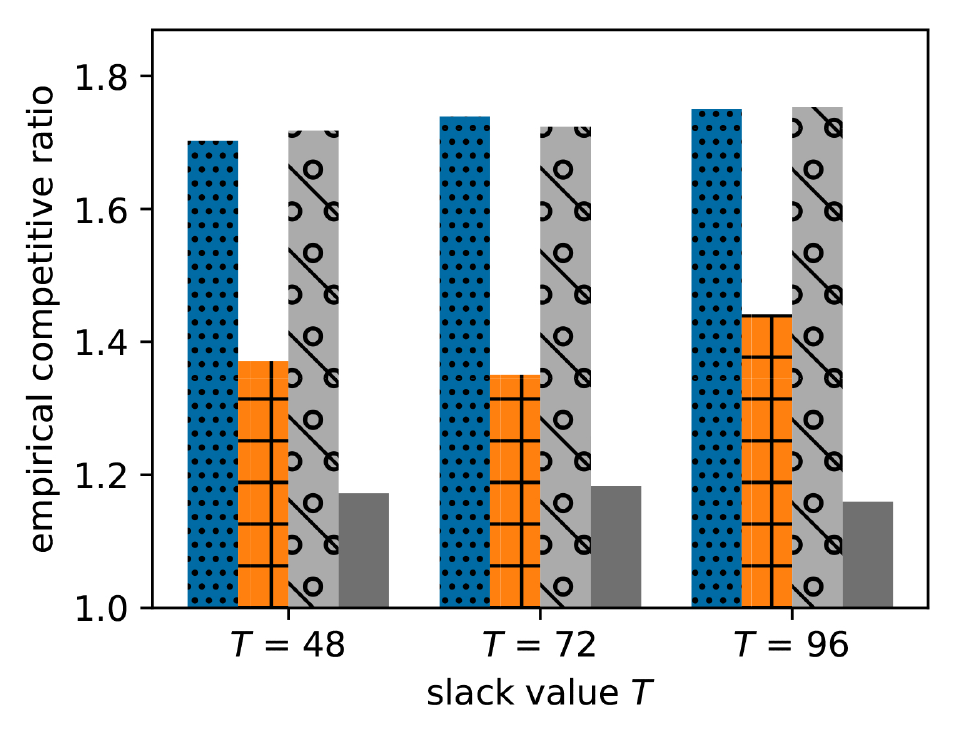}\vspace{-1.2em}
        \caption*{(a) Ontario, Canada}
    	\endminipage\hfill
    	\minipage{0.33\textwidth}
    	\includegraphics[width=\linewidth]{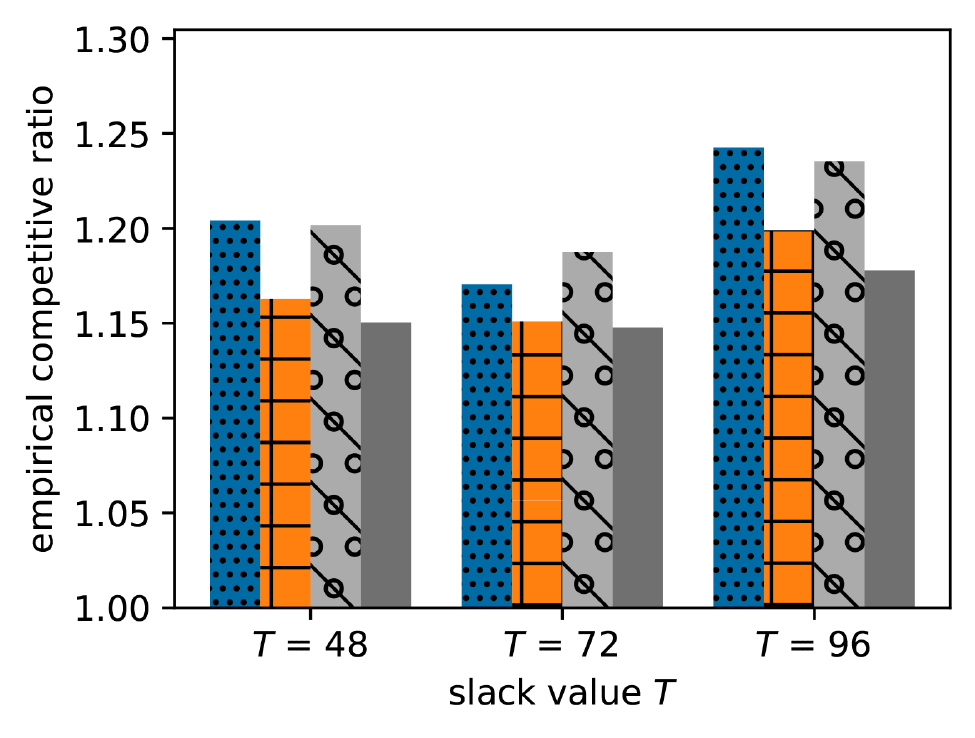}\vspace{-1.2em}
        \caption*{(b) U.S. Pacific Northwest}
    	\endminipage\hfill
        \minipage{0.33\textwidth}
    	\includegraphics[width=\linewidth]{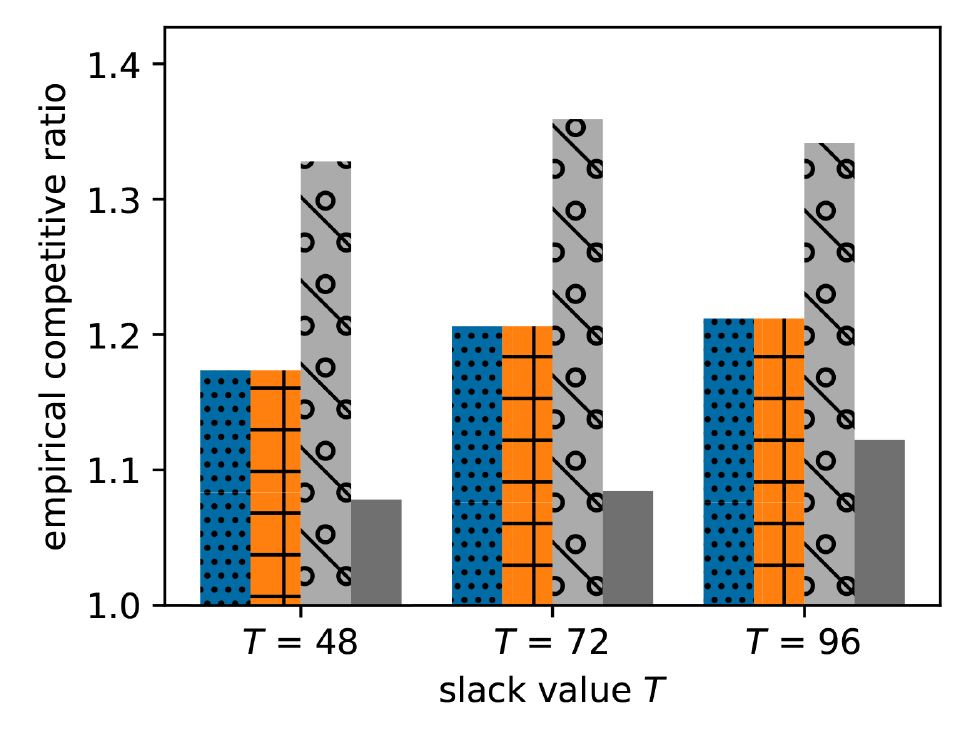}\vspace{-1.2em}
        \caption*{(c) New Zealand}
    	\endminipage\hfill
        \caption{Experiments for three distinct \textit{slack values}, where $T \in \{48, 72, 96\}$.\\ (a): Ontario, Canada carbon trace, with $\theta = 12.0\bar{6}$  \;\; (b): U.S. Pacific Northwest carbon trace, with $\theta = 36$\\  (c): New Zealand carbon trace, with $\theta = 3.0\bar{5}$}
        \label{fig:slack}
    \end{center}
\end{figure*}

\begin{figure*}[t]
    \begin{center}
    \minipage{0.7\textwidth}
    \includegraphics[width=\linewidth]{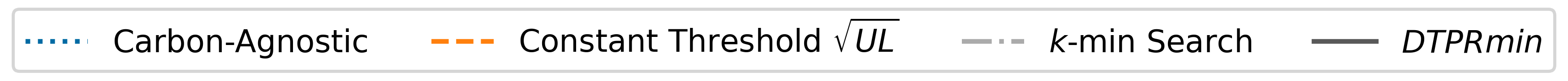}
    \endminipage\hfill\\
	\minipage{0.24\textwidth}
	\includegraphics[width=\linewidth]{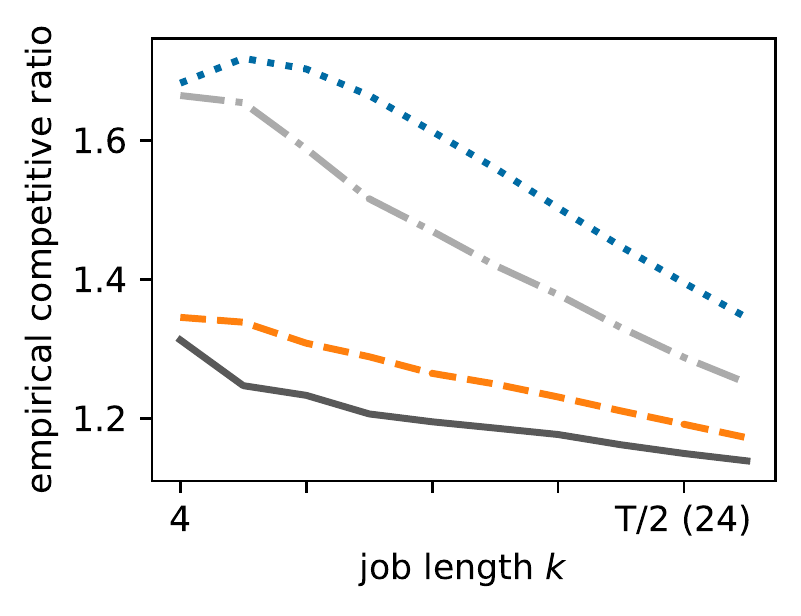}\vspace{-1.2em}
    \caption*{(a) Changing $k$}
	\endminipage\hfill
	\minipage{0.24\textwidth}
	\includegraphics[width=\linewidth]{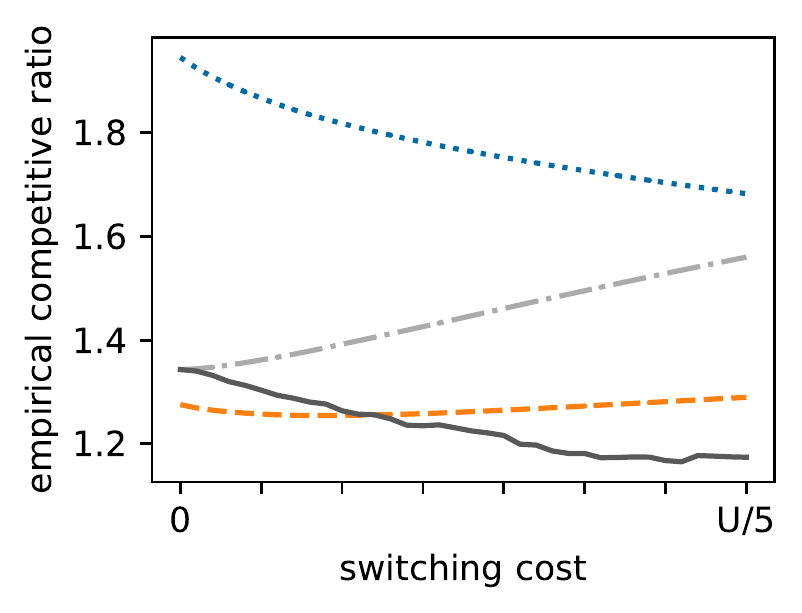}\vspace{-1.2em}
    \caption*{(b) Changing $\beta$}
	\endminipage\hfill
    \minipage{0.24\textwidth}
	\includegraphics[width=\linewidth]{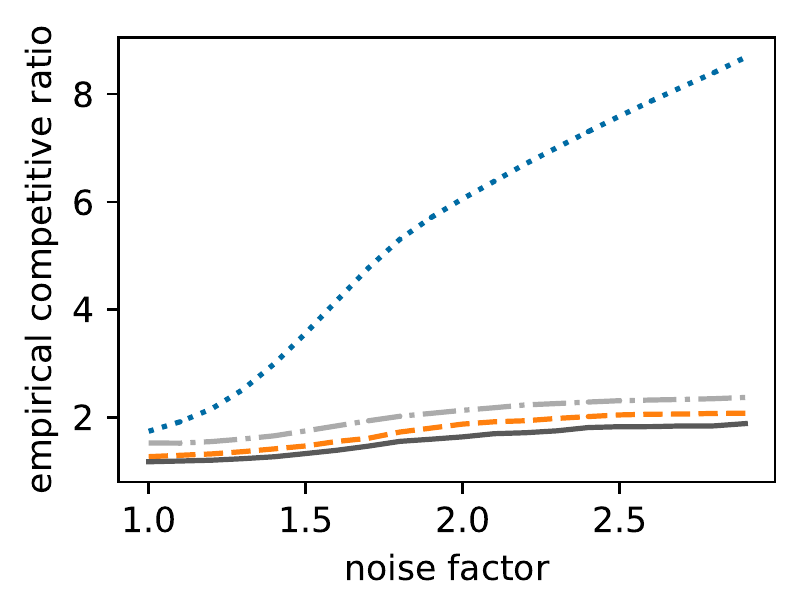}\vspace{-1.2em}
    \caption*{(c) Changing volatility}
	\endminipage\hfill
    \minipage{0.24\textwidth}
	\includegraphics[width=\linewidth]{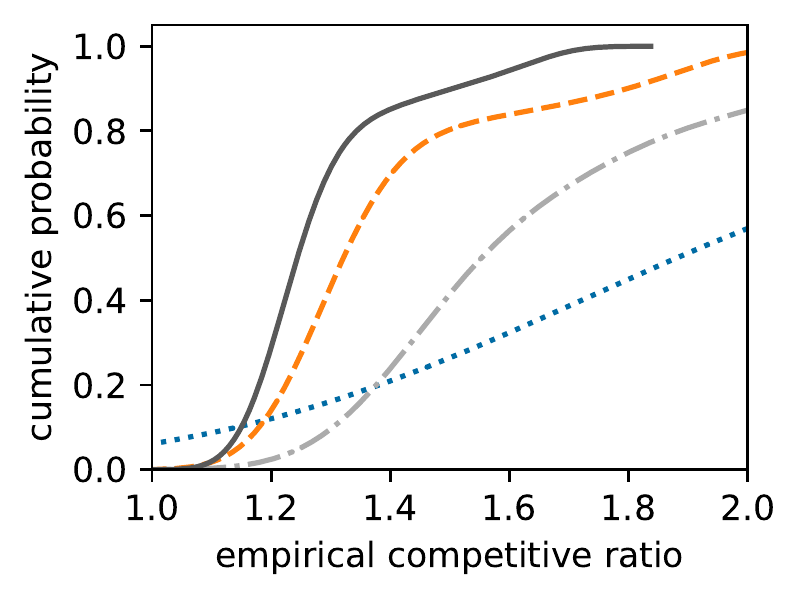}\vspace{-1.2em}
    \caption*{(d) CDF} 
	\endminipage\hfill
    \caption{Experiments on Ontario, Canada carbon trace, with $\theta = 12.0\bar{6}$, and $T=48$.\\  (a): Changing job length $k$ w.r.t. time horizon $T$ ($x$-axis), vs. competitive ratio  (b): Changing switching cost $\beta$ w.r.t. $U$ ($x$-axis), vs. competitive ratio  (c): Different volatility levels w.r.t. $U$ ($x$-axis), vs. competitive ratio\\ (d): Cumulative distribution function of competitive ratios}
    \label{fig:caTrace}
    \end{center}
\end{figure*}

In Figure \ref{fig:slack}, we show results for three different values of time horizon $T$ in each carbon trace, with fixed $\beta$, fixed $k = \lceil T/6 \rceil$, and no added volatility.  Although our experiments test three distinct values for $T$, we later observe that the \textit{ratio between $k$ and $T$} is the primary factor which changes the observed performance of the algorithms we test; in this figure, \DTPR and the benchmark algorithms compare very similarly on the same carbon trace for different $T$ values.  As such, we set $T = 48$ in the rest of the experiments in this section for brevity.  This represents a \textit{slack value} of $48$ hours.

In the first experiment, we test all algorithms for different job lengths $k$ in the range from $4$ hours to $T/2$ ($24$ hours).  The switching cost $\beta$ is non-zero and fixed, and no volatility is added to the carbon trace.  By testing different values for $k$, this experiment tests different ratios between the workload length and the slack provided to the algorithm.  In Figures \ref{fig:caTrace}(a), \ref{fig:usTrace}(a), and \ref{fig:nzTrace}(a), we show that the observed competitive ratio of \DTPRmin outperforms the benchmark algorithms, and it compares particularly favorably for \textit{short} job lengths.  Averaging over all regions and job lengths, the competitive ratio achieved by \DTPRmin is a $11.4$\% improvement on the carbon-agnostic method, a $14.0$\% improvement on the $k$-min search algorithm, and a $5.5$\% improvement on the constant threshold algorithm.

\begin{figure*}[t]
    \begin{center}
    \minipage{0.7\textwidth}
    \includegraphics[width=\linewidth]{img/legend.png}
    \endminipage\hfill\\
	\minipage{0.24\textwidth}
	\includegraphics[width=\linewidth]{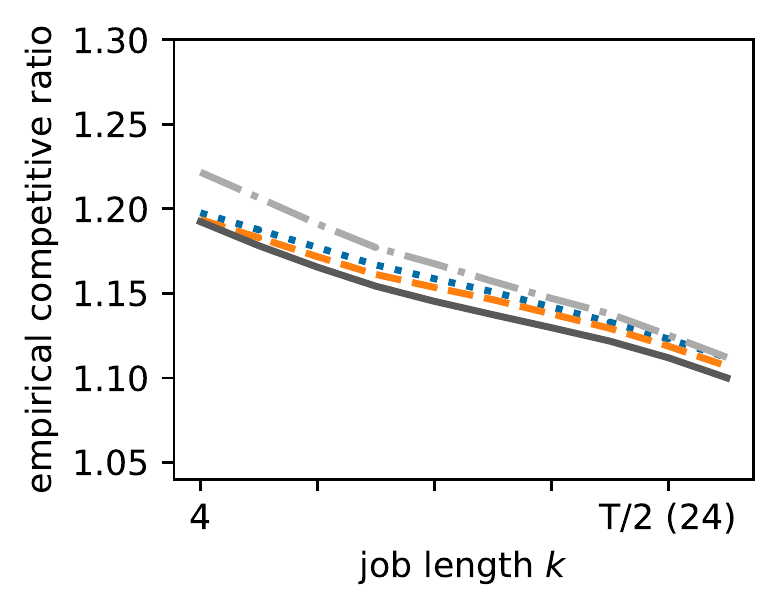}\vspace{-1.2em}
    \caption*{(a) Changing $k$}
	\endminipage\hfill
	\minipage{0.24\textwidth}
	\includegraphics[width=\linewidth]{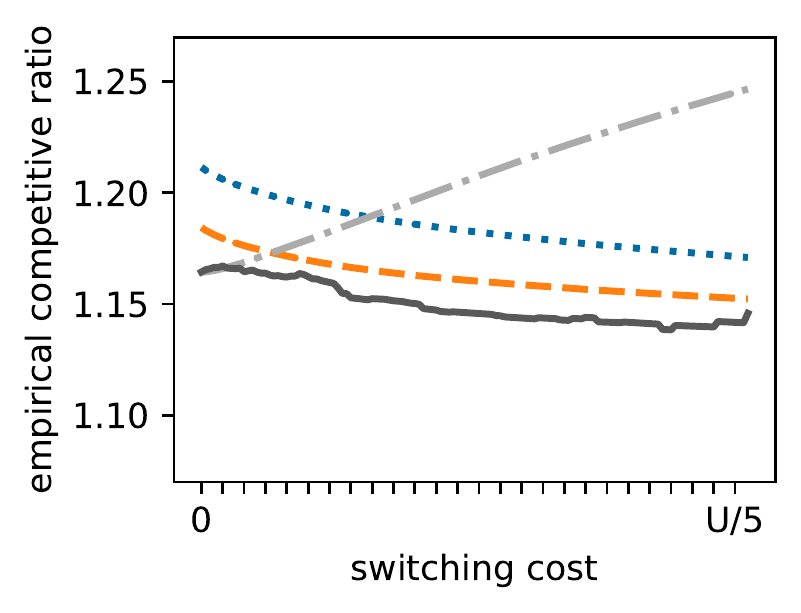}\vspace{-1.2em}
    \caption*{(b) Changing $\beta$}
	\endminipage\hfill
    \minipage{0.24\textwidth}
	\includegraphics[width=\linewidth]{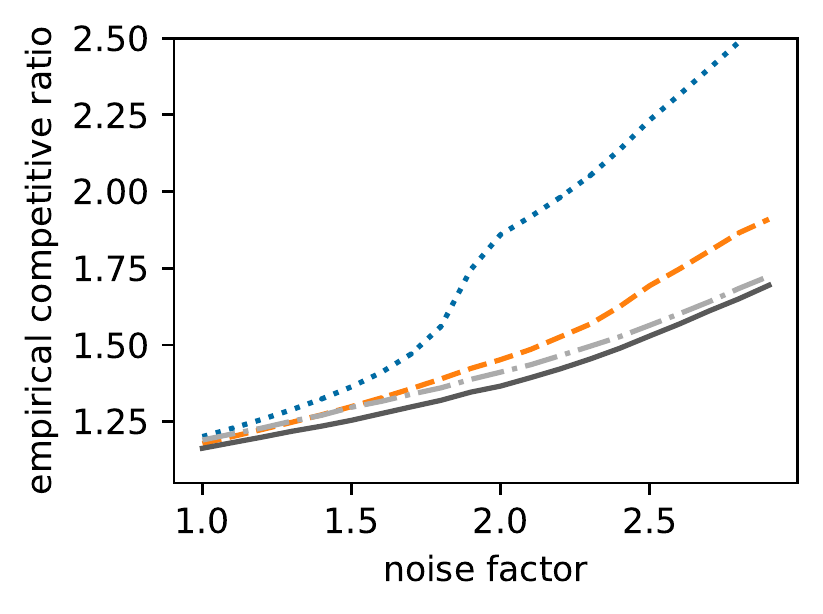}\vspace{-1.2em}
    \caption*{(c) Changing volatility}
	\endminipage\hfill
    \minipage{0.24\textwidth}
	\includegraphics[width=\linewidth]{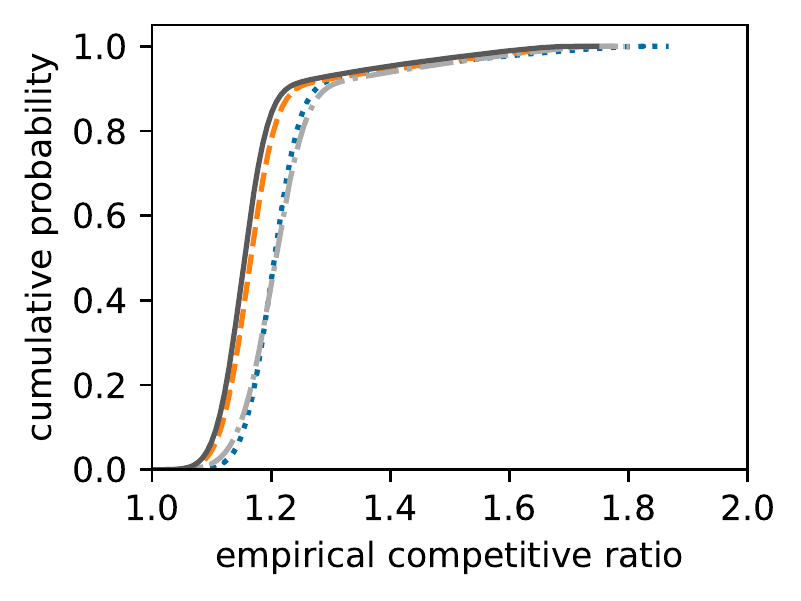}\vspace{-1.2em}
    \caption*{(d) CDF} %
	\endminipage\hfill
    \caption{Experiments on U.S. Pacific Northwest carbon trace, with $\theta = 36$, and $T=48$.\\  (a): Changing job length $k$ w.r.t. time horizon $T$ ($x$-axis), vs. competitive ratio  (b): Changing switching cost $\beta$ w.r.t. $U$ ($x$-axis), vs. competitive ratio  (c): Different volatility levels w.r.t. $U$ ($x$-axis), vs. competitive ratio\\ (d): Cumulative distribution function of competitive ratios}
    \label{fig:usTrace}
    \end{center}
\end{figure*}

In the second experiment, we test all algorithms for different switching costs $\beta$ in the range from $0$ to $U/5$.  The job length $k$ is set to $10$ hours, and no volatility is added to the carbon trace.  By testing different values for $\beta$, this experiment tests how an increasing switching cost impacts the performance of \DTPRmin with respect to other algorithms which do not explicitly consider the switching cost.  In Figures \ref{fig:caTrace}(b), \ref{fig:usTrace}(b), and \ref{fig:nzTrace}(b), we show that the observed competitive ratio of \DTPRmin outperforms the benchmark algorithms for most values of $\beta$ in all regions.  Unsurprisingly, the carbon-agnostic technique (which incurs minimal switching cost) performs better as $\beta$ grows.  While the constant threshold algorithm has relatively consistent performance, the $k$-min search algorithm performs noticeably worse as $\beta$ grows.  Averaging over all regions and switching cost values, the competitive ratio achieved by \DTPRmin is a $18.2$\% improvement on the carbon-agnostic method, a $8.9$\% improvement on the $k$-min search algorithm, and a $4.1$\% improvement on the constant threshold algorithm.

\begin{figure*}[t]
    \begin{center}
    \minipage{0.7\textwidth}
    \includegraphics[width=\linewidth]{img/legend.png}
    \endminipage\hfill\\
	\minipage{0.24\textwidth}
	\includegraphics[width=\linewidth]{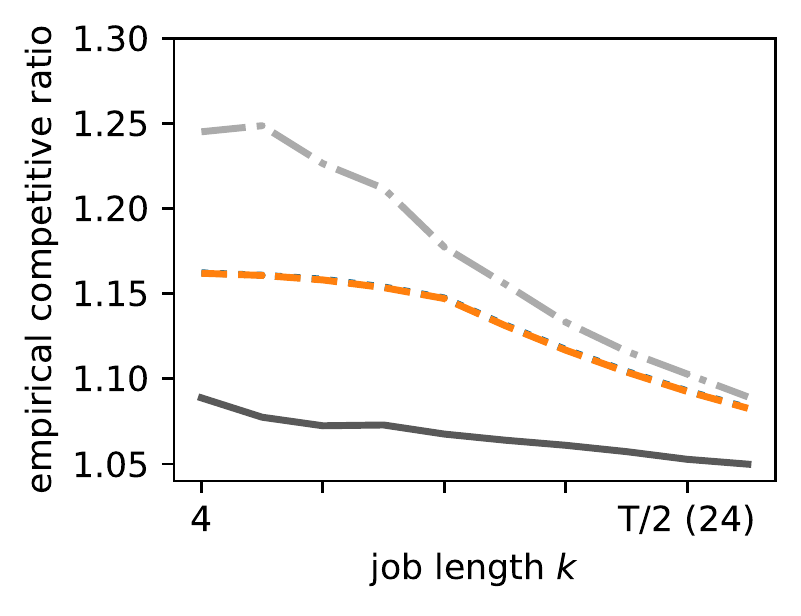}\vspace{-1.2em}
    \caption*{(a) Changing $k$}
	\endminipage\hfill
	\minipage{0.24\textwidth}
	\includegraphics[width=\linewidth]{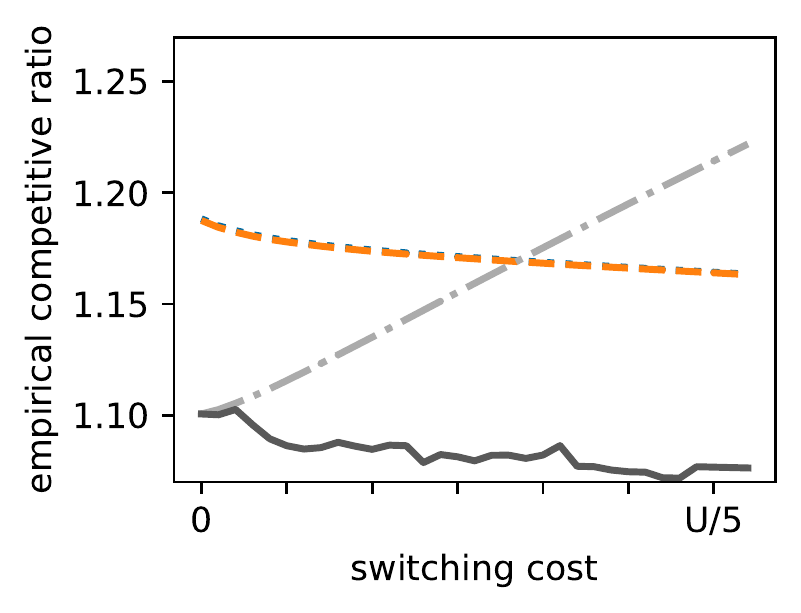}\vspace{-1.2em}
    \caption*{(b) Changing $\beta$}
	\endminipage\hfill
    \minipage{0.24\textwidth}
	\includegraphics[width=\linewidth]{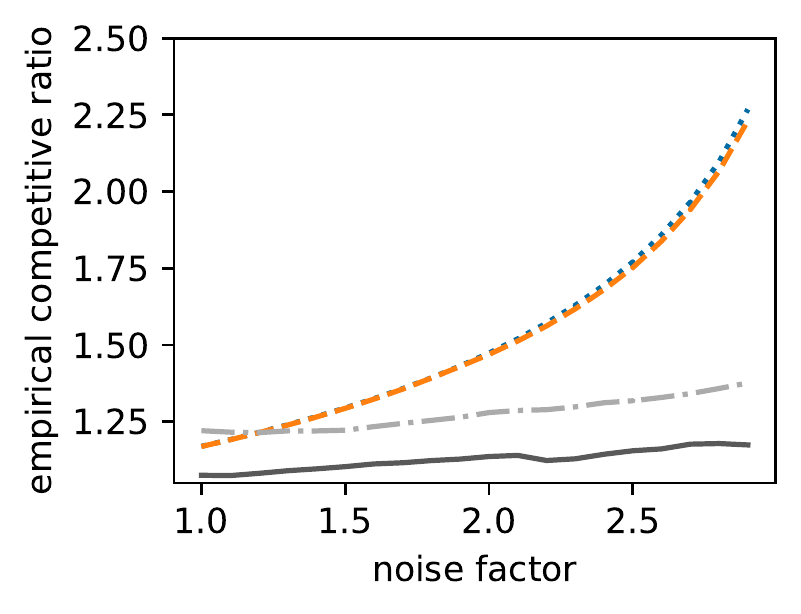}\vspace{-1.2em}
    \caption*{(c) Changing volatility}
	\endminipage\hfill
    \minipage{0.24\textwidth}
	\includegraphics[width=\linewidth]{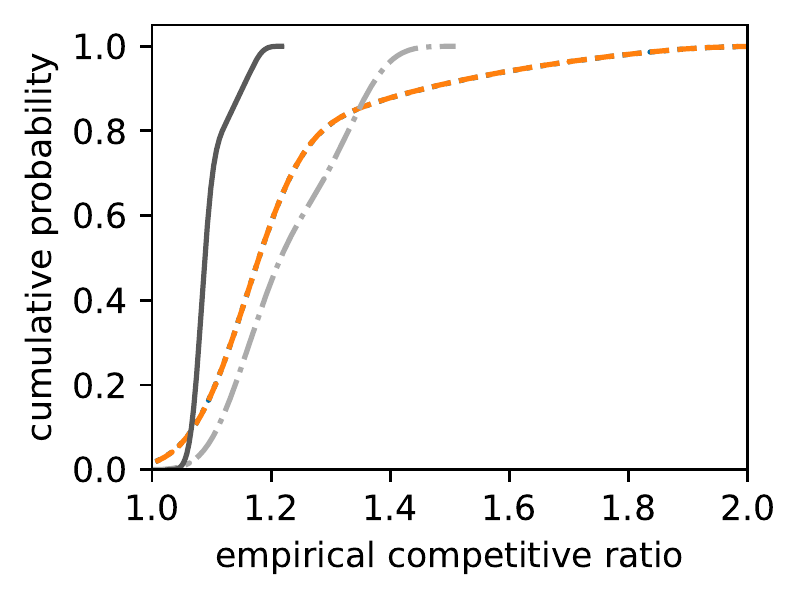}\vspace{-1.2em}
    \caption*{(d) CDF} %
	\endminipage\hfill
    \caption{Experiments on New Zealand carbon trace, with $\theta = 3.0\bar{5}$, and $T=48$.\\  (a): Changing job length $k$ w.r.t. time horizon $T$ ($x$-axis), vs. competitive ratio  (b): Changing switching cost $\beta$ w.r.t. $U$ ($x$-axis), vs. competitive ratio  (c): Different volatility levels w.r.t. $U$ ($x$-axis), vs. competitive ratio\\ (d): Cumulative distribution function of competitive ratios}
    \label{fig:nzTrace}
    \end{center}
\end{figure*}

In the final experiment, we test all algorithms on sequences with different volatility.  The job length $k$ and switching cost $\beta$ are both fixed.  We add volatility by setting a \textit{noise factor} from the range $1.0$ to $3.0$.   By testing different values for this volatility, this experiment tests how each algorithm handles larger fluctuations in the carbon intensity of consecutive time steps.  In Figures \ref{fig:caTrace}(c), \ref{fig:usTrace}(c), and \ref{fig:nzTrace}(c), we show that the observed competitive ratio of \DTPRmin outperforms the benchmark algorithms for all noise factors in all regions.  Intuitively, higher volatility values cause the online algorithms to perform worse in general.  Averaging over all regions and noise factors, the competitive ratio achieved by \DTPRmin is a $53.6$\% improvement on the carbon-agnostic method, a $13.5$\% improvement on the $k$-min search algorithm, and a $14.3$\% improvement on the constant threshold algorithm.

By averaging over all experiments for a given region, we obtain the cumulative distribution function plot for each algorithm's competitive ratio in Figures \ref{fig:caTrace}(d), \ref{fig:usTrace}(d), and \ref{fig:nzTrace}(d).  Compared to the carbon-agnostic, constant threshold, and $k$-min search algorithms, \DTPRmin achieves a lower average empirical competitive ratio distribution for all tested regions.  Across \textit{all regions} at the 95th percentile, \DTPRmin achieves a worst-case empirical competitive ratio of $1.40$.  This represents a $48.2$\% improvement over the \textit{carbon-agnostic} algorithm, and improvements of $15.6$\% and $14.4$\% over the $k$-min search and constant threshold \textit{switching-cost-agnostic} algorithms, respectively.

\section{Related Work} 
\label{sec:rel}

This paper contributes to three lines of work: (i) work on online search and related problems, e.g., $k$-search, one-way trading, and online knapsack; (ii) work on online optimization problems with switching costs, e.g., metrical task systems and convex function chasing; and (iii) work on carbon-aware load shifting.  We describe the relationship to each below.

\paragraph{Online Search.}
The \OPR problem is closely related to the online $k$-search problem~\cite{Lee:22,Lorenz:08}, as discussed in the introduction and Section~\ref{sec:OTA}. It also has several similar counterparts, including online conversion problems such as one-way trading~\cite{ElYaniv:01,mohr2014online,Sun:21,Damaschke:07} and online knapsack problems~\cite{Zhou:08,sun2022online,Yang2021Competitive}, with practical applications to stock trading~\cite{Lorenz:08}, cloud pricing~\cite{zhang2017optimal}, electric vehicle charging~\cite{sun2020competitive}, etc. The $k$-search problem can be viewed as an integral version of the online conversion problem, while the general online conversion problem allows continuous one-way trading. The basic online knapsack problem studies how to pack arriving items of different sizes and values into a knapsack with limited capacity, while its extensions to item departures~\cite{zhang2017optimal,sun2022online} and multidimensional capacity~\cite{Yang2021Competitive} have also been studied recently. 
Another line of research leverages machine learning predictions of unknown future inputs to design learning-augmented online algorithms for online $k$-search~\cite{Lee:22} and online conversion~\cite{sun2022online}. However, to the best of our knowledge, none of these works consider the switching cost of changing decisions.  Thus, this work is the first to incorporate switching costs to the $k$-search framework.

\paragraph{Metrical Task Systems.}

The metrical task systems (\texttt{MTS}) problem was introduced by Borodin et al. in \cite{Borodin:92}. 
Several decades of progress on upper and lower bounds on the competitive ratio of \texttt{MTS} recently culminated with a tight bound of $\Theta(\log^2 n)$ for the competitive ratio of \texttt{MTS} on an arbitrary $n$-point metric space, with $\Theta(\log n)$ being possible on certain metric spaces such as trees \cite{bubeckMetricalTaskSystems2021, bubeckRandomizedServerConjecture2022}. Several modified forms of \texttt{MTS} have also seen significant attention in the literature, such as smoothed online convex optimization (\texttt{SOCO}) and convex function chasing (\texttt{CFC}), in which the decision space is an $n$-dimensional normed vector space and cost functions are restricted to be convex \cite{FriedmanLinial:93, Lin:12}. The best known upper and lower bounds on the competitive ratio of \texttt{CFC} are $O(n)$ and $\Omega(\sqrt{n})$, respectively, in $n$-dimensional Euclidean spaces \cite{bubeckChasingNestedConvex2019, sellkeChasingConvexBodies2020}. However, algorithms with competitive ratios independent of dimension can be obtained for certain special classes of functions, such as $\alpha$-polyhedral functions \cite{chenSmoothedOnlineConvex2018a}. A number of recent works have also investigated the design of learning-augmented algorithms for various cases of \texttt{CFC}/\texttt{SOCO} and \texttt{MTS} which exploit the performance of machine-learned predictions of the optimal decisions \cite{antoniadisOnlineMetricAlgorithms2020a, Christianson:22, christianson2023mts, liExpertCalibratedLearningOnline2022, ruttenSmoothedOnlineOptimization2022}. The key characteristic distinguishing \OPR from \texttt{MTS} and its variants is the presence of a terminal deadline constraint. None of the algorithms for \texttt{MTS}-like problems are designed to handle such long-term constraints while maintaining any sort of competitive guarantee.

\paragraph{Carbon-Aware Temporal Workload Shifting.}  The goal of shifting workloads in time to allow more sustainable operations of data centers has been of interest for more than a decade, e.g.,~\cite{gupta2019combining,liu2011greening,liu2012renewable,lin2012dynamic}.  Traditionally, such papers have used models that build on one of convex function chasing, $k$-search, or online knapsack to design algorithms; however such models do not capture both the switching costs and long-term deadlines that are crucial to practical deployment.  
In recent years, the load shifting literature has focused specifically on reducing the carbon footprint of operations, e.g.,~\cite{radovanovic2022carbon,acun2022holistic,bashir2021enabling,Wiesner:21}.  Perhaps most related to this paper is~\cite{Wiesner:21}, which explores the problem of carbon-aware temporal workload shifting and proposes a threshold-based algorithm that suspends the job when the carbon intensity is higher than a threshold value and resumes it when it drops below the threshold. However, it does not consider switching nor does it provide any deadline guarantees. 
Other recent work on carbon-aware temporal shifting seeks to address the resultant increase in job completion times. 
In ~\cite{Souza:23}, authors leverage the pause and resume approach to reduce the carbon footprint of ML training and high-performance computing applications such as BLAST~\cite{blast}. 
However, instead of resuming at normal speed ($1\times$) during the low carbon intensity periods, their applications resume operation at a faster speed ($m\times$), where the scale factor $m$ depends on the application characteristics.
It uses a threshold-based approach to determine the low carbon intensity periods but does not consider switching costs or provide any deadline guarantees. An interesting future direction is to extend the \DTPR algorithms to consider the ability to scale up speed after resuming jobs.

\section{Conclusion}
\label{sec:conclusion}
Motivated by carbon-aware load shifting, we introduce and study the online pause and resume problem (\OPR), which bridges gaps between several related problems in online optimization.  To our knowledge, it is the first online optimization problem that includes both long-term constraints and switching costs. Our main results provide optimal online algorithms for the minimization and maximization variants of this problem, as well as lower bounds for the competitive ratio of any deterministic online algorithm.  Notably, our proposed algorithms match existing optimal results for the related $k$-search problem when the switching cost is $0$, and improve on the $k$-min search competitive bounds for non-zero switching cost. The key to our results is a novel double threshold algorithm that we expect to be applicable in other online problems with switching costs. 

There are a number of interesting directions in which to continue the study of \OPR.  We have highlighted the application of \OPR to carbon-aware load shifting, but \OPR also applies to many other problems where pricing changes over time and frequent switching is undesirable.  Pursuing these applications is important. Theoretically, there are several interesting open questions.  First, considering the target application of carbon-aware load shifting, some workloads are \textit{highly parallelizable}~\cite{Souza:23}, which adds another dimension of scaling to the problem (i.e., instead of choosing to run 1 unit of the job in each time slot, the online player must decide how many units to allocate at each time slot).  This makes the theoretical problem more challenging, and is an important consideration for future work.  Additionally, very recent work has incorporated machine-learned advice to achieve better performance on related online problems, including $k$-search \cite{Lee:22,Sun:21}, \texttt{CFC/SOCO} \cite{Christianson:22, liExpertCalibratedLearningOnline2022}, and \texttt{MTS} \cite{antoniadisOnlineMetricAlgorithms2020a, christianson2023mts, ruttenSmoothedOnlineOptimization2022}. Designing learning-augmented algorithms for \OPR is a very promising line of future work, particularly considering applications such as carbon-aware load shifting, where predictions can significantly improve the algorithm's understanding of the future in the best case.

\newpage
\bibliographystyle{alpha}
\bibliography{main}

\newpage

\appendix

\section{Case Study Results for \DTPRmax Algorithm}
\label{appendix:maxExp}

\begin{figure*}[t]
    \minipage{\textwidth}
    \includegraphics[width=\linewidth]{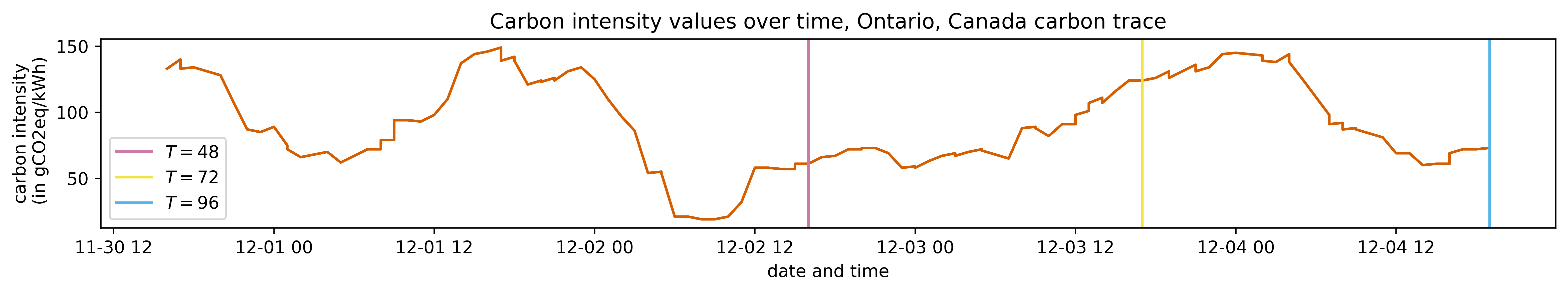}
    \endminipage\hfill\\
	\minipage{\textwidth}
    \includegraphics[width=\linewidth]{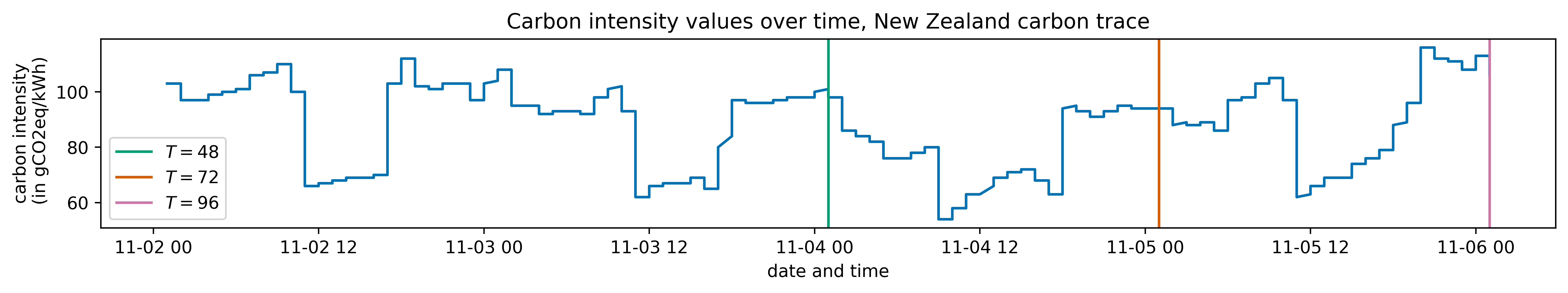}
    \endminipage\hfill\\
    \minipage{\textwidth}
    \includegraphics[width=\linewidth]{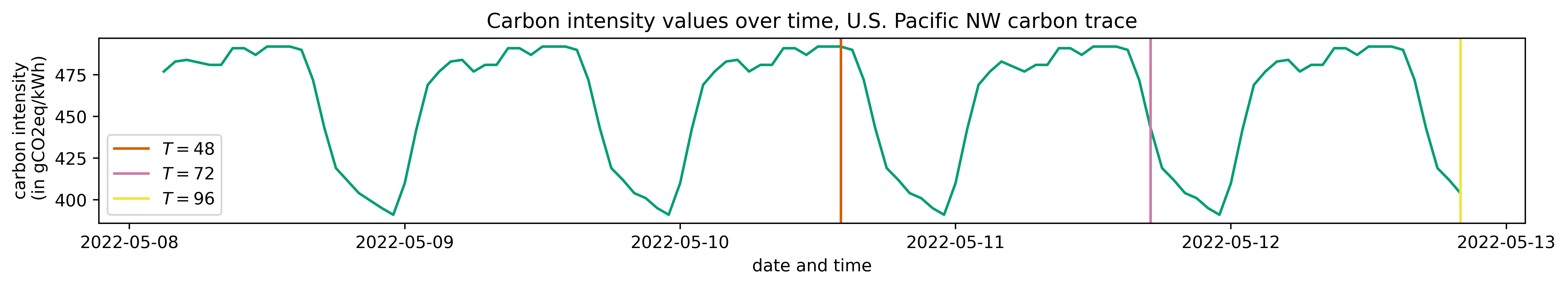}
    \endminipage\hfill
    \vspace{-1em}
    \caption{Carbon intensity (in gCO$_2$eq/kWh) values plotted for each region tested in our numerical experiments, with one-hour granularity.  We plot a representative random interval of $96$ hours, with vertical lines demarcating the different values for $T$ (time horizon) tested in our experiments.  In all regions, carbon values roughly follow a diurnal (daily cycle) pattern.  Actual values and observed intensities significantly vary in different regions.}
    \label{fig:traceVis}
\end{figure*}

\begin{algorithm}
\caption{Double Threshold Pause and Resume for \OPRmax (\DTPRmax)}\label{alg:max}
\begin{algorithmic}[1]
\Require threshold values $\{u_i\}_{i \in [1, k]}$ and $\{\ell_i\}_{i \in [1, k]}$ defined in Equation~\eqref{eq:maxthres}, deadline $T$
\Ensure online decisions $\{x_t\}_{t \in [1, T]}$
\State \textbf{initialize: } i = 1;
\While{price $c_t$ arrives \textbf{and} $i \leq k$}
\If{$(k - i) \geq (T - t)$} \Comment{\footnotesize\texttt{close to the deadline $T$, we must accept remaining prices}\normalsize}
    \State price $c_t$ is accepted, set $x_t = 1$ \label{line:max-force}
\ElsIf{$x_{t-1} = 0$} \Comment{\footnotesize\texttt{If previous price was not accepted}\normalsize}
    \If{$c_t \geq u_i$} \; price $c_t$ is accepted, set $x_t = 1$
    \Else \; price $c_t$ is rejected, set $x_t = 0$
    \EndIf
\ElsIf{$x_{t-1} = 1$} \Comment{\footnotesize\texttt{If previous price was accepted}\normalsize}
    \If{$c_t \geq \ell_i$}\; price $c_t$ is accepted, set $x_t = 1$
    \Else\; price $c_t$ is rejected, set $x_t = 0$
    \EndIf
\EndIf
\State \textnormal{update} $i = i + x_t$
\EndWhile
\end{algorithmic}
\end{algorithm}

This section presents and discusses the deferred experimental results for the \DTPRmax algorithm (pseudocode summarized in Algorithm \ref{alg:max}) in the carbon-aware temporal workload shifting case study.  We evaluate \DTPRmax against the same benchmark algorithms described in Section \ref{sec:expsetup}.

For the \textit{maximization metric}, we consider the \textit{percentage of carbon-free electricity} powering the grid.  At each time step $t$, the electricity supply has a carbon-free percentage $c_t$, i.e., if the job is being processed during time slot $t$ ($x_t = 1$), the electricity powering the data center's is $c_t\%$ \textit{carbon-free}, and the objective is to maximize this percentage over all $k$ slots of the active running of the workload.

In these maximization experiments, the switching-cost-agnostic $k$-max-search algorithm chooses to run the $i$th hour of the job during the first time slot where the carbon-free supply is at least $\Phi_i$.  Similarly, the constant threshold algorithm chooses to run the job whenever the carbon-free supply is at least $\sqrt{UL}$.  We set $L$ and $U$ to be the minimum and maximum carbon-free supply percentages over the entire trace being studied.

\begin{figure*}[t]
    \begin{center}
    \minipage{0.7\textwidth}
    \includegraphics[width=\linewidth]{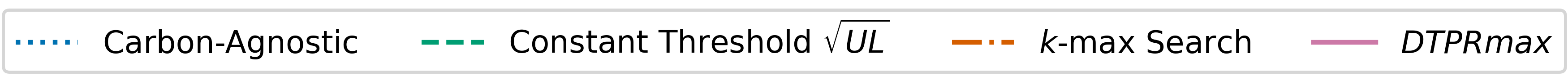}
    \endminipage\hfill\\
	\minipage{0.24\textwidth}
	\includegraphics[width=\linewidth]{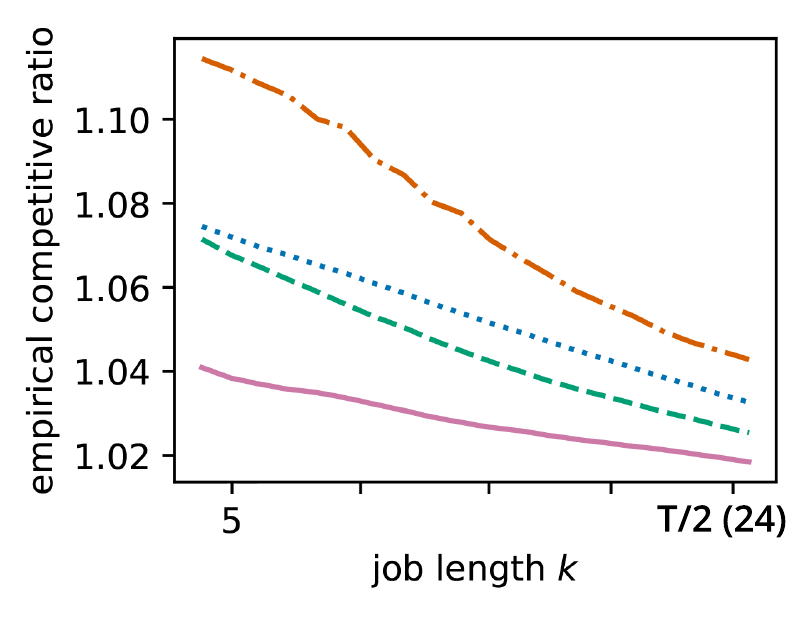}\vspace{-1em}
    \caption*{(a) Changing $k$}
	\endminipage\hfill
	\minipage{0.24\textwidth}
	\includegraphics[width=\linewidth]{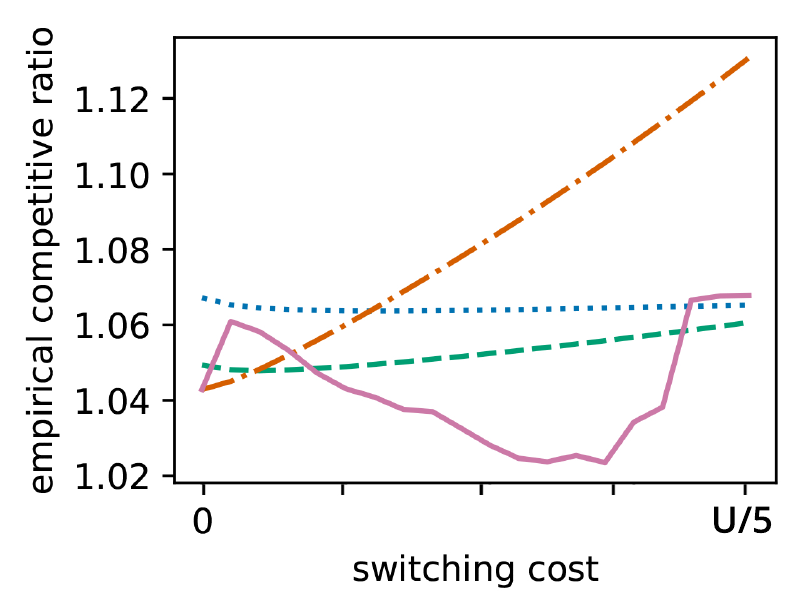}\vspace{-1em}
    \caption*{(b) Changing $\beta$}
	\endminipage\hfill
    \minipage{0.24\textwidth}
	\includegraphics[width=\linewidth]{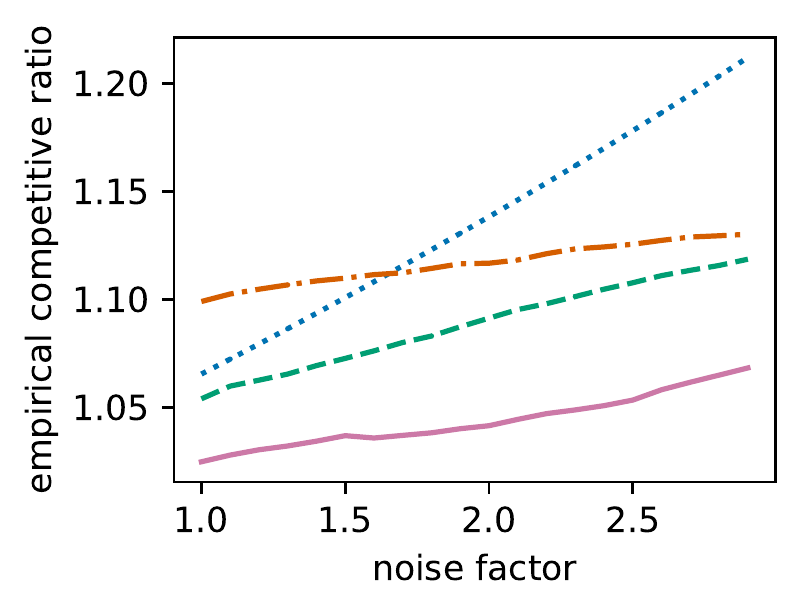}\vspace{-1em}
    \caption*{(c) Changing volatility}
	\endminipage\hfill
    \minipage{0.24\textwidth}
	\includegraphics[width=\linewidth]{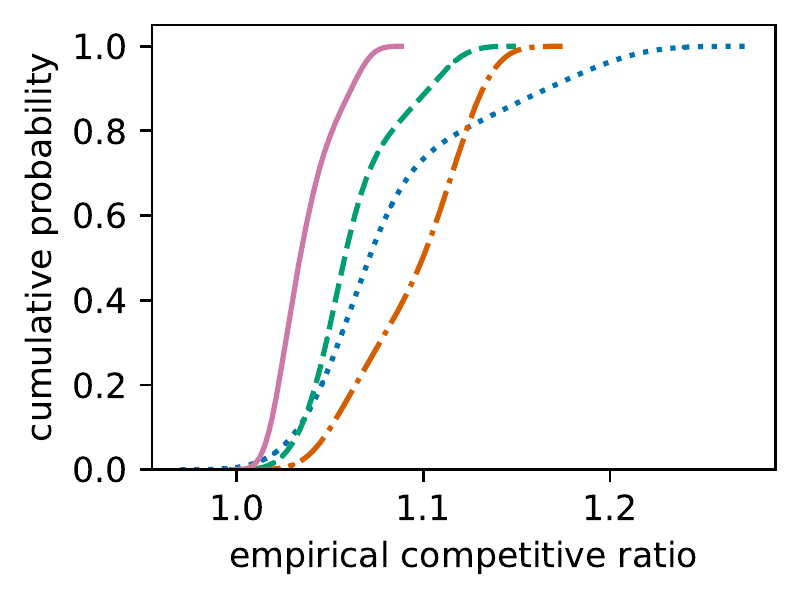}\vspace{-1em}
    \caption*{(d) CDF} %
	\endminipage\hfill
    \caption{Maximization experiments on Ontario, Canada carbon trace, with $\theta \approx 1.51$ and $T=48$.\\  (a): Changing job length $k$ w.r.t. time horizon $T$ ($x$-axis), vs. competitive ratio  (b): Changing switching cost $\beta$ w.r.t. $U$ ($x$-axis), vs. competitive ratio  (c): Different volatility levels w.r.t. $U$ ($x$-axis), vs. competitive ratio (d): Cumulative distribution function of competitive ratios}\label{fig:caTraceMAX}
    \end{center}
\end{figure*}

As in Section \ref{sec:expresults}, our focus is on the competitive ratio (lower competitive ratio is better).  We report the performance of all algorithms for each experiment setting, in each tested region.

In the first experiment, we test all algorithms for different job lengths $k$ in the range from $4$ hours to $T/2 (24)$.  The switching cost $\beta$ is non-zero and fixed, and no volatility is added to the carbon trace.  By testing different values for $k$, this experiment tests different ratios between the workload length and the slack provided to the algorithm.  In Figures \ref{fig:caTraceMAX}(a), \ref{fig:usTraceMAX}(a), and \ref{fig:nzTraceMAX}(a), we show that the observed average competitive ratio of \DTPRmax narrowly outperforms the benchmark algorithms for all values of $k$ in all regions, and it compares particularly favorably for \textit{short} job lengths.  Averaging over all regions and job lengths, the competitive ratio achieved by \DTPRmax is a $4.9$\% improvement on the carbon-agnostic method, a $8.4$\% improvement on the $k$-max search algorithm, and a $2.1$\% improvement on the constant threshold algorithm.%

In the second experiment, we test all algorithms for different switching costs $\beta$ in the range from $0$ to $U/5$.  The job length $k$ is set to $10$ hours, and no volatility is added to the carbon trace.  By testing different values for $\beta$, this experiment tests how an increasing switching cost impacts the performance of \DTPRmax with respect to other algorithms which do not explicitly consider the switching cost.  In Figures \ref{fig:caTraceMAX}(b), \ref{fig:usTraceMAX}(b), and \ref{fig:nzTraceMAX}(b), we show that the average competitive ratio of \DTPRmax notably outperforms the other algorithms for a wide range of $\beta$ values in all regions.  Unsurprisingly, the carbon-agnostic technique (which only incurs a switching cost of $2\beta$) is more competitive as $\beta$ grows.  The $k$-max search algorithm performs noticeably worse as $\beta$ grows. While the constant threshold algorithm has relatively consistent performance, the $k$-max search algorithm performs noticeably worse as $\beta$ grows.  Averaging over all regions and switching cost values, the competitive ratio achieved by \DTPRmax is a $2.5$\% improvement on the carbon-agnostic method, a $6.4$\% improvement on the $k$-max search algorithm, and a $0.1$\% improvement on the constant threshold algorithm. %

\begin{figure*}[t]
    \begin{center}
    \minipage{0.7\textwidth}
    \includegraphics[width=\linewidth]{img/max/legend.png}
    \endminipage\hfill\\
	\minipage{0.24\textwidth}
	\includegraphics[width=\linewidth]{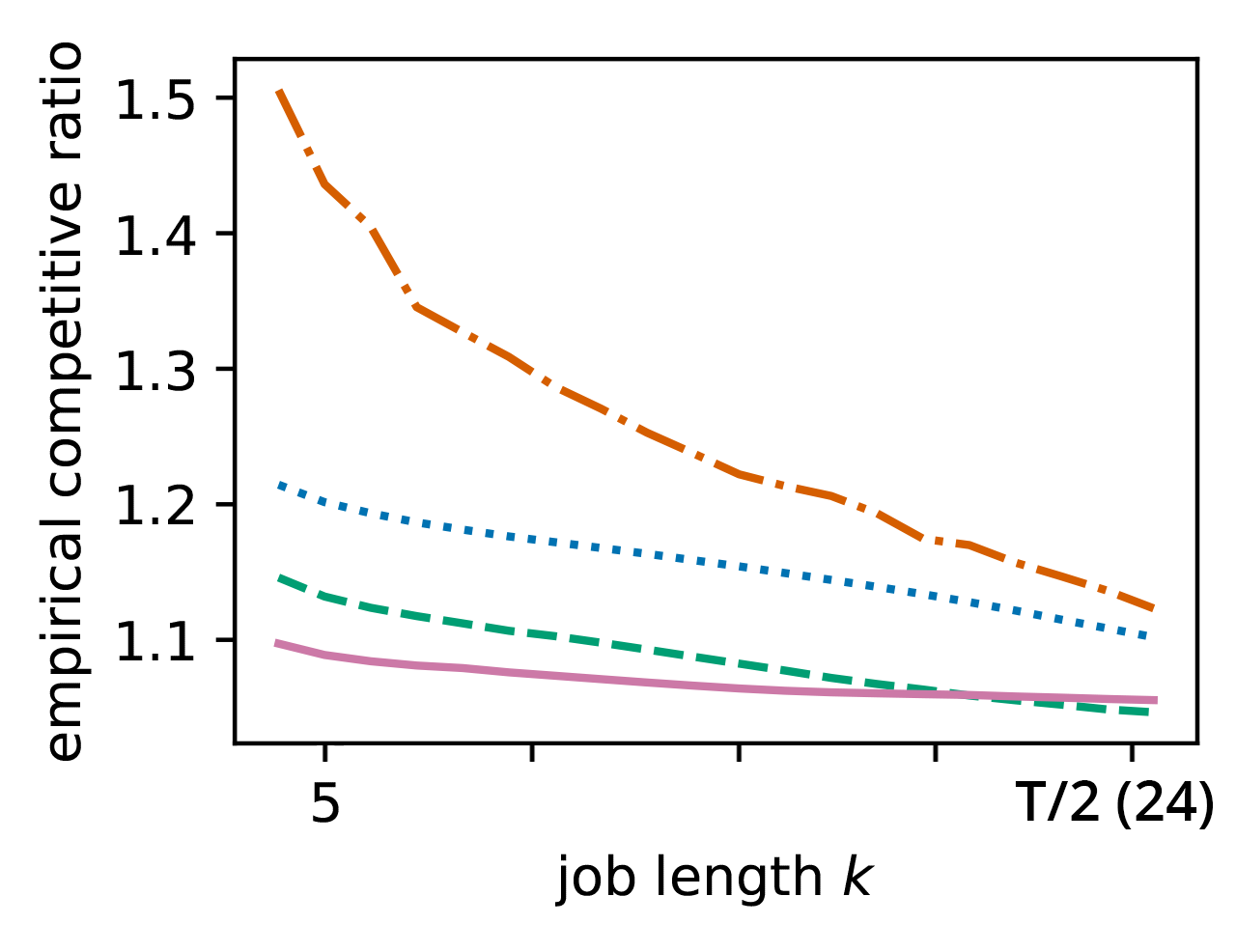}\vspace{-1em}
    \caption*{(a) Changing $k$}
	\endminipage\hfill
	\minipage{0.24\textwidth}
	\includegraphics[width=\linewidth]{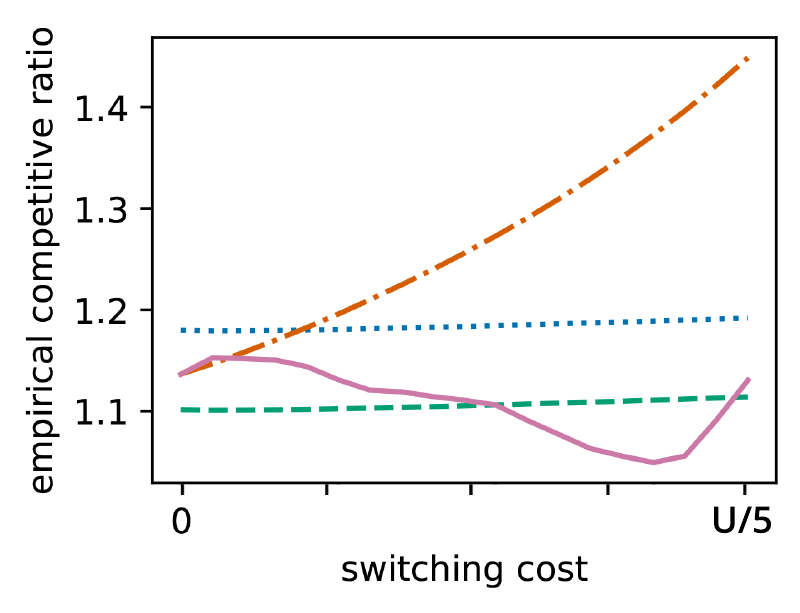}\vspace{-1em}
    \caption*{(b) Changing $\beta$}
	\endminipage\hfill
    \minipage{0.24\textwidth}
	\includegraphics[width=\linewidth]{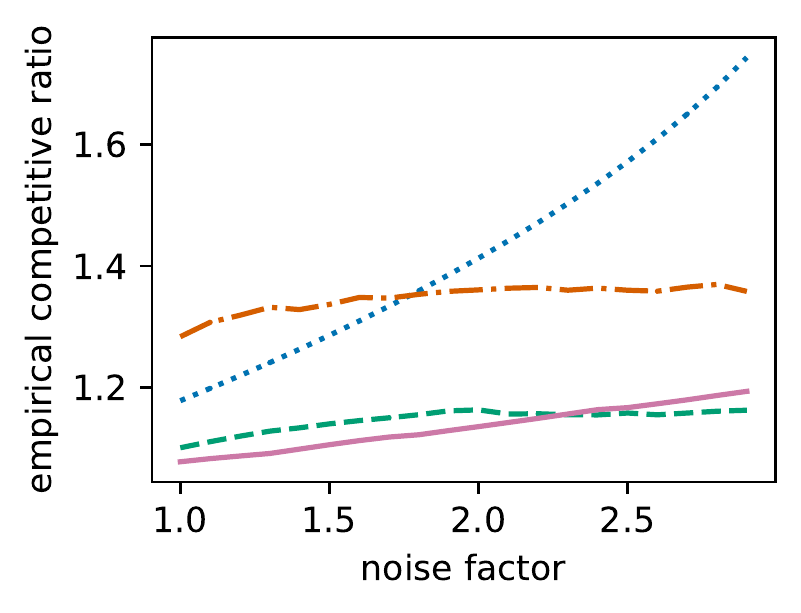}\vspace{-1em}
    \caption*{(c) Changing volatility}
	\endminipage\hfill
    \minipage{0.24\textwidth}
	\includegraphics[width=\linewidth]{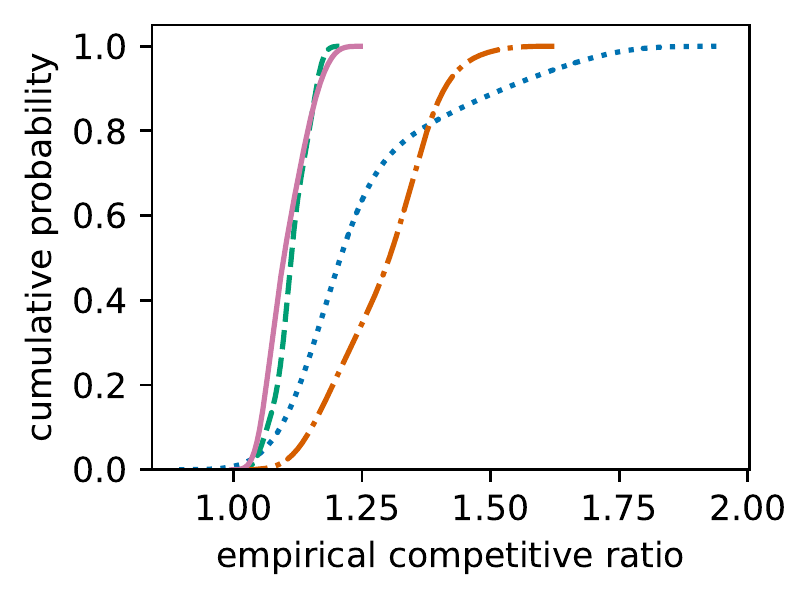}\vspace{-1em}
    \caption*{(d) CDF} %
	\endminipage\hfill
    \caption{Maximization experiments on U.S. Pacific Northwest carbon trace, with $\theta \approx 5.24$ and $T=48$.\\  (a): Changing job length $k$ w.r.t. time horizon $T$ ($x$-axis), vs. competitive ratio  (b): Changing switching cost $\beta$ w.r.t. $U$ ($x$-axis), vs. competitive ratio  (c): Different volatility levels w.r.t. $U$ ($x$-axis), vs. competitive ratio (d): Cumulative distribution function of competitive ratios}\label{fig:usTraceMAX}
    \end{center}
\end{figure*}

\begin{figure*}[t]
    \begin{center}
    \minipage{0.7\textwidth}
    \includegraphics[width=\linewidth]{img/max/legend.png}
    \endminipage\hfill\\
	\minipage{0.24\textwidth}
	\includegraphics[width=\linewidth]{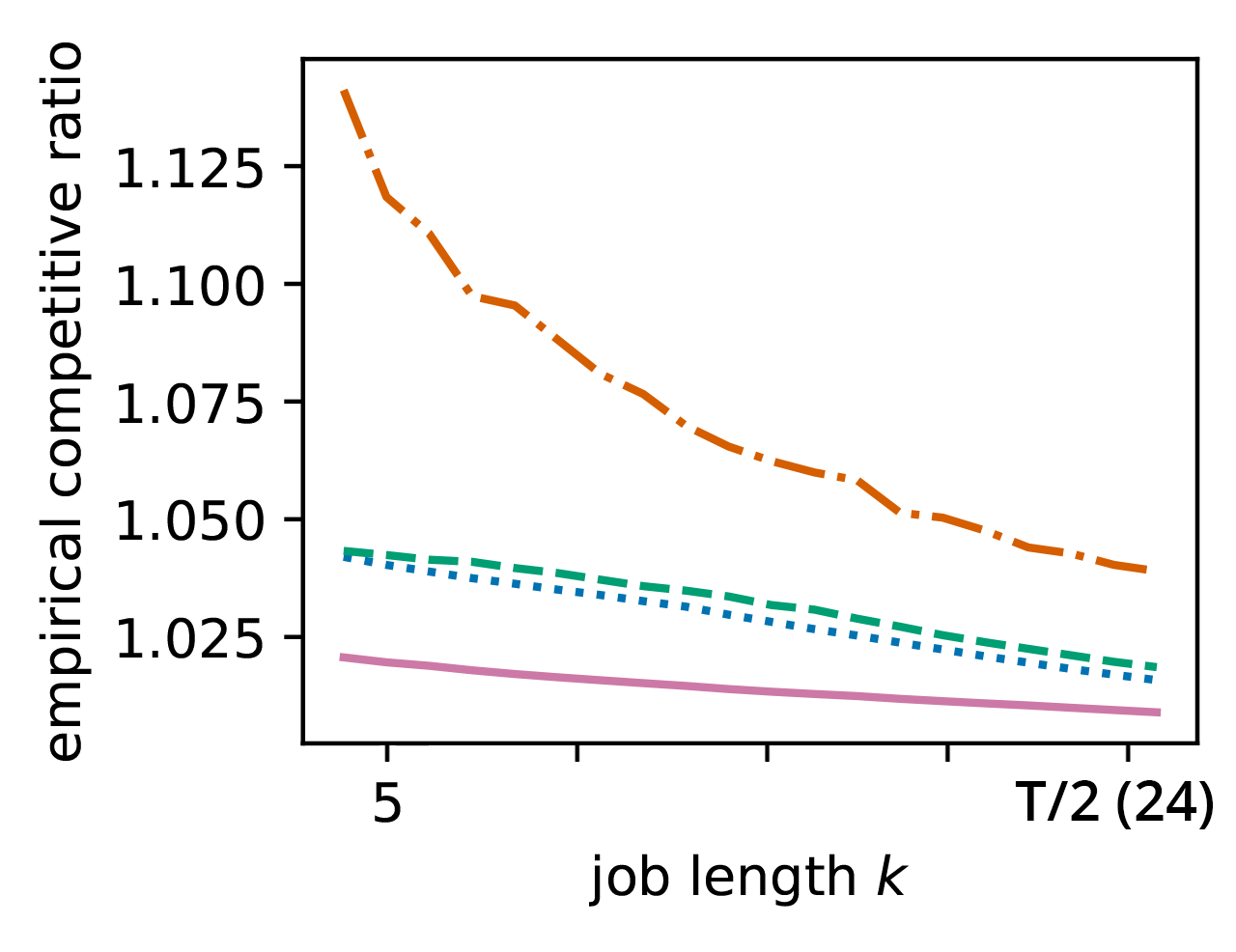}\vspace{-1em}
    \caption*{(a) Changing $k$}
	\endminipage\hfill
	\minipage{0.24\textwidth}
	\includegraphics[width=\linewidth]{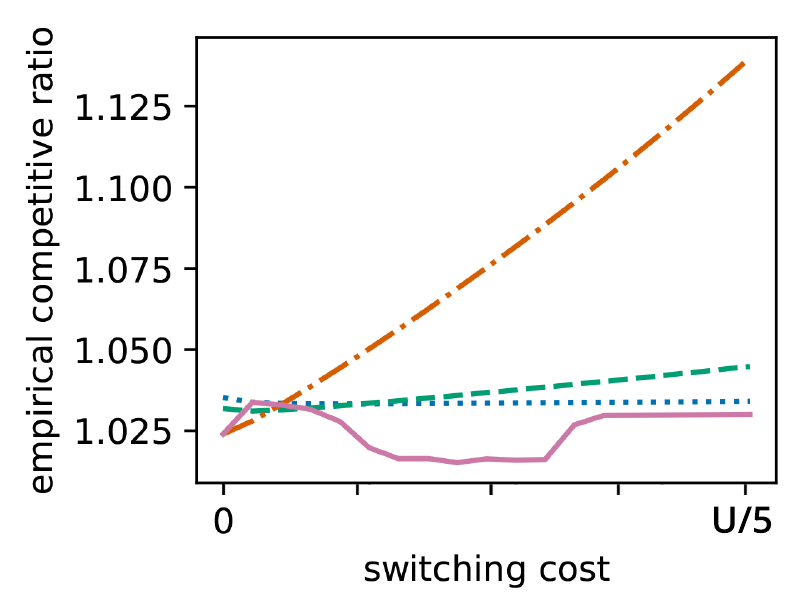}\vspace{-1em}
    \caption*{(b) Changing $\beta$}
	\endminipage\hfill
    \minipage{0.24\textwidth}
	\includegraphics[width=\linewidth]{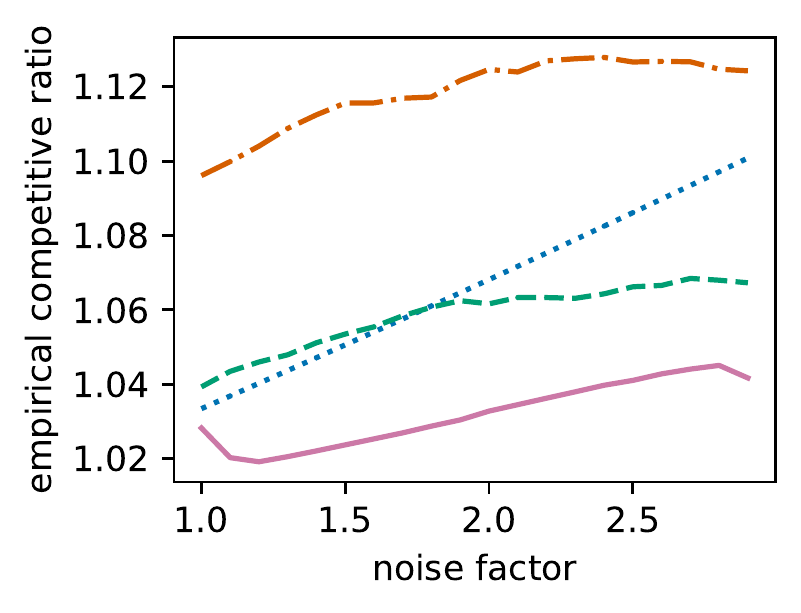}\vspace{-1em}
    \caption*{(c) Changing volatility}
	\endminipage\hfill
    \minipage{0.24\textwidth}
	\includegraphics[width=\linewidth]{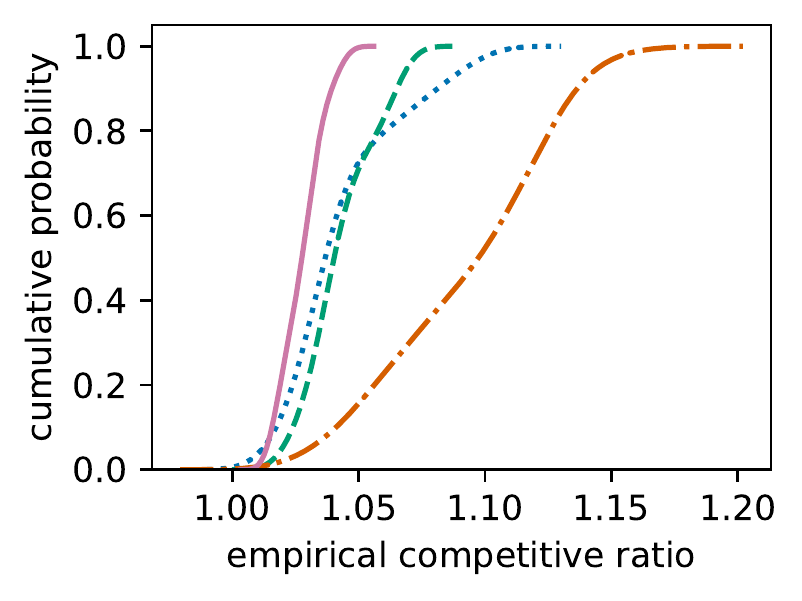}\vspace{-1em}
    \caption*{(d) CDF} %
	\endminipage\hfill
    \caption{Maximization experiments on New Zealand carbon trace, with $\theta \approx 1.35$ and $T=48$.\\  (a): Changing job length $k$ w.r.t. time horizon $T$ ($x$-axis), vs. competitive ratio  (b): Changing switching cost $\beta$ w.r.t. $U$ ($x$-axis), vs. competitive ratio  (c): Different volatility levels w.r.t. $U$ ($x$-axis), vs. competitive ratio (d): Cumulative distribution function of competitive ratios} \label{fig:nzTraceMAX}
    \end{center}
\end{figure*}

In the final experiment, we test all algorithms on sequences with different volatility.  The job length $k$ and switching cost $\beta$ are both fixed.  We add volatility by setting a \textit{noise factor} from the range $1.0$ to $3.0$.   By testing different values for this volatility, this experiment tests how each algorithm handles larger fluctuations in the carbon intensity of consecutive time steps.  In Figures \ref{fig:caTraceMAX}(c), \ref{fig:usTraceMAX}(c), and \ref{fig:nzTraceMAX}(c), we show that the observed average competitive ratio of \DTPRmax outperforms the other algorithms for most noise factors in all regions, with a slight degradation in the Pacific Northwest region.  Intuitively, higher volatility values cause the online algorithms to perform worse in general.  Averaging over all regions and noise factors, the competitive ratio achieved by \DTPRmax is a $13.0$\% improvement on the carbon-agnostic method, a $11.2$\% improvement on the $k$-max search algorithm, and a $2.1$\% improvement on the constant threshold algorithm.

By averaging over all experiments for a given region, we obtain the cumulative distribution function plot for each algorithm's competitive ratio in Figures \ref{fig:caTraceMAX}(d), \ref{fig:usTraceMAX}(d), and \ref{fig:nzTraceMAX}(d).  Compared to the carbon-agnostic, constant threshold, and $k$-max search algorithms, \DTPRmax generally exhibits a lower average empirical competitive ratio over the tested regions.  Notably, all of the algorithms are nearly 1-competitive in our experiments.  Compared to our minimization experiments, \DTPRmax outperforms the baseline algorithms by a smaller margin.  Across \textit{all regions} at the 95th percentile, \DTPRmax achieves a worst-case empirical competitive ratio of $1.08$.  This represents a $16.1$\% improvement over the \textit{carbon-agnostic} algorithm, and improvements of $11.4$\% and $2.19$\% over the $k$-max search and constant threshold \textit{switching-cost-agnostic} algorithms, respectively.

We conjecture that one dynamic contributing to this is the relatively low values of $\theta$ observed for the carbon-free supply percentage in these real-world carbon traces.

\section{Competitive Analysis of \DTPRmax: Proof of Theorem \ref{thm:compPRmax}} \label{app:compPRmax}

Here we prove the \DTPRmax results presented in Theorem~\ref{thm:compPRmax} and Corollary~\ref{cor:max}.
\begin{proof}[Proof of Theorem \ref{thm:compPRmax}] 
For $0 \leq j \leq k$, let $\mathcal{S}_j \subseteq \mathcal{S}$ be the sets of \OPRmax price sequences for which \DTPRmax accepts exactly $j$ prices (excluding the $k-j$ prices it is forced to accept at the end of the sequence).  Then all of the possible price sequences for \OPRmax are represented by $\mathcal{S} = \bigcup_{j=0}^k \mathcal{S}_j$.  By definition, $u_{k+1} = U$.  Let $\epsilon > 0$ be fixed, and define the following two price sequences $\sigma_j$ and $\rho_j$:

$$\forall 0 \leq j \leq k : \sigma_j = u_1, \ell_2, \dots, \ell_j, L, \underbrace{u_{j+1} - \epsilon, \dots, u_{j+1} - \epsilon}_{k}, \underbrace{L, L, \dots, L}_{k}.$$
$$\forall 0 \leq j \leq k : \rho_j = u_1, L, u_2, L, \dots, L, u_j, L, \underbrace{u_{j+1} - \epsilon, \dots, u_{j+1} - \epsilon}_{k}, \underbrace{L, L, \dots, L}_{k}.$$
We have two special cases for $j=0$ and $j=1$.  For $j=0$, we have that $\sigma_0 = \rho_0$, and this sequence simply consists of $u_{1} - \epsilon$ repeated $k$ times, followed by $L$ repeated $k$ times.  For $j = 1$, we also have that $\sigma_1 = \rho_1$, and this sequence consists of one price with value $u_1$ and one price with value $L$, followed by $u_{2} - \epsilon$ repeated $k$ times and $L$ repeated $k$ times.

\smallskip
Observe that as $\epsilon \rightarrow 0$, $\sigma_j$ and $\rho_j$ are sequences yielding the worst-case ratios in $\mathcal{S}_j$, as \DTPRmax~is forced to accept $(k-j)$ worst-case $L$ values at the end of the sequence, and each accepted value is exactly equal to the corresponding threshold.

$\sigma_j$ and $\rho_j$ also represent two extreme possibilities for the switching cost.  In $\sigma_j$, \DTPRmax~only switches twice, but it mostly accepts values $\ell_i$.  In $\rho_j$, \DTPRmax~must switch $j + 1$ times because there are many intermediate $L$ values, but it only accepts values which are at least $u_i$.  

Observe that $\texttt{OPT}(\sigma_j)/\DTPRmax(\sigma_j) = \texttt{OPT}(\rho_j)/\DTPRmax(\rho_j)$.  First, the optimal solution for both sequences is exactly the same: $k c_{\max}(\sigma_j) - 2\beta = k c_{\max}(\rho_j) - 2\beta$.\\
For any sequence $s$ in $\mathcal{S}_j$, we also know that $c_{\max}(s) < u_{j+1}$, so $\texttt{OPT}(\rho_j) = \texttt{OPT}(\sigma_j) \leq k u_{j+1} - 2\beta$.

By definition of the threshold families $\{u_i\}_{i \in [1, k]}$ and $\{\ell_i\}_{i \in [1, k]}$, we know that\\ $\sum_{i=1}^j~u_i~-~j2\beta~=~\sum_{i=1}^j \ell_i$ for any value $j \geq 2$:
\begin{align*}
\DTPRmax(\rho_j) = \left( u_1 + \sum_{i=2}^j \ell_i + (k-j)L - 4\beta \right) = \left( \sum_{i=1}^j u_i + (k-j)L - (j+1)2\beta \right) = \DTPRmax(\sigma_j).
\end{align*}
Note that whenever $j < 2$, we have that $\sigma_0 = \rho_0$, and $\sigma_1 = \rho_1$.  Thus, $\DTPRmin(\rho_j) = \DTPRmin(\sigma_j)$ holds for any value of $j$.

By definition of $u_1$, we simplify $u_1 + \sum_{i=2}^j \ell_i + (k-j)L - 4\beta$ to $\sum_{i=1}^j \ell_i + (k-j)L - 2\beta$.  For any sequence $s \in \mathcal{S}_j$, we have the following:
\begin{align}
\frac{\texttt{OPT}(s)}{\DTPRmax(s)} \leq \frac{\texttt{OPT}(\sigma_j)}{\DTPRmax(\sigma_j)} = \frac{\texttt{OPT}(\rho_j)}{\DTPRmax(\rho_j)} \leq \frac{ku_{j+1} - 2\beta}{\sum_{i=1}^j \ell_i + (k-j)L - 2\beta}.
\end{align}
\begin{lemma} \label{lem:intermedStepMax}
For any $j \in [0, k]$, by definition of $\{u_i\}_{i \in [1, k]}$ and $\{\ell_i\}_{i \in [1, k]}$,
\begin{align*}
\omega \cdot \left( \sum_{i=1}^j \ell_i + (k-j)L - 2\beta \right) \leq k u_{j+1} - 2\beta. \quad \quad \textit{The proof is deferred to Appendix \ref{appendix:proofs}.}
\end{align*}
\end{lemma}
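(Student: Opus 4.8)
Let me sketch how I would establish Lemma~\ref{lem:intermedStepMax}.

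The plan is to prove the lemma (mirroring the treatment of Lemma~\ref{lem:intermedStepMin}) by direct substitution of the closed-form thresholds from Equation~\eqref{eq:maxthres}, reducing the claimed inequality to an elementary identity in the problem parameters. Set $r := 1 + \omega/k$ and $a := \frac{\omega}{k} - \frac{1}{k} + 1$, so that each threshold has the affine-plus-geometric form $\ell_i = L + \bigl(L(\omega-1) - 2\beta a\bigr) r^{i-1}$ and $u_i = \ell_i + 2\beta$. Observe that $\sum_{i=1}^{j}\ell_i + (k-j)L - 2\beta$ is exactly $\DTPRmax(\sigma_j)$ from the proof of Theorem~\ref{thm:compPRmax}, so the lemma is really the assertion that the thresholds have been balanced correctly; I expect it to hold with equality.

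First I would evaluate $\sum_{i=1}^{j}\ell_i = jL + \bigl(L(\omega-1) - 2\beta a\bigr)\sum_{i=1}^{j} r^{i-1}$; since $r - 1 = \omega/k$, the geometric sum is $\frac{r^j - 1}{r - 1} = \frac{k}{\omega}(r^j - 1)$, so $\sum_{i=1}^{j}\ell_i + (k-j)L - 2\beta = kL + \frac{k}{\omega}\bigl(L(\omega-1) - 2\beta a\bigr)(r^j - 1) - 2\beta$. Multiplying by $\omega$ gives the left-hand side as $\omega k L + k\bigl(L(\omega-1) - 2\beta a\bigr)(r^j - 1) - 2\omega\beta$, while the right-hand side is $k u_{j+1} - 2\beta = kL + k\bigl(L(\omega-1) - 2\beta a\bigr) r^{j} + 2(k-1)\beta$. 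Subtracting, the $r^j$ terms cancel and the difference (RHS minus LHS) collapses to $2\beta\bigl(\omega + k - 1 - ka\bigr)$. Since $ka = \omega + k - 1$ by the definition of $a$, this difference is exactly zero, so the inequality holds (with equality) for every $j \in [0, k]$ and, in fact, for any value of $\omega$.

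The boundary cases $j = 0$ and $j = 1$ need no separate argument: the same computation applies with $\sum_{i=1}^{0}\ell_i = 0$ and $r^0 = 1$, and one may sanity-check $j = 0$ against $k u_1 - 2\beta = \omega(kL - 2\beta)$ directly. The only insight beyond bookkeeping is anticipating the cancellation of the $r^j$ term — unsurprising, since the balancing-rule construction of Section~\ref{sec:dtpr} was designed precisely so that $\omega\cdot\DTPRmax(\sigma_j) = k u_{j+1} - 2\beta$; combined with $\texttt{OPT}(s) \le k u_{j+1} - 2\beta$ for $s \in \mathcal{S}_j$ (from the proof of Theorem~\ref{thm:compPRmax}), this closes the $\omega$-competitiveness bound. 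The main obstacle is purely mechanical: carefully tracking the $\beta$-dependent terms through the geometric summation. I would also flag the mild standing hypotheses $\omega > 0$ (so that $r > 1$ and the geometric-sum formula is valid) and $\beta > 0$ (used when dividing out or interpreting the collapsed difference), both guaranteed under the assumption $\beta \in (0, kL/2)$.
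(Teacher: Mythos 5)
Your proposal is correct and follows essentially the same route as the paper: both proofs substitute the closed-form thresholds from Definition~\ref{def:maxthres} (together with $u_{j+1} = \ell_{j+1} + 2\beta$) and verify by direct algebra that the claimed inequality in fact holds with equality, your version simply carrying out the geometric summation explicitly where the paper rearranges to isolate $\ell_{j+1}$ and asserts the simplification. The cancellation you identify via $ka = \omega + k - 1$ checks out, so no gap.
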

\noindent For $\epsilon \rightarrow 0$, the competitive ratio $\texttt{OPT}/\DTPRmax$ is exactly $\omega$:
\begin{align*}
\forall 0 \leq j \leq k: \;\;\; \frac{\texttt{OPT}(\sigma_j)}{\DTPRmax(\sigma_j)} = \frac{ku_{j+1} - 2\beta}{\sum_{i=1}^j \ell_i + (k-j)L - 2\beta} = \omega.
\end{align*}
and thus for any sequence $s \in \mathcal{S}$,
\begin{align*}
\forall s \in \mathcal{S}: \;\;\; \frac{k c_{\max}(s) - 2\beta}{\DTPRmax(s)} \leq \omega.
\end{align*}
Since $\texttt{OPT}(s) \leq k c_{\max}(s) - 2\beta$ for any sequence $s$, this implies that $\DTPRmax$ is $\omega$-competitive.\end{proof}

\begin{proof}[Proof of Corollary~\ref{cor:max}]
For simplification purposes, let $\beta = bL/2$, where $b$ is a real constant on the interval $\left(0, k \right)$.  To show part \textbf{(a)} for \texttt{REGIME-1}, with fixed $k \geq 1$, observe that for sufficiently large $\omega$, we have the following:
\smallskip
\begin{align*}
\theta - b - 1 = \left( \omega - 1 \right) \left( 1 + \frac{\omega}{k} \right)^k - \left(b - \frac{b}{k} + \frac{b\omega}{k} \right) \left( 1 + \frac{\omega}{k} \right)^k \approx (1 + o(1)) \left[ \omega \left( \frac{\omega}{k} \right)^k - b \left( \frac{\omega}{k} \right)^{k+1} - b \right].
\end{align*}
\smallskip
Let $\omega_{+} = \sqrt[k+1]{k^k \cdot \frac{k\theta}{k - b}}$.  \;\; Then, for sufficiently large $\omega$, we have the following:
\smallskip
\begin{align*}
(1 + o(1)) \left[ \omega_{+} \left( \frac{\omega_{+}}{k} \right)^k - b \left( \frac{\omega_{+}}{k} \right)^{k+1} - b \right] = (1 + o(1)) \frac{(k - b) (\theta)}{k - b} = (1 + o(1)) \left[\theta - b \right].
\end{align*}
\smallskip
Furthermore, let $\varepsilon > 0$ and set $\omega_{-} = (1 - \varepsilon) \sqrt[k+1]{k^k \cdot \frac{k\theta}{k - b}}$.\\  A similar calculation as above shows that for sufficiently large $\theta$ we have:
\begin{align*}
\left( \omega_{-} - 1 \right) \left( 1 + \frac{\omega_{-}}{k} \right)^k - \left(b - \frac{b}{k} + \frac{a\omega_{-}}{k} \right) \left( 1 + \frac{\omega_{-}}{k} \right)^k \geq ( 1 - 3k\varepsilon) \left[\theta - b \right].
\end{align*}
\smallskip
Thus, $\omega = O \left( \sqrt[k+1]{k^k \frac{k\theta}{k - b}} \right)$ satisfies (\ref{eq:omega}) for sufficiently large $\omega$, fixed $k \geq 1$, and\\ $\beta = \frac{bL}{2} \text{ s.t. } b \in (1, k)$.  
\smallskip

To show part \textbf{(b)} for \texttt{REGIME-2}, observe that the right-hand side of (\ref{eq:omega}) can be approximated as $\left( 1 + \frac{\omega}{k} \right)^k~\approx~e^{\omega}$ when $k \rightarrow \infty$.  Then by taking limits on both sides, we obtain the following:
\begin{align*}
\frac{U - L - 2\beta}{L \left( \omega - 1 \right) - 2\beta\left(1 \right)} = e^{\omega}.
\end{align*}
Let $\beta = bL/2$ as outlined above.  We then obtain the following:
\begin{align*}
\frac{U - L - bL}{L \left( \omega - 1 \right) - bL} = \frac{\theta - 1 - b}{ \omega - 1 - b} = e^{\omega} \; \Longrightarrow \; \theta - 1 - b = \left( \omega - 1 - b\right) e^{\omega}.
\end{align*}
By definition of the Lambert $W$ function, solving this equation for $\omega$ obtains part (2).
\end{proof}

\section{Proofs of Lemmas~\ref{lem:intermedStepMin} and~\ref{lem:intermedStepMax}} \label{appendix:proofs}
In this section, we give the deferred proofs of Lemmas~\ref{lem:intermedStepMin} and~\ref{lem:intermedStepMax}, which are used in the proofs of Theorem~\ref{thm:compPRmin} and Theorem~\ref{thm:compPRmax}, respectively.

\begin{proof}[Proof of Lemma~\ref{lem:intermedStepMin}]
We show that the following holds for any $j \in [0, k]$, by Definition \ref{def:minthres}:
\begin{align*}
\sum_{i=1}^j u_i + (k-j)U + 2\beta &\leq \alpha \cdot (k\ell_{j+1} + 2\beta).
\end{align*}
\noindent First, note that $k\ell_{j+1} = k (u_{j+1} - 2\beta)$ for all $j \in [0, k]$, by Observation \ref{obs:thresholdDifference}.  This gives us the following:
\begin{align*}
\sum_{i=1}^j u_i + (k-j)U + 2\beta &\leq \alpha ku_{j+1} + \alpha 2\beta - \alpha k2\beta,\\
\sum_{i=1}^j u_i + (k-j)U + \left[ 2\beta - \alpha 2\beta + \alpha k2\beta \right] &\leq \alpha ku_{j+1},\\
\frac{(k-j)U}{\alpha k} + \frac{\sum_{i=1}^j u_i}{\alpha k} + \left[ \frac{2\beta}{\alpha k} - \frac{2\beta}{k} + 2\beta \right] &\leq u_{j+1}.
\end{align*}
\noindent By substituting Def. \ref{def:minthres} into $\sum_{i=1}^j u_i$, the above can be simplified exactly to the closed form for $u_{j+1}$:
\begin{align*}
\frac{U}{\alpha} - \frac{jU}{\alpha k} + \left( \frac{\sum_{i=1}^j u_i}{\alpha k} \right) &+ \left[ \frac{2\beta}{\alpha k} - \frac{2\beta}{k} + 2\beta \right] = u_{j+1},\\
\left[ U - \left( U - \frac{1}{\alpha} \right) \left( 1 + \frac{1}{\alpha k} \right)^{j} \right] &+ \left[ \left( \frac{2\beta}{\alpha k} - \frac{2\beta}{k} + 2\beta \right) \left( 1 + \frac{1}{\alpha k} \right)^{j} \right] = u_{j+1}.
\end{align*}
and the claim follows by the definition of $u_{j+1}$.
\end{proof}

\begin{proof}[Proof of Lemma~\ref{lem:intermedStepMax}]
We show that the following holds for any $j \in [0, k]$, by Definition \ref{def:maxthres}:
\begin{align*}
\omega \cdot \left( \sum_{i=1}^j \ell_i + (k-j)L - 2\beta \right) \leq k u_{j+1} - 2\beta.
\end{align*}
\noindent First, note that $k u_{j+1} = k (\ell_{j+1} + 2\beta)$ for all $j \in [0, k]$, by Observation \ref{obs:thresholdDifference}.  This gives us the following:
\begin{align*}
\sum_{i=1}^j \ell_i + (k-j)L - 2\beta &\leq \frac{k \ell_{j+1}}{\omega} - \frac{2\beta}{\omega} + \frac{k2\beta}{ \omega },\\
\sum_{i=1}^j \ell_i + (k-j)L - \left[ 2\beta - \frac{2\beta}{\omega} + \frac{k2\beta}{ \omega } \right] &\leq \frac{k \ell_{j+1}}{\omega},\\
\frac{\omega \left( \sum_{i=1}^j \ell_i \right) }{k} + \frac{\omega (k-j)L}{k} - \left[ \frac{\omega 2\beta}{k} - \frac{2\beta}{k} + 2\beta \right] &\leq \ell_{j+1}.\\ 
\end{align*}
\noindent By substituting Def. \ref{def:maxthres} into $\sum_{i=1}^j \ell_i$, the above can be simplified exactly to the closed form for $\ell_{j+1}$:
\begin{align*}
\omega L - \frac{\omega jL}{k} + \frac{\omega \left( \sum_{i=1}^j \ell_i \right) }{k} - \left[ \frac{\omega 2\beta}{k} - \frac{2\beta}{k} + 2\beta \right] &= \ell_{j+1},\\ 
\left[ L + (\omega L - L) \left(1 + \frac{\omega}{k} \right)^{j} \right] - \left[ \left( \frac{\omega 2\beta}{k} - \frac{2\beta}{k} + 2\beta \right) \left(1 + \frac{\omega}{k} \right)^{j} \right] &= \ell_{j+1}.
\end{align*}
and the claim follows by the definition of $\ell_{j+1}$.
\end{proof}

\section{Proofs of Lower Bound Results}
\label{app:lowerbound}
This section formally proves the lower bound results for both \OPRmin and \OPRmax, building on the proof sketch provided in Section \ref{sec:prooflowerbound}.

\subsection{Proof of Theorem \ref{thm:lowerboundmin} (\OPRmin Lower Bound)}
\label{app:lowerboundmin}

\begin{proof}[Proof of Theorem \ref{thm:lowerboundmin}]
Let $\texttt{ALG}$ be a deterministic online algorithm for \OPRmin, and suppose that the adversary uses the price sequence $\ell_1, \dots, \ell_k$, which is exactly the sequence defined by~(\ref{eq:minthres}).  $\ell_1$ is presented to $\texttt{ALG}$, at most $k$ times or until $\texttt{ALG}$ accepts it.  If $\texttt{ALG}$ never accepts $\ell_1$, the remainder of the sequence is all $U$, and $\texttt{ALG}$ achieves a competitive ratio of $\frac{kU + 2\beta}{k\ell_1 + 2\beta} = \alpha$, as defined in~\eqref{eq:balancemin}.

If $\texttt{ALG}$ accepts $\ell_1$, the next price presented is $U$, repeated at most $k$ times \textit{or until $\texttt{ALG}$ switches to reject $U$}.  After $\texttt{ALG}$ has switched, $\ell_2$ is presented to $\texttt{ALG}$, at most $k$ times or until $\texttt{ALG}$ accepts it.  Again, if $\texttt{ALG}$ never accepts $\ell_2$, the remainder of the sequence is all $U$, and $\texttt{ALG}$ achieves a competitive ratio of at least $\frac{\ell_1 + (k-1)U + 4\beta}{k\ell_2 + 2\beta} = \alpha$, as defined in~\eqref{eq:balancemin}.

As the sequence continues, whenever $\texttt{ALG}$ does not accept some $\ell_i$ after it is presented $k$ times, the adversary increases the price to $U$ for the remainder of the sequence.  Otherwise, if $\texttt{ALG}$ accepts $k$ prices before the end of the sequence, the adversary concludes by presenting $L$ at least $k$ times.

Observe that any $\texttt{ALG}$ which does not immediately reject the first $U$ presented to it after accepting some $\ell_i$ obtains a competitive ratio strictly worse than $\alpha$.  To illustrate this, suppose $\texttt{ALG}$ has just accepted $\ell_1$, incurring a cost of $\ell_1 + \beta$ so far.  The adversary begins to present $U$, and $\texttt{ALG}$ accepts $y \leq (k-1)$ of these $U$ prices before switching away.  If $y = (k-1)$, $\texttt{ALG}$ will accept $k$ prices before the end of the sequence and achieve a competitive ratio of $\frac{\ell_1 + (k-1)U + 2\beta}{kL + 2\beta} > \alpha$.  Otherwise, if $y < (k-1)$, the cost incurred by $\texttt{ALG}$ so far is at least $\ell_1 + 2\beta + yU$, while the cost incurred by $\texttt{ALG}$ if it had immediately switched away ($y = 0$) would be $\ell_1 + 2\beta$ -- since any price which might be accepted by $\texttt{ALG}$ in the future should be $\leq U$, the latter case strictly improves the competitive ratio of $\texttt{ALG}$.  

Assuming that $\texttt{ALG}$ does immediately reject any $U$ presented to it, and that $\texttt{ALG}$ accepts some prices before the end of the sequence, the competitive ratio attained by $\texttt{ALG}$ is at least $\frac{\sum_{i=1}^j \ell_i + (j+1)2\beta + (k-j)U}{k\ell_{j+1} + 2\beta} = \alpha$, as defined in~\eqref{eq:balancemin}.  

Similarly, if $\texttt{ALG}$ accepts $k$ prices before the end of the sequence, the competitive ratio attained by $\texttt{ALG}$ is at least $\frac{\sum_{i=1}^k \ell_i + k2\beta}{kL + 2\beta} = \alpha$, as defined in~\eqref{eq:balancemin}.

Since any arbitrary deterministic online algorithm $\texttt{ALG}$ cannot achieve a competitive ratio better than $\alpha$ playing against this adaptive adversary, our proposed algorithm \DTPRmin~is optimal.
\end{proof}

\subsection{Proof of Theorem \ref{thm:lowerboundmax} (\OPRmax Lower Bound)}
\label{app:lowerboundmax}
\begin{proof}[Proof of Theorem \ref{thm:lowerboundmax}]
Let $\texttt{ALG}$ be a deterministic online algorithm for \OPRmax, and suppose that the adversary uses the price sequence $u_1, \dots, u_k$, which is exactly the sequence defined by~(\ref{eq:maxthres}).  $u_1$ is presented to $\texttt{ALG}$, at most $k$ times or until $\texttt{ALG}$ accepts it.  If $\texttt{ALG}$ never accepts $u_1$, the remainder of the sequence is all $L$, and $\texttt{ALG}$ achieves a competitive ratio of $\frac{ku_1 - 2\beta}{kL - 2\beta} = \omega$, as defined in~\eqref{eq:balancemax}.

If $\texttt{ALG}$ accepts $u_1$, the next price presented is $L$, repeated at most $k$ times \textit{or until $\texttt{ALG}$ switches to reject $L$}.  After $\texttt{ALG}$ has switched, $u_2$ is presented to $\texttt{ALG}$, at most $k$ times or until $\texttt{ALG}$ accepts it.  Again, if $\texttt{ALG}$ never accepts $u_2$, the remainder of the sequence is all $L$, and $\texttt{ALG}$ achieves a competitive ratio of at least $\frac{ku_2 - 2\beta}{u_1 + (k-1)L - 4\beta} = \omega$, as defined in~\eqref{eq:balancemax}.

As the sequence continues, whenever $\texttt{ALG}$ does not accept some $u_i$ after it is presented $k$ times, the adversary drops the price to $L$ for the remainder of the sequence.  Otherwise, if $\texttt{ALG}$ accepts $k$ prices before the end of the sequence, the adversary concludes by presenting $U$ at least $k$ times.

Observe that any $\texttt{ALG}$ which does not immediately reject the first $L$ presented to it after accepting some $u_i$ obtains a competitive ratio strictly worse than $\omega$.  To illustrate this, suppose $\texttt{ALG}$ has just accepted $u_1$, achieving a profit of $u_1 - \beta$ so far.  The adversary begins to present $L$ prices, and $\texttt{ALG}$ accepts $y \leq (k-1)$ of these $L$ prices before switching away.  If $y = (k-1)$, $\texttt{ALG}$ will accept $k$ prices before the end of the sequence and achieve a competitive ratio of $\frac{kU - 2\beta}{u_1 + (k-1)L - 2\beta} > \omega$.  Otherwise, if $y < (k-1)$, the profit achieved by $\texttt{ALG}$ so far is at most $u_1 - 2\beta + yL$, while the profit achieved by $\texttt{ALG}$ if it had immediately switched away ($y = 0$) would be $u_1 - 2\beta$ -- since any price which might be accepted by $\texttt{ALG}$ in the future should be $\geq L$, the latter case strictly improves the competitive ratio of $\texttt{ALG}$.  

Assuming that $\texttt{ALG}$ does immediately reject any $L$ presented to it, and that $\texttt{ALG}$ accepts some prices before the end of the sequence, the competitive ratio attained by $\texttt{ALG}$ is at least $\frac{ku_{j+1} - 2\beta}{\sum_{i=1}^j u_i - (j+1)2\beta + (k-j)L} = \omega$, as defined in (\ref{eq:balancemax}).   

Similarly, if $\texttt{ALG}$ accepts $k$ prices before the end of the sequence, the competitive ratio attained by $\texttt{ALG}$ is at least $\frac{kU - 2\beta}{\sum_{i=1}^k u_i - k2\beta} = \omega$, as defined in (\ref{eq:balancemax}).

Since any arbitrary deterministic online algorithm $\texttt{ALG}$ cannot achieve a competitive ratio better than $\omega$ playing against this adaptive adversary, our proposed algorithm \DTPRmax~is optimal.
\end{proof}

\end{document}